\def\naf{{ not \:}}
\newcommand{\argmax}{\operatornamewithlimits{argmax}}
\newcommand{\memo}[1]{
  \ifthenelse {\boolean{includeMemo}}{\medskip\noindent\fbox{\begin{minipage}[b]{\dimexpr\linewidth-1em}#1\end{minipage}}\medskip\newline}
}
\newtheorem{definition}{Definition} % [section]
\newtheorem{example}{Example} % [section]
\newtheorem{lemma}{Lemma}
\newtheorem{proposition}{Proposition}
\newtheorem{property}{Property}
\begin{document}
\bibliographystyle{acmtrans}

\long\def\comment#1{}

\title{Solving Distributed Constraint Optimization Problems Using Logic Programming}

\author[Tiep Le, Tran Cao Son, Enrico Pontelli, and William Yeoh]
{Tiep Le, Tran Cao Son, Enrico Pontelli, William Yeoh \\
Computer Science Department\\
New Mexico State University\\
Las Cruces, NM, 88001, USA\\ 
E-mail: \{tile, tson, epontell, wyeoh\}@cs.nmsu.edu
}

\pagerange{\pageref{firstpage}--\pageref{lastpage}}
\volume{\textbf{10} (3):}
\jdate{March 2002}
\setcounter{page}{1}
\pubyear{2002}

\maketitle

\label{firstpage}

\begin{abstract}
 This paper explores the use of \emph{Answer Set Programming (ASP)} in solving \emph{Distributed Constraint Optimization Problems (DCOPs).} The paper provides the following novel contributions: {\bf (1)}
 It shows how one can formulate DCOPs as logic programs; 
 {\bf (2)} It introduces ASP-DPOP, the first DCOP algorithm that is based on logic programming; 
 {\bf (3)} It experimentally shows that ASP-DPOP can be up to two orders of magnitude faster than DPOP (its imperative programming counterpart) as well as solve some problems that DPOP fails to solve, due to memory limitations; and 
 {\bf (4)} It demonstrates the applicability of ASP in a wide array of multi-agent problems currently modeled as DCOPs.\footnote{This article extends our previous conference paper~\cite{le:15} in the following manner: (1) It provides a more thorough description of the ASP-DPOP algorithm; (2) It elaborates on the algorithm's theoretical properties with complete proofs; and (3) It includes additional experimental results.} 
 
\noindent Under consideration in Theory and Practice of Logic Programming (TPLP).

\end{abstract}
\begin{keywords}
 DCOP; DPOP; Logic Programming; ASP
\end{keywords}

\section{Introduction}

\emph{Distributed Constraint Optimization Problems (DCOPs)} are optimization
problems where agents need to coordinate the assignment of values to their ``local'' variables
to maximize the overall sum of resulting constraint utilities~\cite{modi:05,petcu:05,mailler:04,yeoh:12}.
The process is subject to limitations on the communication capabilities of the agents; in 
particular,  each agent can only
exchange information with neighboring agents within a given topology.
DCOPs are well-suited for modeling multi-agent coordination and resource allocation problems, where the primary interactions are between local subsets of agents. Researchers have used DCOPs to model various problems, such as the distributed scheduling of meetings~\cite{maheswaran:04a,zivan:14},  distributed allocation of targets to sensors in a network~\cite{farinelli:08},  distributed allocation of resources in disaster evacuation scenarios~\cite{lass:08b}, the distributed management of power distribution networks~\cite{kumar:09,jain:12}, the distributed generation of coalition structures~\cite{ueda:10} and the distributed coordination of logistics operations~\cite{leaute:11}.

The field has matured considerably over the past decade, since the seminal ADOPT paper~\cite{modi:05}, as researchers continue to develop more sophisticated  solving algorithms. The majority of the DCOP resolution
algorithms can be classified in one of three classes: {\bf (1)}~\emph{Search-based algorithms,} 
like ADOPT~\cite{modi:05} and its variants~\cite{yeoh:09a,yeoh:10,gutierrez:11,gutierrez:13}, AFB~\cite{gershman:09}, and MGM~\cite{maheswaran:04b}, where the agents enumerate  combinations of value assignments in a decentralized manner; 
{\bf (2)}~\emph{Inference-based algorithms,} like DPOP~\cite{petcu:05} and its variants~\cite{petcu:05b,petcu:07,petcu:07b,petcu:08}, max-sum~\cite{farinelli:08}, and Action GDL~\cite{vinyals:11b}, where the agents use dynamic programming techniques to propagate aggregated information to other agents; and 
{\bf (3)}~\emph{Sampling-based algorithms,} like DUCT~\cite{ottens:12} and D-Gibbs~\cite{nguyen:13,fioretto:14}, where the agents sample the search space in a decentralized manner.  

The existing algorithms have been designed and developed almost exclusively using  \emph{imperative programming} techniques, where the algorithms define a control flow, that is, a sequence of commands to be executed. In addition, the local solver employed by each agent is an ``ad-hoc'' implementation. In this paper, we are interested in investigating the benefits of using \emph{declarative programming} techniques to solve DCOPs, along with the use of a general constraint solver, used as a black box, as each agent's local constraint solver. Specifically, we propose an integration of \emph{Distributed Pseudo-tree Optimization Procedure (DPOP)}~\cite{petcu:05}, a popular DCOP algorithm, with \emph{Answer Set Programming (ASP)}~\cite{Niemela99,MarekT99} as the local constraint solver of each agent. 

This paper provides the first step in bridging the  areas of DCOPs and ASP; in the process, we  
offer novel contributions to both the DCOP field as well as the  ASP field. 
\emph{For the DCOP community,} we demonstrate that the use of ASP as a local constraint solver provides a number of benefits, including the ability to capitalize on {\bf (i)}~the highly expressive ASP language to more concisely define input instances (e.g.,~by representing constraint utilities as implicit functions instead of explicitly enumerating their extensions) and {\bf (ii)}~the highly optimized ASP solvers to exploit problem structure (e.g.,~propagating hard constraints to ensure consistency). 
\emph{For the ASP community,} the paper makes the equally important contribution of increasing the applicability of ASP to model and solve a wide array of multi-agent coordination and resource allocation problems, currently modeled as DCOPs. Furthermore, it also demonstrates that general, off-the-shelf ASP solvers, which are continuously honed and improved, can be coupled with distributed message passing protocols to outperform specialized imperative solvers. 

The paper is organized as follows. In Section~\ref{sec:background}, we review the basic definitions of  DCOPs, the DPOP algorithm, and ASP. 
In Section~\ref{sec:aspdpop},
we describe in detail the structure of the novel ASP-based DCOP solver, called \emph{ASP-DPOP,} and its implementation. Section~\ref{sec:analysis} provides
an analysis of the properties of ASP-DPOP,  including  proofs of soundness and  completeness of ASP-DPOP. 
Section~\ref{sec:expr} provides some experimental results, while  
Section~\ref{sec:relatedwork} reviews related work.  Finally,  Section~\ref{sec:conclusion} provides 
conclusions and indications for future work.
%%%%%%
%%%%%%% BACKGROUND
%%%%%%%%
\section{Background}
\label{sec:background}
In this section, we present an overview of DCOPs, we describe DPOP, a complete distributed algorithm to solve DCOPs, and provide some fundamental definitions of ASP.

\subsection{Distributed Constraint Optimization Problems}

\noindent A \emph{Distributed Constraint Optimization Problem (DCOP)}~\cite{modi:05,petcu:05,mailler:04,yeoh:12} can be described as a tuple $\mathcal{M} = \langle \mathcal{X, D,F,A,\alpha} \rangle$ where:
\begin{itemize}
 \item $\mathcal{X} = \{x_1,\ldots,x_n\}$ is a finite set of (decision) \emph{variables}; 
 \item $\mathcal{D}=\{D_1,\ldots,D_n\}$ is a set of finite \emph{domains}, where  $D_i$ is the domain of 
 the variable $x_i \in \mathcal{X}$, for $1 \leq i\leq n$; 
 \item $\mathcal{F}=\{f_1,\ldots,f_m\}$ is a finite set of \emph{constraints}, where $f_j$
 is a  $k_j$-ary function 
 ${f_j:D_{j_1}\times D_{j_2} \times \ldots \times D_{j_{k_j}} \mapsto \mathbb{R} \cup \{-\infty\} }$ that specifies the utility of each combination of values of variables in its \emph{scope}; the scope is denoted by $scp(f_j)=\{x_{j_1},\ldots,x_{j_{k_j}}\}$;\footnote{For the sake of simplicity, we assume a given ordering of variables.} 
 \item $\mathcal{A}=\{a_1,\ldots,a_p\}$ is a finite set of \emph{agents}; and 
 \item $\alpha:\mathcal{X}\mapsto \mathcal{A}$ maps each variable to an agent.
 \end{itemize}
 
We say that a variable $x$ is owned by an agent $a$ if $\alpha(x) = a$. We denote with $\alpha_i$ the set of all variables that are owned by an agent $a_i$, i.e., 
$\alpha_i = \{x \in \mathcal{X}\:|\: \alpha(x) = a_i\}$.
 Each constraint in $\mathcal{F}$ can be either \emph{hard}, indicating that some value combinations result in a utility of $-\infty$ and must be avoided, or \emph{soft}, indicating that all value combinations result in a finite utility and need not be avoided. A \emph{value assignment} is a (partial or complete) function $x$ that maps variables of
 $\mathcal{X}$ to values in $\mathcal{D}$ such that, if $x(x_i)$ is defined, then $x(x_i) \in D_i$ for $i=1,\dots,n$. For the
 sake of simplicity, and with a slight abuse of notation,
 we will often denote $x(x_i)$ simply with $x_i$.
 Given a constraint $f_j$ and a \emph{complete value assignment} $x$ for all decision variables, we  denote with $x_{f_j}$ the projection of $x$ to the variables in $scp(f_j)$; we refer to this as a \emph{partial value assignment} for   $f_j$. For a DCOP $\mathcal{M}$, we denote
 with $\mathcal{C}(\mathcal{M})$ the set of all complete value assignments for $\mathcal{M}$.

A \emph{solution} of a DCOP is a complete value assignment ${\bf x}$ for all variables such that 
\begin{eqnarray}\label{DCOPmaximizeproblem}
{\bf x} =\argmax_{x \in \mathcal{C}(\mathcal{M})}\sum_{j=1}^{m} f_j(x_{f_j})
\end{eqnarray}

A DCOP can be described by its \emph{constraint graph}---i.e.,  a graph whose nodes correspond to agents in $\mathcal{A}$ and whose edges connect pairs of agents who own variables in the scope of the same constraint.

\begin{definition}[Constraint Graph]
A \emph{constraint graph} of a DCOP $\mathcal{M} = \langle \mathcal{X, D,F,A,\alpha} \rangle$ is an undirected graph $G_{\mathcal{M}}=(V,E)$ where $V = \mathcal{A}$ and 
\begin{eqnarray}
E = \{\{a, a'\} & \mid & \{a, a'\} \subseteq \mathcal{A}, \exists f \in \mathcal{F} \textit{ and } \{x_i, x_{j}\} \subseteq \mathcal{X}, \textit{ such that } \nonumber \\ 
&& \{x_i, x_{j}\} \subseteq scp(f), \textit{ and } \alpha(x_i) = a, \alpha(x_{j}) = a'\}.
\end{eqnarray}
\end{definition}

Given the constraint graph $G_{\mathcal{M}}$ and given a node $a\in \mathcal{A}$, we denote with $N(a)$ the
\emph{neighbors} of $a$, i.e., 
\begin{eqnarray}
N(a) = \{a' \in \mathcal{A} \:|\: \{a,a'\}\in E\}.
\end{eqnarray}

%%%%%%%
%%%%%% SAVE IT FOR LATER IN EXPERIMENT
%A simplifying assumption, which is common practice in the DCOP literature~\cite{modi:05,petcu:05,gershman:09,ottens:12}, is that each agent owns exactly one variable. Thus, nodes of a constraint graph can be seen as representing agents  without causing any confusion. There exist pre-processing techniques to transform the general DCOP with multiple variables per agent into a DCOP with one variable per agent~\cite{yokoo:01,burke:06,fioretto:16}. 
%%%%%%%%%
%%%%%%%%%
 
 \begin{definition}[Pseudo-tree]
A \emph{pseudo-tree} of a DCOP is a subgraph of $G_{\mathcal{M}}$ that has the same nodes as 
$G_{\mathcal{M}}$ such that \emph{(i)}~the included edges (called \emph{tree edges})  form a rooted tree, and \emph{(ii)}~two nodes that are connected to each other in $G_{\mathcal{M}}$ appear in the same branch of the tree.
\end{definition} 
The  edges of $G_{\mathcal{M}}$  that are not included in a pseudo-tree 
 are called \emph{back edges}. Notice that tree edges connect a node with its parent and its children, while back edges connect a node with its pseudo-parents and  pseudo-children---i.e.,  nodes closer to the root are parents or pseudo-parents, while those closer to the leaves are children or pseudo-children. A pseudo-tree of a DCOP can be constructed using distributed DFS algorithms~\cite{hamadi:98} applied to the constraint graph of the DCOP.

In this paper, we say that two variables are constrained to each other if they are in the scope of the same constraint. 
Given a pseudo-tree, the separator of a node $a_i$ is, intuitively, the set of variables that \emph{(i)} are owned by the ancestors of $a_i$, and \emph{(ii)} are constrained with some variables that are either owned by $a_i$ or the descendants of $a_i$. 
Formally, in a pseudo-tree, the \emph{separator} of a node $a_i$, denoted by $sep_i$, is:
\begin{eqnarray}
sep_i \!\!&\!\!=\!\!& \!\{ x_{i'} \in \mathcal{X} \mid  \alpha(x_{i'}) = a_{i'} \textit{ where } a_{i'} \textit{ is an ancestor of } a_i; \textit{ and } \nonumber \\ 
&& \hspace{1.2cm} \exists x_{i''} \in \mathcal{X}, f \in \mathcal{F}, \textit{ such that } 
a_{i''}\textit{ is either } a_i \textit{ or a descendant of } a_i, \nonumber \\
&& \hspace{3.9cm} \alpha(x_{i''}) = a_{i''}, \textit{ and } \{x_{i'}, x_{i''}\} \subseteq scp(f)\}
\end{eqnarray} 
We denote with $P_i$, $PP_i$, $C_i$, and $PC_i$ the parent,
% of a node $x_i$, 
the set of pseudo-parents,
 %of a node $x_i$, 
 the set of children, 
 %of a node $x_i$, 
 and the set of pseudo-children of a node $a_i$, respectively. For simplicity, if $A$ is a set of agents in $\mathcal{A}$, we also denote with $\alpha_A$  the set of variables in $\mathcal{X}$ that are owned by agents in $A$.

\begin{figure}[htbp]
  %\small
  %\hspace{-0.4in}
  \begin{minipage}[h]{0.29\textwidth}
  \centering
  \includegraphics[scale=0.23]{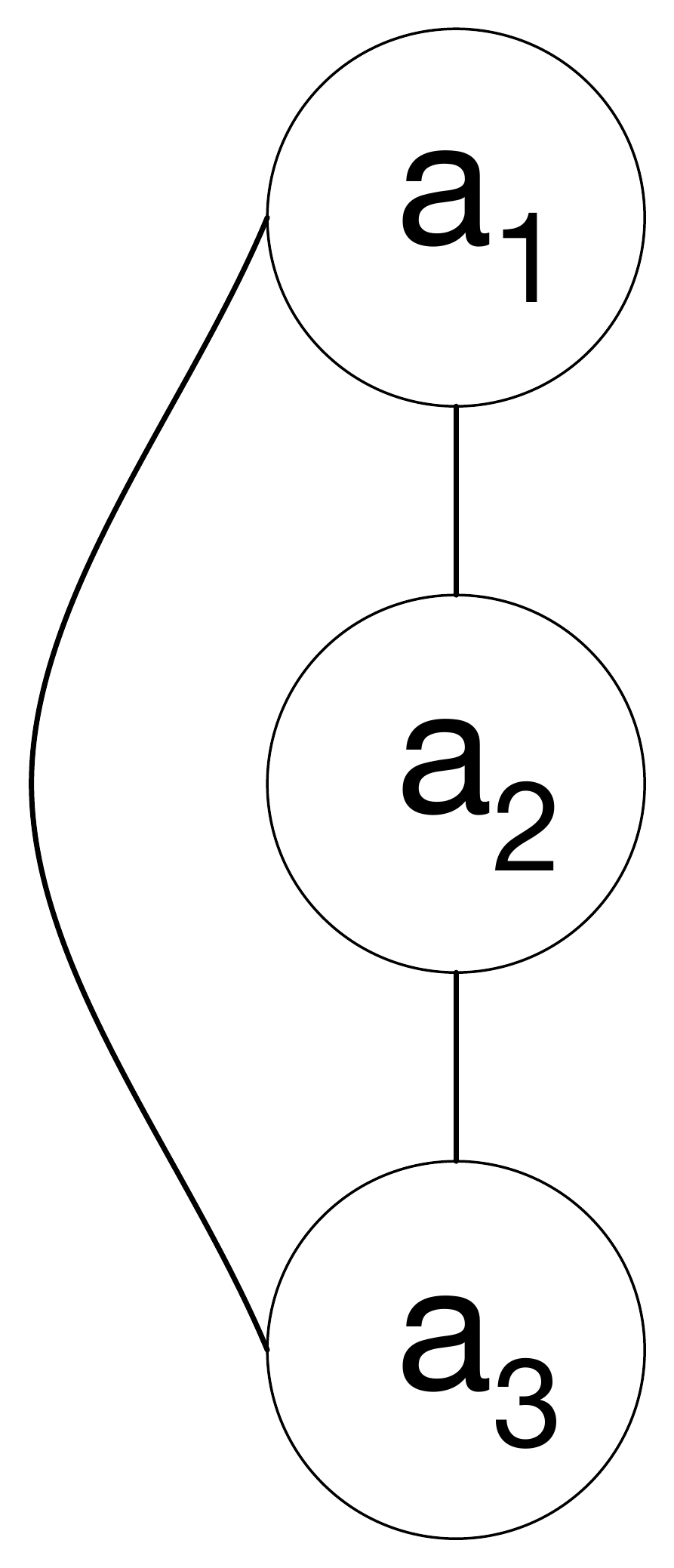}
  \\ \hspace{0.7cm}(a) Constraint Graph
  %\hspace{-0.2in}
  \end{minipage}
   \begin{minipage}[h]{0.29\textwidth}
  \centering  
  \includegraphics[scale=0.23]{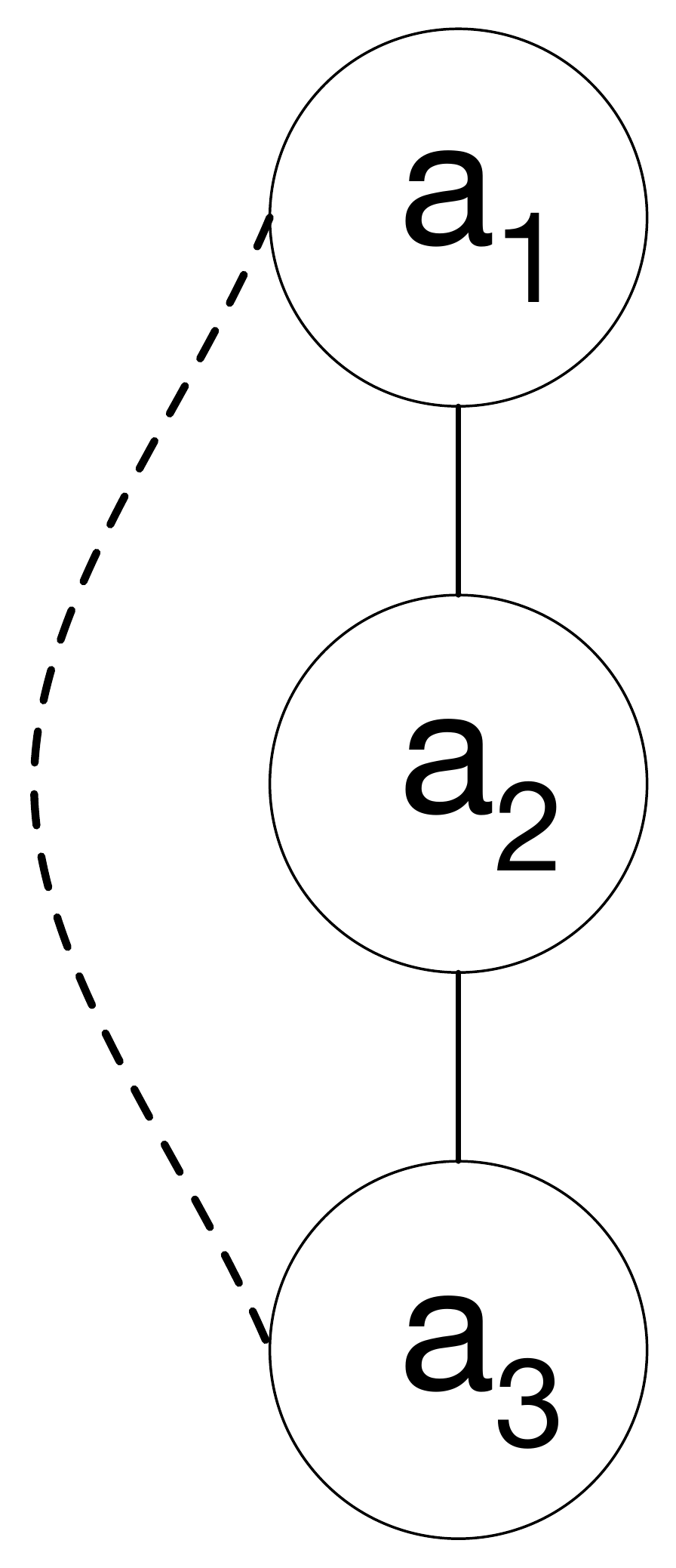}
  \\  \hspace{0.7cm} (b) Pseudo-tree
  \end{minipage}
  \hspace{-0.25in}
  \begin{minipage}[h]{0.45\textwidth}
    \centering
      \vskip0.5cm
    for $i < j$ \vspace*{-0mm} \\
    \begin{tabular}{|c|c|c|}
      \cline{1-3}
      $x_i$ & $x_j$ & Utilities \\
      \cline{1-3}
      0 & 0 & 5\\
      0 & 1 & 8\\
      1 & 0 & 20\\
      1 & 1 & 3\\
      \cline{1-3}
    \end{tabular}
    \\ \vskip2.6cm (c) Utilities of Constraints $xi\_cons\_xj$ \\
    with $i < j$
  \end{minipage}
 % \hskip4.0cm (a)  \hspace{0.275in} (b) \hspace{0.55in} (c)
  %\end{minipage}
  \vskip0.2cm
  \caption{Example DCOP \label{dcop}}
\end{figure}

\begin{example}
\label{exdcop}
Figure~\ref{dcop}(a) shows the constraint graph of a DCOP  $\mathcal{M} = \langle \mathcal{X, D,F,A,\alpha} \rangle$ where:
\begin{itemize}
 \item $\mathcal{X} = \{x_1, x_2, x_3\}$; 
 \item $\mathcal{D}=\{D_1, D_2, D_3\}$ where $D_i = \{0,1\}$ ($1\leq i \leq 3)$ is the domain of the variable $x_i \in \mathcal{X}$; 
 \item $\mathcal{F}=\{x1\_cons\_x2, \, x1\_cons\_x3, \, x2\_cons\_x3\}$ where, for each
	 $1 \leq i < j \leq 3$,
		\begin{itemize}
			\item for the constraint $xi\_cons\_xj$ we have that
				 $scp(xi\_cons\_xj) = \{x_i, x_j\}$;
			\item the utilities specified by the constraint $xi\_cons\_xj$ are given in Figure~\ref{dcop}(c).
		\end{itemize}
 \item $\mathcal{A}=\{a_1, a_2, a_3\}$; and 
 \item $\alpha$ maps each variable $x_i$ to  agent $a_i$.
 \end{itemize}
Figure~\ref{dcop}(b) shows one possible pseudo-tree, where  the dotted line is a back edge. In this pseudo-tree, $P_3 = a_2$, $PP_3 = \{a_1\}$, $C_1 = \{a_2\}$, $PC_1 = \{a_3\}$, and $sep_3 =\{ x_1, x_2 \}$.  
\end{example}

In a pseudo-tree $T$ of a DCOP $\langle \mathcal{X, D,F,A,\alpha} \rangle$, given $a_i \in \mathcal{A}$
%, x_i \in \mathcal{X}$, and $\alpha(x_i) = a_i$,  
let $R^T_{a_i}$  be the set of constraints in $\mathcal{F}$ such that:

\begin{eqnarray}\label{formula/R}
	R^T_{a_i} &=& \{f \in \mathcal{F} \mid scp(f) \subseteq \alpha_i \cup \alpha_{P_i} \cup \alpha_{PP_i} \wedge scp(f) \cap \alpha_i \not = \emptyset\}
\end{eqnarray} 

In the following, without causing any confusion, we often omit the superscript in $R^T_{a_i}$ (i.e., $R_{a_i}$) if there is only one pseudo-tree mentioned in the context.
\begin{example}
Considering again the DCOP in Example~\ref{exdcop} and its pseudo-tree in Figure~\ref{dcop}(b), we have $R_{a_3} = \{x1\_cons\_x3, \, x2\_cons\_x3\}$.
\end{example}

\subsection{The Distributed Pseudo-tree Optimization Procedure}

The \emph{Distributed Pseudo-tree Optimization Procedure (DPOP)}~\cite{petcu:05} is a complete algorithm to solve DCOPs with the following three phases:\footnote{Here we detail an extended version of DPOP described in~\cite{petcu:05} which removes the assumption that each agent owns exactly one variable.} Pseudo-tree
generation, UTIL propagation and VALUE propagation.
 
\subsubsection{Phase 1:  \emph{Pseudo-tree Generation Phase}} 

DPOP does not require the use of any specific algorithm to construct the pseudo-tree. However, in many implementations of DPOP, including those within the DCOPolis~\cite{sultanik:07} and FRODO~\cite{leaute:09} repositories, greedy approaches such as  the Distributed DFS algorithm~\cite{hamadi:98} are used.

The Distributed DFS algorithm operates as follows. First of all, the algorithm assigns a score to each agent, according to a heuristic function. It then selects an agent with the largest score as the root of the pseudo-tree. Once the root is selected, the algorithm initiates a DFS-traversal of the constraint graph, greedily adding the neighboring agent with the largest score as the child of the current agent. This process is
repeated until all agents in the constraint graph are added to the pseudo-tree. 

The agents' scores can be chosen arbitrarily. A commonly used heuristic is the max-degree heuristic $h(a_i)$:
\begin{align}
h(a_i) = | N(a_i) | 
\label{eq:max-degree}
\end{align}
which sets an agent's score to its number of neighbors. In situations where multiple agents have the same maximal score, the algorithm breaks ties according to a different heuristic, such as the variable-ID heuristic, which assigns to each agent a score that is equal to its unique ID. In our experiments, we use the max-degree heuristic and break ties with the variable-ID heuristic in the construction of the pseudo-tree. 

%as the benchmark heuristic. 
 
\subsubsection{Phase 2: \emph{UTIL Propagation Phase}}  

The UTIL propagation phase is a bottom-up process, which starts from the leaves of the
pseudo-tree and propagates upward, following only the tree edges of the pseudo-tree. In this process, the agents send UTIL messages to their parents. 

%\memo{S: cited Petcu thesis} 

\begin{definition}
[UTIL Messages~\cite{DBLP:series/faia/2009-194}] 
\label{def3} 
$UTIL_{a_i}^{a_j}$, the UTIL message sent by  agent $a_i$ to  agent $a_j$, is a multi-dimensional matrix, with one dimension for each variable in $sep_i$. With a slight abuse of notation, we denote with $scp(UTIL_{a_i}^{a_j})$  the set of variables in the message. 
\end{definition}
Instead of using a multi-dimensional matrix, one can also flatten the multi-dimensional matrix into a table where each row of the table is for one combination of value assignment of variables in $sep_i$ and the respective utility for that combination. For simplicity, in this paper, we will represent UTIL messages under their tabular form. 
We can observe  that it is always true that $\alpha_j \cap scp(UTIL_{a_i}^{a_j}) \not = \emptyset$. The semantics of such a UTIL message is similar to a constraint whose scope is the set of all variables in the context of the message (its dimensions). The size of such a UTIL message is the product of the domain sizes of variables in the context of the message. 

Intuitively, a UTIL message summarizes the optimal sum of utilities in its subtree for each value combination of variables in its separator.
%Each agent, starting from the leafs of the pseudo-tree, computes the optimal sum of utilities in its subtree for each value combination of variables in its separator. 
An agent $a_i$ computes its UTIL message by {\bf (i)}
 summing the utilities in the UTIL messages received from its child agents and the utilities of constraints whose scopes are exclusively composed of the  variables of $a_i$ and the variables in its separator (i.e., $R_{a_i}$), and then {\bf (ii)}
 projecting out  the  variables of $a_i$, by optimizing over them. Algorithm~\ref{algo1} provides a formal description of Phase 2. 
%\begin{algorithm}
%\begin{algorithmic}[1]
%\caption{DPOP Phase 2 (bottom-up UTIL message propagation)}
%\State Each agent $a_i$ who owns a variable $x_i$ does:\\
%\State $JOIN_i^{P_i} = null$\;
%\ForAll{$x_j \in C_i$} 
%\State sdf
%\EndFor
%\end{algorithmic}
%\end{algorithm}

\RestyleAlgo{boxruled}
\LinesNumbered

\begin{algorithm}[htbp]
	{\small{
	Each agent $a_i$ does:\\ 
	$JOIN_{a_i}^{P_i} = null$\\
	\ForAll{$a_c \in C_i$}
	{ \label{line/3}
		wait for $UTIL_{a_c}^{a_i}$ message to arrive from $a_c$\\
		$JOIN_{a_i}^{P_i} = JOIN_{a_i}^{P_i} \oplus UTIL_{a_c}^{a_i}$ $//$ join UTIL messages from children as they arrive\\
	}
	$JOIN_{a_i}^{P_i} = JOIN_{a_i}^{P_i} \oplus \big (\oplus_{f \in R_{a_i}} f \big )$ $//$ also join all constraints with parent/pseudo-parents  \label{line/2}\\
	$UTIL^{P_i}_{a_i} = JOIN^{P_i}_{a_i} \bot_{\alpha_i}$ $//$ use projection to eliminate its owned variables \label{line/1}\\
	Send $UTIL_{a_i}^{P_i}$ message to its parent agent $P_i$\\
	\vskip0.3cm
	}}
	\caption{DPOP Phase 2 (UTIL Propagation Phase)}
\label{algo1}
\end{algorithm}

Algorithm~\ref{algo1} uses the \emph{JOIN} operator (i.e., $\oplus$) and the \emph{PROJECTION} operator (i.e., $\bot$).
\begin{definition}[JOIN $\oplus$ Operator] \label{joinoperation}
$U = UTIL_{a_k}^{a_i} \oplus UTIL_{a_l}^{a_i}$ is the join of two UTIL matrices (constraints). $U$ is also a matrix (constraint) with $scp(U) = scp(UTIL_{a_k}^{a_i}) \cup scp(UTIL_{a_l}^{a_i})$ as dimensions. For each possible combination $x$ of values of variables in $scp(U)$, the corresponding value of $U(x)$ is the sum of the corresponding cells in the two source matrices, i.e., $U(x) = UTIL_{a_k}^{a_i}(x_{UTIL_{a_k}^{a_i}}) + UTIL_{a_l}^{a_i}(x_{UTIL_{a_l}^{a_i}})$ where $x_{UTIL_{a_k}^{a_i}}$ and $x_{UTIL_{a_l}^{a_i}}$ are  partial value assignments from $x$ for all variables in $scp(UTIL_{a_k}^{a_i})$ and $scp(UTIL_{a_l}^{a_i})$, respectively.
\end{definition}
Since UTIL messages can be seen as constraints, the $\oplus$ operator can be used to join UTIL messages and constraints. 
\begin{example}\label{join32}
Given 2 constraints $x1\_cons\_x3$ and $x2\_cons\_x3$ in Example~\ref{exdcop}, let $JOIN_{a_3}^{a_2} = x1\_cons\_x3\, \oplus \, x2\_cons\_x3$. It is possible to see that ${scp(JOIN_{a_3}^{a_2})= \{x_1, x_2, x_3\}}$. The utility corresponding to $x_1 = x_2 = x_3 = 0$ is $JOIN_{a_3}^{a_2}(x_1 = 0, x_2 = 0, x_3 = 0) = 5 + 5  = 10$. Moreover, $JOIN_{a_3}^{a_2}(x_1 = 0, x_2 = 0, x_3 = 1) = 8 + 8  = 16$.
\end{example}

For the $\bot$ operator, knowing that $\alpha_i \subseteq scp(JOIN_{a_i}^{P_i})$, $JOIN_{a_i}^{P_i} \bot_{\alpha_i}$ is the projection through optimization of the $JOIN_{a_i}^{P_i}$ matrix along axes representing variables in $\alpha_i$. 

\begin{definition}[PROJECTION $\bot$ Operator] \label{projectionoperation}
Let $\alpha_i$ be a set of variables where $\alpha_i \subseteq scp(JOIN_{a_i}^{P_i})$, and let $X_i$ be the set of all possible value combinations of variables in $\alpha_i$. A matrix $U = JOIN_{a_i}^{P_i} \bot_{\alpha_i}$ is defined as: 
{\bf (i)} $scp(U) = scp(JOIN_{a_i}^{P_i}) \setminus \alpha_i$, and 
{\bf (ii)} for each possible value combination $x$ of variables in $scp(U)$, ${U(x) = \max_{x' \in X_i} JOIN_{a_i}^{P_i}(x,x')}$.  
\end{definition}

\begin{example}
Considering again $JOIN_{a_3}^{a_2}$ in Example~\ref{join32}, let $U = JOIN_{a_3}^{a_2} \, \bot_{\{x_3\}}$. We have $scp(U) = \{x_1, x_2\}$, and $U(x_1 = 0, x_2 =0) = \max \big (JOIN_{a_3}^{a_2}(x_1 = 0, x_2 = 0, x_3 =0), JOIN_{a_3}^{a_2}(x_1 = 0, x_2 = 0, x_3 =1) \big ) = \max ( 10, 16) = 16$.
\end{example}

As an example for the computations in Phase 2 (UTIL propagation phase), we consider again the DCOP in Example~\ref{exdcop}.  
\begin{example}
In the DCOP in Example~\ref{exdcop}, the agent $a_3$ computes its UTIL message, $UTIL_{a_3}^{a_2}$ (see Table~\ref{UTIL-DPOP}(a)), and sends it to its parent agent $a_2$. 
%
%the optimal utility for each value combination of variables $x_1$ and $x_2$ (see Table~\ref{UTIL-DPOP}(a)), and sends the utilities to its parent agent $a_2$ in a UTIL message. For example, consider the first row of Table~\ref{UTIL-DPOP}(a), where $x_1 = 0$ and $x_2 = 0$. The variable $x_3$ can be assigned a value of either 0 or 1, resulting in an aggregated utility value of (5+5=10) or (8+8=16), respectively. Then, the corresponding maximal value, which is 16, is selected to be sent to agent $a_2$. 
The agent $a_2$ then computes its UTIL message, $UTIL_{a_2}^{a_1}$ (see Table~\ref{UTIL-DPOP}(b)), and sends it to its parent agent $a_1$. Finally, the agent $a_1$ computes the optimal utility of the entire problem, which is $45$.
\end{example}

 \begin{table}[htbp]
 	\small
 	\centering
 	\begin{minipage}[h]{0.48\textwidth}
 		\centering
 		\resizebox{0.95\linewidth}{!} {
 			\begin{tabular}{|@{\hspace{0.8mm}}c@{\hspace{0.8mm}}| % 0
 					@{\hspace{0.2mm}}c @{\hspace{0.8mm}}| %0
 					@{\hspace{0.2mm}}c %max(
 					@{\hspace{0.0mm}}r %5
 					@{\hspace{0.0mm}}c %+
 					@{\hspace{0.0mm}}r %5
 					@{\hspace{0.0mm}}c %,
 					@{\hspace{0.8mm}}r %8
 					@{\hspace{0.0mm}}c % +
 					@{\hspace{0.0mm}}r % 8
 					@{\hspace{0.0mm}}c % =16
 					@{\hspace{0.8mm}}|} 
 				\cline{1-11}
 				$x_1$ & $x_2$ & \multicolumn{9}{c|}{Utilities} \\
 				\cline{1-11}
 				0 & 0 & max(&5&+&5&, &8&+&8     &) = 16 \\
 				0 & 1 & max(&5&+&20&, &8&+&3   &) = 25 \\
 				1 & 0 & max(&20&+&5&, &3&+&8   &) = 25 \\
 				1 & 1 & max(&20&+&20&, &3&+&3 &) = 40 \\
 				\cline{1-11}
 			\end{tabular} 
 		}
 		\\ 
 		\vspace{0.25em} (a)
 	\end{minipage}
 	\begin{minipage}[h]{0.48\textwidth}
 		\centering
 		\resizebox{0.95\linewidth}{!} {
 			\begin{tabular}{|@{\hspace{0.8mm}}c@{\hspace{0.8mm}}| % 0
 					@{\hspace{0.8mm}}c %max(
 					@{\hspace{0.0mm}}r %5
 					@{\hspace{0.0mm}}c %+
 					@{\hspace{0.0mm}}r %5
 					@{\hspace{0.0mm}}c %,
 					@{\hspace{0.8mm}}r %8
 					@{\hspace{0.0mm}}c % +
 					@{\hspace{0.0mm}}r % 8
 					@{\hspace{0.0mm}}c % =16
 					@{\hspace{0.8mm}}|}

 				\cline{1-10}
 				$x_1$ & \multicolumn{9}{c|}{Utilities} \\
 				\cline{1-10}
 				0 & max( &5&+&16&, &8&+&25   &) = 33\\
 				1 & max( &20&+&25&, &3&+&40 &) = 45\\
 				\cline{1-10} 
 			\end{tabular} }\\ 
 			\vspace{2.25em} (b)
 		\end{minipage}
 		\caption{UTIL Phase Computations in DPOP\label{UTIL-DPOP}}
 	\end{table}

\subsubsection{Phase 3: \emph{VALUE Propagation Phase}}

Phase 2 finishes when the UTIL message reaches the root of the tree. At that point,
each agent, starting from the root of the pseudo-tree, determines the optimal value for its variables based on {\bf (i)} the computation from Phase 2, and 
{\bf (ii)} (for non-root agent only) the VALUE message that is received from its parent.  Then, it sends these optimal values to its child agents through VALUE messages. Algorithm~\ref{algo2} provides a formal description of Phase 3.

A VALUE message that travels from the parent $P_i$ to the
agent $a_i$, $\mathit{VALUE}_{P_i}^{a_i}$,  contains the optimal value assignment for variables owned by either the parent agent or the pseudo-parent agents of the agent $a_i$.

%Each agent, starting from the root of the pseudo-tree, determines the optimal value for its variables. The root agent does so by choosing the values of its variables from its UTIL computations. 
\begin{algorithm}[h]
	\small{
	Each agent $a_i$ do:\\ 
	wait for $\mathit{VALUE}_{P_i}^{a_i}(sep_i^*)$ message from its parent agent $P_i$ \hspace{.2cm} $//$ $sep_i^*$ is the optimal value assignment for all variables in $sep_i$  \\
	%\ForAll{$x_i \in \alpha_i$ $/*$ for each variable $x_i$ owned by the agent $a_i$ $*/$}{
		$\alpha_i^* \leftarrow \argmax_{\alpha_i \in X_i} JOIN_{a_i}^{P_i} (sep_i^*,\alpha_i)$ $//$ $X_i$ is the set of all possible value combinations of variables in $\alpha_i$  \label{computesolution}\\
	%}
%	$JOIN_i^{P_i} = JOIN_i^{P_i} \oplus \big (\oplus_{f \in R_i} f \big )$ $//$ also join all constraints with parent/pseudo-parents \\
	\ForAll{$a_c \in C_i$ }{
		let $sep_i^{**}$ be the partial optimal value assignment for variables in $sep_c$ from $sep_i^*$\\
		send $\mathit{VALUE}(sep^{**}_i, \alpha_i^*)$ as $\mathit{VALUE}_{a_i}^{a_c}$ message to its child agent $a_c$
	}
	\vskip0.3cm
	}
	\caption{DPOP Phase 3 (VALUE Propagation Phase)}
\label{algo2}
\end{algorithm}
\begin{example}
In the DCOP in Example~\ref{exdcop}, the agent $a_1$ determines that the value with the largest utility for its variable $x_1$ is 1, with a utility of 45, and then sends this information down to its child agent $a_2$ in a VALUE message, i.e., $\mathit{VALUE}_{a_1}^{a_2}(x_1 = 1)$. Upon receiving that VALUE message, the agent $a_2$ determines that 
the value for its variable $x_2$ with the largest utility of the subtree rooted at the agent $a_2$, assuming that $x_1 = 1$, is 0, with a utility of 45. The agent $a_2$ then sends this information down to its child agent $a_3$, i.e., $\mathit{VALUE}_{a_2}^{a_3}(x_1 = 1, x_2 = 0)$. Finally, upon receiving such VALUE message, the agent $a_3$ determines that the value for its variable $x_3$ with the largest utility of the subtree rooted at the agent $a_3$, assuming that $x_1 = 1$ and $x_2 = 0$, is 0, with a utility of 25.
\end{example}
%%%%%%
%%%%ASP
%%%%%
\subsection{Answer Set Programming} 

Let us provide some general background on \emph{Answer Set Programming (ASP)} (see, for 
example, \cite{Baral03,michaelbook} for more details). 

%Consider a logic language $\mathcal{L}=\langle \mathcal{C}, \mathcal{P}, \mathcal{V} \rangle$, where $\mathcal{C}$ is a set of constants, $\mathcal{P}$ is a set of predicate symbols, and $\mathcal{V}$ is a set of variables. The notions of terms, atoms, and literals are defined as in first order logic. 
 
An \emph{answer set program} $\Pi$ is a set of \emph{rules} of the form %
\begin{equation}
\label{lprule1}
c  \leftarrow a_1,\ldots,a_j,\naf a_{j+1},\ldots,\naf a_m
\end{equation}
where $0 \le j \le m$, for $1 \leq i \leq m$ each $a_i$  or $c$
is a literal of a first order language $\mathcal{L}$, and
$\naf$ represents {\em negation-as-failure (naf)}. 
For a literal  $a$,  
$\naf a$ is called a naf-literal. For a
rule of the form (\ref{lprule1}), the left and
right hand sides of the rule are called the \emph{head} and the
\emph{body} of the rule, respectively.  Both the head and the body can be empty. 
When the head is empty, the rule is called a {\em constraint}. 
When the body is empty, the rule is called a {\em fact}. A literal (resp.~rule) is a 
\emph{ground} literal (resp.~ground rule) if it does not contain any variable. A rule with variables is simply used as a shorthand for the set of its ground instances from the language $\mathcal{L}$. 
%(obtained by consistently replacing the variables with elements of $\mathcal{C}$). 
Similarly, a non-ground program (i.e., a program containing some non-ground rules) is a shorthand for all ground instances of its rules. 
%
%\textcolor{red}{
Throughout this paper, we follow the traditional notation in writing ASP rules, where 
	names that start with an upper case letter represent variables.
%	}
%
For a ground instance $r$ of a rule of the form (\ref{lprule1}), 
$head(r)$ denotes the set $\{c\}$, while  
$pos(r)$ and $neg(r)$ denote $\{a_1,\ldots,a_j\}$ and 
$\{a_{j+1},\ldots, a_m\}$, respectively.  
%%For a program $P$, $lit(P)$ denotes the set of literals occurring in $P$. 

%%%%%How ASP solves a problem

Let $X$ be a set of ground literals. 
$X$ is consistent if there is no atom $a$ such that $\{a,\neg a\} \subseteq X$.
The
body of a ground rule $r$ of the form (\ref{lprule1}) is
{\em satisfied} by $X$ if $neg(r) \cap X = \emptyset$
and $pos(r) \subseteq X$. A ground rule of the form
(\ref{lprule1}) with nonempty head 
is satisfied by $X$ if either its body is not
satisfied by $X$ or $head(r) \cap X \ne \emptyset$. 
A constraint is {\em satisfied} by $X$ 
if its body is not satisfied by $X$. 

For a consistent set of ground literals $S$ and a ground 
program $\Pi$, the {\em reduct} 
of $\Pi$ w.r.t. $S$, denoted by $\Pi^S$, is the program
obtained from  $\Pi$ by deleting
({\bf i}) each rule that has a naf-literal {\em not} $a$ in its body where
$a \in S$, and
({\bf ii}) all naf-literals in the bodies of the remaining rules.

$S$ is an \emph{answer set (or a stable model)} of a ground program
%\footnote{
%   A program with variables is viewed as the collection 
%   of all ground instances of its rules. 
%}
  $\Pi$ \cite{GelfondL90} if it
satisfies the following conditions: 
{\bf (i)}  If $\Pi$ does not contain any naf-literal
(i.e., $j=m$ in every rule of $\Pi$) then $S$ is a minimal 
consistent set of literals that satisfies all the rules in $\Pi$; and
{\bf (ii)} If $\Pi$  contains some naf-literals
($j < m$ in some rules of $\Pi$) then $S$ is an answer set of $\Pi^S$. Note that $\Pi^S$ does
not contain naf-literals, and thus its answer set is defined in case {\bf (i)}.
A program $\Pi$ is said to be \emph{consistent} if it has some answer sets. Otherwise, it is inconsistent.
%\begin{example}
%Let us consider an ASP $\Pi$ that consists of two rules:  
%{
%\begin{eqnarray}
%a \leftarrow \naf b \:\:\:\:\:\:\:\:\:\:\:\:\:\: b \leftarrow \naf a. 
%\end{eqnarray}
%}
%$\Pi$ has two answer sets that are $\{a\}$ and $\{b\}$. The program $\Pi'$:
%{
%\begin{eqnarray}
%a \leftarrow \naf a
%\end{eqnarray}
%}
%does not have an answer set. Thus, $\Pi'$ is inconsistent.
%\end{example}

The ASP language includes also language-level extensions to facilitate the encoding of aggregates (\emph{min, max, sum, etc.}).
\begin{example}
Let us consider an ASP program $\Pi$ that consists of two facts and one rule:  
{
\begin{eqnarray}
int(5)&\leftarrow&\\
int(10)&\leftarrow&\\
max(U) & \leftarrow & U = \#max\{V:int(V)\} 
\end{eqnarray}
}
The third rule uses an aggregate to determine the maximum in the set $\{V\::\:int(V)\}$.
$\Pi$ has one answer set: $\{int(5), int(10), max(10)\}$. Thus, $\Pi$ is consistent.
\end{example}

Moreover, to increase the expressiveness of logic programming and  
simplify its use in applications, the syntax of ASP has been extended  with 
 \emph{choice rules}. Choice rules are of the form: 
{
\begin{eqnarray}
l \,\{a_1,\dots, a_m\}\, u \leftarrow a_{m+1},\ldots,a_n,\naf a_{n+1},\ldots,\naf a_k 
\end{eqnarray}
}
where $l \,\{a_1,\dots, a_m\}\, u$ is called a choice atom, 
$l$ and $u$ are integers, $l \leq u$, $0 \leq m \leq n \leq k$, and each $a_i$ is a literal for $1 \leq i \leq k$. This 
rule allows us to derive any subset of $\{a_1,\ldots,a_m\}$ whose cardinality is between the lower bound $l$ and upper bound $u$ whenever the body is satisfied. $l$ or $u$ can be omitted. If $l$ is omitted, $l = 0$, and if $u$ is omitted, $u = +\infty$.  
Standard syntax for choice rules has been proposed and adopted in most state-of-the-art ASP solvers, such as 
\textsc{clasp}~\cite{GebserKNS07} and \textsc{dlv}~\cite{dlv97}. 

\begin{figure}[htbp]
	\includegraphics[width=\textwidth]{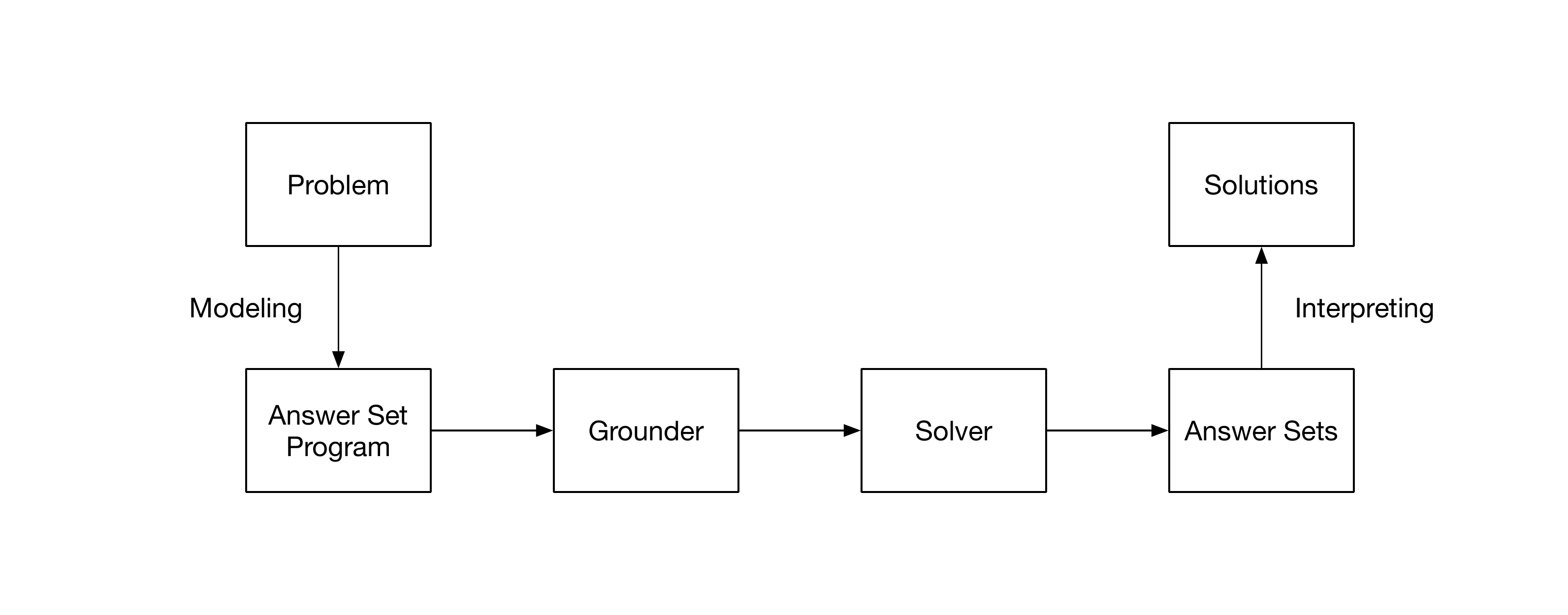}
	\caption{Solving a Problem Using ASP}
	\label{ASPSolver}
\end{figure}

Figure~\ref{ASPSolver} visualizes how to solve a problem using ASP. In more detail, the problem is encoded as an answer set program whose answer sets correspond to solutions. The answer set program, which may  contains variables, is then grounded using an ASP \emph{grounder}, e.g., \textsc{gringo}~\cite{Gebser:2011}. The grounding process employs smart techniques to 
reduce the size of the resulting ground program, e.g., removing literals from rules that are known to be true, removing 
rules that will not contribute to the computation of answer sets.

\begin{example}
\label{ex/gringo}
Let us consider an ASP program $\Pi$ that consists of two facts and one rule:
\begin{eqnarray}
int(1)&\leftarrow&\label{int1}\\
int(-1)&\leftarrow&\label{int-1}\\
%variable(x_2, 2)&\leftarrow&\\
%utility(V_1, V_2, V_1/V_2)&\leftarrow&variable(x_1, V_1), variable(x_2, V_2), V_2 \;!\!= 0
isPositive(X)&\leftarrow&int(X), X > 0
\end{eqnarray}
Using a naive grounder that simply replaces consistently the variable $X$ with the two constants $1$ and $-1$, the ground program of $\Pi$  consists of the two facts \eqref{int1} and \eqref{int-1} and the two following ground rules:
\begin{eqnarray}
%int(1)&\leftarrow&\label{int1}\\
%int(-1)&\leftarrow&\label{int-1}\\
isPositive(1)&\leftarrow&int(1), 1 > 0\label{removeliterals1}\\
isPositive(-1)&\leftarrow&int(-1), -1 > 0 \label{unnecessary}
\end{eqnarray}
It is easy to see that the ground rule~\eqref{unnecessary} is unnecessary (i.e., its body cannot be satisfied by any set of literals due to the
literal $-1 > 0$) and should be removed. In contrast, the ground program of $\Pi$ obtained by \textsc{gringo} has only three facts:~\eqref{int1},~\eqref{int-1}, and 
\begin{eqnarray}
%int(1)&\leftarrow&\\
%int(-1)&\leftarrow&\\
isPositive(1)&\leftarrow&\label{removeliterals2}
\end{eqnarray}
We observe that the unnecessary rule~\eqref{unnecessary} is omitted since its body cannot be satisfied (i.e., $-1 > 0$), and the fact~\eqref{removeliterals2} is obtained from the rule~\eqref{removeliterals1} by removing all literals in its body because the grounder can determine as been always satisfied. 
\end{example}
%Aside performing simplifications, \textsc{gringo} also improves grounding process by another methods; however, we will not review them here since they are out of scope of the paper. As shown in Figure~\ref{ASPSolver}, 
All the answer sets of the  program produced by the ASP grounder are then computed by an ASP \emph{solver}, e.g., \textsc{clasp}~\cite{GebserKNS07}. The solutions to the original problem can be determined by 
properly interpreting the different answer sets computed, where each answer sets corresponds to one
of the possible solutions to the original problem. 
 For readers who are interested in how to solve an answer set program, the foundations and algorithms underlying the grounding and solving technology used in \textsc{gringo} and \textsc{clasp} are
  described in detail in~\cite{claspbook,KaufmannLPS16}.  
% such as implementing bottom up grounding, using dependencies to focus grounding, and adapting semi-naive evaluation~\cite{?}. 

\section{ASP-DPOP}\label{sec:aspdpop}
\emph{ASP-DPOP} is a framework that uses logic programming to capture the structure of DCOPs, and to emulate the computation and communication operations of DPOP. In particular, each agent in a DCOP 
%with its specifications 
is represented by a separate ASP program---effectively enabling the infusion of a knowledge
representation framework in the DCOP paradigm.
 
The overall communication infrastructure required by the distributed computation of DPOP
is expressed using a subset of the \emph{SICStus Prolog} language \cite{sicstus}, extended with multi-threading and the \emph{Linda} blackboard facilities. In ASP-DPOP, we use \textsc{clasp}~\cite{GebserKNS07}, with its companion grounder \textsc{gringo}, as our ASP solver, being the current state-of-the-art for ASP.  In this section, we will describe the structure of ASP-DPOP and its implementation.

\begin{figure}[htbp]
	\includegraphics[width=\textwidth]{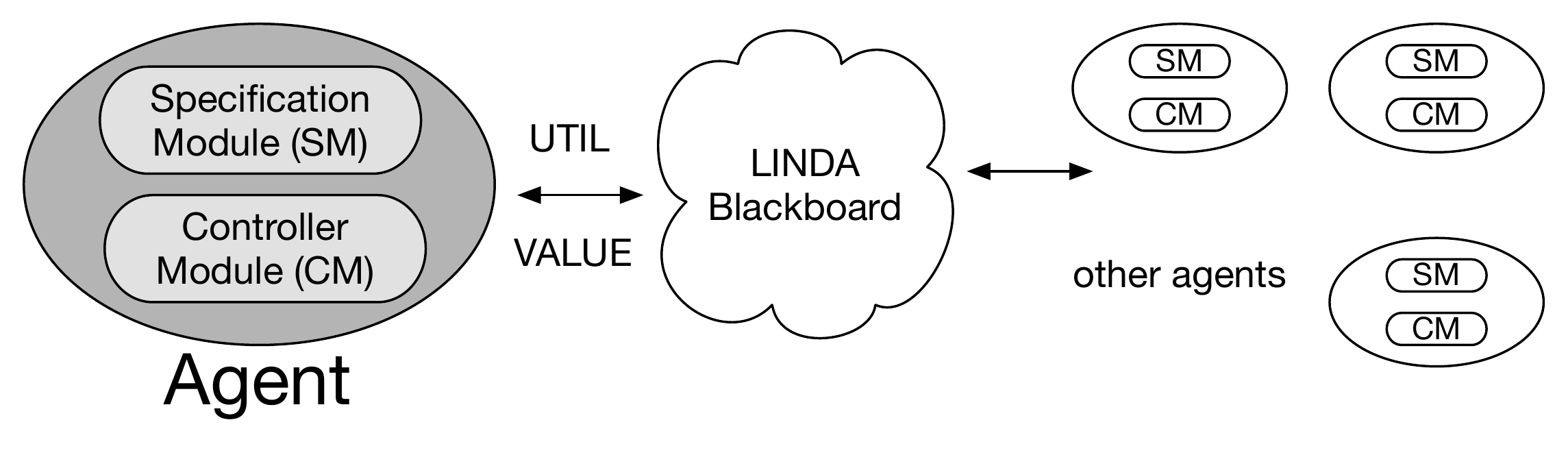}
	\caption{The structure of an ASP-DPOP agent}
	\label{ASPDPOPagentstructure}
\end{figure}

\subsection{The architecture of ASP-DPOP}

ASP-DPOP is an agent architecture that reflects the structure of DCOPs, where several  agents reflect the computation and communication operations of DPOP. The internal structure of each \emph{ASP-DPOP agent}, shown in Figure~\ref{ASPDPOPagentstructure}, is  composed of two modules. The first module is the \emph{Specification Module (SM),} that encloses an ASP program which captures a corresponding agent as  specified in the DCOP---i.e., the agent's name, the agent's neighbors, the 
description of the variables owned by the agent, the description of the variables owned by the agent's neighbors, and the description of the  constraints whose scope include any of the variables owned by the agent. 

The second module is the \emph{Controller Module (CM),} encoded as a Prolog program. The CM instructs the agent to 
perform the  communication operations of DPOP, such as cooperating with other agents to generate a pseudo-tree, waiting for UTIL messages from child agents,  sending the UTIL message to the parent agent (if present), waiting for the VALUE message from the parent agent (if present),  and sending the VALUE messages to the child agents. 

In ASP-DPOP, each DCOP is represented by a set of ASP-DPOP agents; each agent is modeled by its knowledge bases, located  at its SM and CM, and it interacts with other agents in accordance to the  rules of  its CM.

\subsection{ASP-DPOP Implementation: Specification Module (SM)}
Let us  describe how to capture the structure of a DCOP in the Specification Module of an ASP-DPOP agent using ASP. 
Let us consider a generic DCOP $\mathcal{M} = \langle \mathcal{X,D,F,A,\alpha} \rangle$. We represent $\mathcal{M}$ using a set of ASP-DPOP agents whose SMs are ASP programs $\{\Pi_{a_i} \mid a_i \in \mathcal{A}\}$. We will show how to generate $\Pi_{a_i}$ for each agent $a_i$. In the following, we say $a$ and $a'$ in $\mathcal{A}$ are \emph{neighbors} if there exists $x$ and $x'$ in $\mathcal{X}$ such that $\alpha(x) = a$, $\alpha(x') = a'$, and there is a $f \in \mathcal{F}$ such that $\{x, x'\} \subseteq scp(f)$. 
Given a constraint $f \in \mathcal{F}$, we say that $f$ is \emph{owned by the agent} $a_i$ if the scope of $f$ contains some variables owned by the agent $a_i$.\footnote{
%\textcolor{red}{
The concept of  \emph{ownership of a constraint} is introduced to facilitate the representation of ASP-DPOP implementation. Intuitively, an agent should know about a constraint if the agent owns some variables that are in the scope of such constraint. Under this perspective, a constraint may be owned by several agents.
%}
} 

For each variable $x_i \in \mathcal{X}$ we define a collection $L(x_i)$ of ASP rules that includes:
\begin{itemize}
%%%%%
\item A fact of the form
\begin{eqnarray}
	variable(x_i) \leftarrow
\end{eqnarray}
for identifying the name of the variable; 
%%%%%
\item For each $d \in D_i \in \mathcal{D}$, a fact of the form
\begin{eqnarray}
	value(x_i, d) \leftarrow \label{value}
\end{eqnarray}
for identifying the possible values of $x_i$. Alternatively, if the domain $D_i$ 
is an integer interval $[\textit{lower\_bound} \dots \textit{upper\_bound}]$ we can use the additional facts of the form
\begin{eqnarray}
	begin(x_i, lower\_bound) \leftarrow \label{domainlow} \label{begin}\\
	end(x_i, upper\_bound) \leftarrow \label{domaindown}\label{end}
\end{eqnarray}
to facilitate the description of the domain $D_i$.  In such case, the $value$ predicates similar to ones in~\eqref{value} are achieved by the rule
\begin{eqnarray}\label{BE}
	value(X, B..E) & \leftarrow & variable(X), begin(X, B), end(X, E)
\end{eqnarray}
Intuitively, $B$ and $E$ in~\eqref{BE} are variables that should be grounded with $lower\_bound$ and  $upper\_bound$ from~\eqref{begin} - \eqref{end}, respectively.
%%%%%%%
\end{itemize}
%%%%%%
For each constraint $f_j \in \mathcal{F}$, where $scp(f_j) = \{x_{j_1},\ldots,x_{j_{k_j}}\}$, we define a collection $L(f_j)$ of rules that includes:
\begin{itemize}
%%%%%%
\item A fact of the form
\begin{eqnarray}
	constraint(f_j) \leftarrow
\end{eqnarray} 
for identifying the name of the constraint; 
\item For each variable $x \in scp(f_j)$, a fact of the form
\begin{eqnarray}
\label{scope/eq}
	scope(f_j, x) \leftarrow
\end{eqnarray}
for identifying the scope of the constraint; and
\item For each 
%admissible 
partial value assignment $x_{f_j}$ for all variables in $scp(f_j)$,
 where $v_{j_1},\ldots,v_{j_{k_j}}$ are the value assignments of the variables $x_{j_1},\ldots,x_{j_{k_j}}$, respectively, 
  such that ${f_j(x_{f_j}) = u \not = -\infty}$, a fact of the form
\begin{eqnarray}
	f_j(u, v_{j_1},\ldots,v_{j_{k_j}}) \leftarrow \label{utility}
\end{eqnarray} 
For each 
%admissible 
partial value assignment $x_{f_j}$ for all variables in $scp(f_j)$,
 where $v_{j_1},\ldots,v_{j_{k_j}}$ are the value assignments of the variables $x_{j_1},\ldots,x_{j_{k_j}}$, respectively, 
  such that ${f_j(x_{f_j}) = -\infty}$, a fact of the form\footnote{$\#\mathit{inf}$ is a special constant representing the smallest possible value in ASP language.}
%In case ${f_j(x_{f_j}) = -\infty}$, the fact~\eqref{utility} becomes\footnote{$\#\mathit{inf}$ is a special constant representing the smallest possible value in ASP language.}
\begin{eqnarray}
	f_j(\#\mathit{inf}, v_{j_1},\ldots,v_{j_{k_j}}) \leftarrow \label{utilityinf}
\end{eqnarray}
\end{itemize}
Alternatively, it is also possible to envision cases where the utility of a constraint is
implicitly modeled by logic programming rules, as shown in the following example.
%
%\textcolor{red}{
It is important to mention that, considering a constraint $f_j \in \mathcal{F}$:
\begin{itemize}
\item[{\bf(1)}] The order of variables (e.g., $x_{j_1},\ldots,x_{j_{k_j}}$) in $scp(f_j)$, whose corresponding value assignments (e.g., $v_{j_1},\ldots,v_{j_{k_j}}$) that appear in facts of the forms~\eqref{utility} and \eqref{utilityinf}, needs to be consistent in all facts of the forms~\eqref{utility} and \eqref{utilityinf} that relate to the constraint $f_j$; and
\item[{\bf(2)}] The order of the facts of the form~\eqref{scope/eq} that are added to $L(f_j)$ to identify the scope of the constraint $f_j$ needs to be consistent with the order of variables (e.g., $x_{j_1},\ldots,x_{j_{k_j}}$) mentioned in {\bf (1)}.
\end{itemize} 
These requirements (i.e., {\bf(1)} and {\bf(2)}) are critical, because they allow Controller Modules to understand which variables belong to what values that appear in the facts of the forms~\eqref{utility} and \eqref{utilityinf}, when Controller Modules read $L(f_j)$. This is done because, in \emph{SICStus Prolog}, the search rule is ``search forward from the beginning of the program.'' Therefore, the order of the predicates (i.e., facts) that are added to \emph{SICStus Prolog} matters.
%}

\begin{example}
Let us consider a constraint $f$ whose scope is $\{x, x'\}$, and $f$ specifies that the utility of value assignments $x = v$, $x' = v'$ is $v+v'$.
%
% of $v + v'$ for every possible value combination of variables $x$ and $x'$ where $v$ and $v'$ are values assigned to $x$ and $x'$, respectively.
  The facts of the form~\eqref{utility} for the constraint $f$ can be modeled by the following rule
\begin{eqnarray}
	f(V+V', V, V' ) \leftarrow value(x, V), value(x', V')
\end{eqnarray}
\end{example}

\smallskip
%%%%%%%%
\noindent For each agent $a_i$ we define an ASP program $\Pi_{a_i}$ that includes:
\begin{itemize}
\item A fact of the form
\begin{eqnarray}
	agent(a_i) \leftarrow
\end{eqnarray}
for identifying the name of the agent;
\item For each variable $x \in \mathcal{X}$ that is owned by the agent $a_{i}$, a fact of the form
\begin{eqnarray}
	owner(a_i, x) \leftarrow
\end{eqnarray}
\item For each agent $a_{j}$ who is a neighbor of the agent $a_i$, a fact of the form
\begin{eqnarray}
	neighbor(a_{j}) \leftarrow
\end{eqnarray}
\item For each variable $x' \in \mathcal{X}$ that is owned by an agent $a_{j}$ who is a neighbor of the agent $a_i$, a fact of the form
\begin{eqnarray}
	owner(a_{j}, x') \leftarrow 
\end{eqnarray}
\item For each constraint $f_j \in \mathcal{F}$ owned by the agent $a_i$, the set of  rules 
\begin{eqnarray}
	L(f_j)
\end{eqnarray}
\item For each variable $x\in \mathcal{X}$ that is in the scope of some constraints owned by the agent $a_i$, the set of  rules 
\begin{eqnarray}
	L(x)
\end{eqnarray}
\end{itemize}
%%%%%%%%
\subsection{ASP-DPOP Implementation: Encoding UTIL and VALUE Messages} \label{encode_UTIL_VALUE_mess}
The ASP-DPOP framework emulates the computation and communication operations of DPOP, where each ASP-DPOP agent produces UTIL and VALUE messages, and forwards them to its parent and child agents, as DPOP does. In ASP-DPOP, UTIL and VALUE messages are encoded as ASP facts, as discussed in this subsection. 

\subsubsection{UTIL Messages} 

In DPOP, each UTIL message sent from a child agent $a_i$ to its parent agent $P_i$ is a  matrix. In encoding a UTIL message
in ASP-DPOP, we represent each cell of the matrix of the UTIL message, whose associated utility is not $-\infty$, as an ASP 
atom of the form:
\begin{eqnarray}\label{utilfact}
	table\_max\_a_i(u, v_{i_1}, \ldots, v_{i_{k_i}}) 
\end{eqnarray} 
which indicates that the optimal aggregate utility of the value assignments $x_{i_1} = v_{i_1}, \ldots, x_{i_{k_i}} = v_{i_{k_i}}$ is $u \not = -\infty$, where  $sep_i = \{x_{i_1}, \ldots, x_{i_{k_i}}\}$. In other words, the parent agent $P_i$ knows that $UTIL_{a_i}^{P_i}(x_{i_1} = v_{i_1}$, \dots, $x_{i_{k_i}} = v_{i_{k_i}}) = u \not = -\infty$ if it receives the fact $table\_max\_a_i(u, v_{i_1}, \ldots, v_{i_{k_i}})$. It is important to know that the encoding of a UTIL message omits the cells whose associated utilities are $-\infty$.

In addition to facts of the form \eqref{utilfact}, $a_i$ also informs $P_i$ about variables in its separator. 
%
%Moreover, the parent agent $a_p$, upon receiving facts of the form~\eqref{utilfact}, needs to understand the mapping between the values $v_{c_1}, \ldots, v_{c_k}$ to the which variables.  
%%for computing the agent $a_p$'s UTIL message. 
Thus, the encoding of the  UTIL message sent from the agent $a_i$ to the agent $P_i$ includes also
  atoms of the form:
\begin{eqnarray}
	& table\_\mathit{info}(a_i, a_{i_1}, x_{i_1}, lb_{i_1}, ub_{i_1}) \label{tableinfo1}\\
	& \cdot \cdot  \cdot \nonumber \\
	& table\_\mathit{info}(a_i, a_{i_{k_i}}, x_{i_{k_i}}, lb_{i_{k_i}} ub_{i_{k_i}}) \label{tableinfo2}
\end{eqnarray}
Each fact $table\_\mathit{info}(a_i, a_{i_t}, x_{i_t}, lb_{i_t}, ub_{i_t})$ informs $P_i$ that $x_{i_t}$ is a variable in the separator of $a_{i}$ whose domain is specified by  $lb_{i_t}$ (lower bound) and $ub_{i_t}$ (upper bound) and whose owner is 
$a_{i_t}$.\footnote{For simplicity, we assume that the domains $D_i$ are integer intervals.}
%
%\textcolor{red}{
It is also critical to note that the order of the atoms of the forms~\eqref{tableinfo1} - \eqref{tableinfo2} matters, since such order will 
	allow the respective Controller Module understand which variable belongs to the values stated in facts of the form~\eqref{utilfact} after reading such encoded UTIL messages.
%}
	
%We note that, based on the fact of the form~\eqref{utilfact} and the order of the facts of the form $table\_\mathit{info}$ from \eqref{tableinfo1} to \eqref{tableinfo2} that are included in an encoded UTIL message, the agent $a_p$ knows that $UTIL_{a_c}^{a_p}(x_{c_1} = v_{c_1}$, \dots, $x_{c_k} = v_{c_k}) = u$. 
%
%the mapping between  indicates the the order of the respective value assignments in every fact of the form~\eqref{utilfact} (i.e., it maps $v_{c_1}$ to $x_{c_1}$, $v_{c_2}$ to $x_{c_2}$, and so on).

\begin{example}\label{utilmessexample}
Consider again the DCOP in Example~\ref{exdcop}. The UTIL message, sent from the agent $a_3$ to the agent $a_2$, in Table~\ref{UTIL-DPOP}(a) is encoded as the set of the ASP atoms:
\begin{eqnarray}
&&table\_max\_a_3(16, 0, 0) \\
&&table\_max\_a_3(25, 0, 1) \\
&&table\_max\_a_3(25, 1, 0) \\
&&table\_max\_a_3(40, 1, 1) \\
&&table\_\mathit{info}(a_3, a_1, x_1, 0, 1)\\
&&table\_\mathit{info}(a_3, a_2, x_2, 0, 1)
\end{eqnarray}
\end{example}
\begin{example}
Similarly, considering again the DCOP in Example~\ref{exdcop}, the UTIL message sent from the agent $a_2$ to the agent $a_1$ in Table~\ref{UTIL-DPOP}(b) is encoded as the set of ASP facts:
\begin{eqnarray}
&&table\_max\_a_2(33, 0) \\
&&table\_max\_a_2(45, 1) \\
&&table\_\mathit{info}(a_2, a_1, x_1, 0, 1)
\end{eqnarray}
\end{example}

\subsubsection{VALUE Messages} 

In DPOP, each VALUE message sent from a parent agent $P_i$ to its child agents $a_i$ contains the optimal value assignment for variables owned by either the parent agent or the pseudo-parent agents of the agent $a_i$. Thus, in encoding a VALUE message, we use atoms of the form:
\begin{eqnarray}
	solution(a, x, v)
\end{eqnarray}  
where $v$ is the value assignment of the variable $x$ owned by the agent $a$ for an optimal solution.
\begin{example}
Consider again the DCOP in Example~\ref{exdcop}. The VALUE message sent from the agent $a_1$ to the agent $a_2$ is encoded as the ASP atom:
\begin{eqnarray}
&&solution(a_1, x_1, 1)
\end{eqnarray}
Similarly, the VALUE message sent from the agent $a_2$ to the agent $a_3$ is encoded as the set of the ASP atoms:
\begin{eqnarray}
&&solution(a_1, x_1, 1)\\
&&solution(a_2, x_2, 0)
\end{eqnarray}
\end{example}
%%%%%%%%%
%%%%%%%%%
\subsection{ASP-DPOP Implementation: Controller Module (CM)}
\label{controller}
\begin{figure}[t]
\vspace{-1em}
\centering
\includegraphics[scale=0.62]{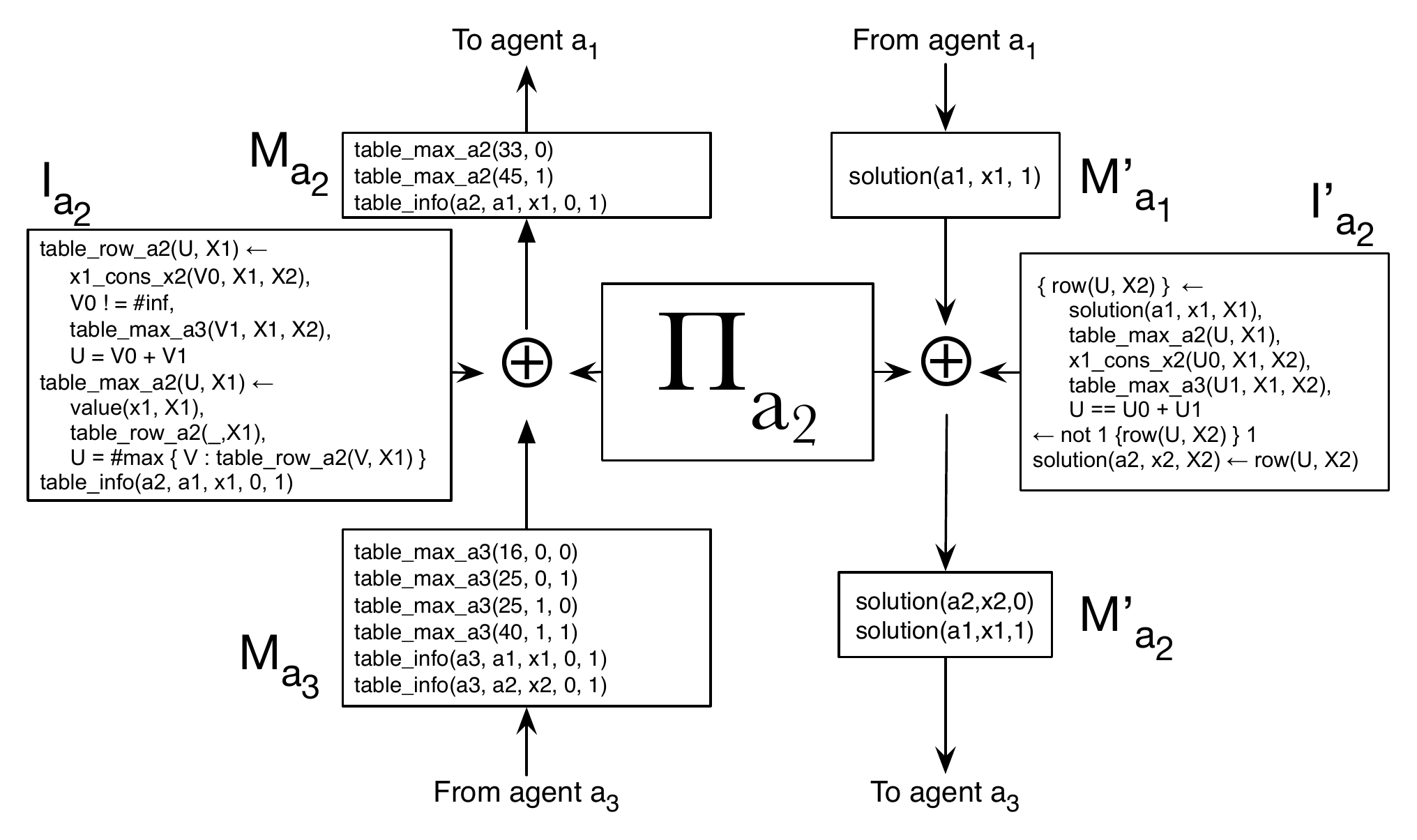}
%\vspace{-1.5em}
\caption{Phase 2 and Phase 3 of Agent $a_2$ in ASP-DPOP on DCOP in Example~\ref{exdcop}}
%\vspace{-1.5em}
\label{flow2}
\end{figure} 
The controller module in each ASP-DPOP agent $a_i$, denoted by $C_{a_i}$, consists of a set of Prolog
rules for communication (sending, receiving, and interpreting messages) and a set of rules for generating an ASP program that is used for the computations of a UTIL message and a VALUE message. In this subsection, we would like to discuss some code fragments to show how $C_{a_i}$ is structured.\footnote{The code listed in this section is a simplified version of the actual code for $C_{a_i}$, showing a condensed version of the clause bodies; however, it still gives a flavor of the implementation of $C_{a_i}$ and should be sufficiently explanatory for the purpose of the controller module.} To begin with, we will show how $C_{a_i}$ uses the Linda blackboard library of Prolog to exchange the messages. 
%%%%

There are three types of messages exchanged through the Linda blackboard; they are \emph{tree}, \emph{util}, and \emph{value} messages that are used in Phase 1, Phase 2, and Phase 3,
%
%the pseudo-tree generation phase, the UTIL propagation phase, and the VALUE propagation phase, 
respectively, of DPOP. For sending (resp.~waiting for) a message, we use the built-in
predicate $out/1$ (resp.~$in/1$) provided by the Linda library of Prolog. Every message is formatted as 
$message(From, To, Type, Data)$ where the arguments denote the agent who sends this message, the agent who should receive this message, the type of the message, and the data enclosed in the message, respectively. The implementation of the communication and the three phases of DPOP are described next.

\subsubsection{Sending Messages}  
The following Prolog rule generates a message of  type $t \in \{{tree, util, value}\}$, with  content $d$ ({\tt Content}), to be sent
 from an agent $a_i$ ({\tt From}) to an agent $a_{k}$ ({\tt To}):
 
{\small
\begin{verbatim}
% sending message
send_message(From,To,Type,Content) :- 
                   out(message(From,To,Type,Content)).
\end{verbatim}
} 

\subsubsection{Waiting for Messages}  
The following Prolog rule instructs agent $a_k$ ({\tt a\_k}) to wait for a message: 

{\small 
\begin{verbatim}
% waiting for a message
wait_message(From,a_k,Type,Data):- in(message(From,a_k,Type,Data)).
\end{verbatim}
} 
\noindent where $From, Type$, and $Data$ will be unified with the name of the agent who sent this message, the type of the message, and the data enclosed in the message, respectively.  

\subsubsection{Creating the Pseudo-Tree: Phase 1} \label{subsubphase1}

%Each ASP-DPOP agent, upon receiving a message, will perform some operations based on the type of the message received. For simplicity, we will refer to the three phases of DPOP to explain the code fragments of $C_{a_i}$ for each phase.

%\begin{itemize} 

%\item {\bf Phase 1:} 

In this phase, ASP-DPOP agents cooperate with each other to construct a pseudo-tree.
%
%calls existing distributed pseudo-tree construction algorithm to construct its pseudo-tree. 
For simplicity, we will show here the clauses in $C_{a_i}$ for generating a pseudo-tree by initiating a  DFS-traversal. We assume that the agent $a_i$ is not the root of the pseudo-tree. The agent $a_i$ waits for a $tree$ message from an agent $Parent$. The content  ($Data$) enclosed in such a $tree$ message is the set of \emph{visited} agents---i.e., the agents who have already started performing the DFS. Upon receiving a $tree$ message, $a_i$ will execute the following clauses:

{\small 
\begin{verbatim}
% pseudo-tree generation
generate_tree(Parent, Data):- 
             assign_parent(Parent),
             assign_pseudo_parent(Data),
             append(Data, [a_i], NewData), 
             depth_first_search(Parent, NewData).
       
% performing depth first search
depth_first_search(Parent, Data):-      
             find_next(Data, Next_Agent),
             (Next_Agent == none -> 
                     send_message(a_i, Parent, tree, Data) 
                     ;
                     assign_child(Next_Agent), 
                     send_message(a_i, Next_Agent, tree, Data),
                     wait_message(Next_Agent, a_i, tree, NewData),
                     depth_first_search(Parent, NewData)
             ).              
\end{verbatim}
}
\noindent Intuitively, upon receiving a $tree$ message from the agent $Parent$ enclosed with data $Data$, the agent $a_i$ will execute the clause $\texttt{generate\_tree(Parent, Data)}$. Specifically:
\begin{itemize}
	\item It executes the clause $\texttt{assign\_parent}/1$, where it adds to its $\Pi_{a_i}$ a fact of the form $parent(Parent)$;
	\item It executes the clause $\texttt{assign\_pseudo\_parent}/1$ which  adds to its $\Pi_{a_i}$ facts of the form $pseudo\_parent(X)$, where $X$ is a neighboring agent of  $a_i$ that appears in $Data$ such that $X \neq Parent$;
	\item It adds itself (i.e., a$\_$i) to the list of visited agents;
	\item It starts performing a DFS, by executing  the rule $\texttt{depth\_{first}\_search}/2$.
\end{itemize} 
In order to perform a DFS, the agent $a_i$ will execute the rule $\texttt{{find}\_next}/2$ to select
 a neighboring agent that will be  visited next; this selection is based on some heuristics 
 (i.e., the unvisited neighbor agent with the greatest number of neighbors). 
 If such an agent $Next\_Agent$ exists (i.e., $Next\_Agent \neq none$), then $a_i$ will:
 \begin{itemize}
 	\item Execute the rule $\texttt{assign\_child}/1$, used to add to its $\Pi_{a_i}$ a fact of the form $children(Next\_Agent)$;
 	\item Send a $tree$ message to the agent $Next\_Agent$;
 	\item Wait for the reply message from the agent $Next\_Agent$, which will provide  the updated list $NewData$ of visited agents;
 	\item  Recursively execute the rule $\texttt{depth\_{first}\_search}/2$.
 \end{itemize}
 Otherwise, if there is no agent $Next\_Agent$  (i.e., $Next\_Agent$ is equal to $none$), 
 then the agent $a_i$ will reply a $tree$ message to its agent $Parent$. This implies that the
  agent $a_i$ has  finished performing DFS at its branch. 

When the agent $a_i$ is chosen to be the root of the pseudo-tree, it executes the rule
 $\texttt{generate\_tree}(master, [\,])$ immediately without waiting for the $tree$ message from other agents. We note that an agent whose parent agent is $master$ will be the root of the pseudo-tree. It is also worth to notice that, at the end of this phase, the information about the parent, pseudo-parents, and child agents of each agent $a_i$ are added to $\Pi_{a_i}$ via facts of the forms $parent/1$, $pseudo\_parent/1$, and $children/1$, respectively. 
%%%%%%%
%%%%%%%
\begin{lemma}
\label{lemma/phase1linear}
In ASP-DPOP, Phase 1 requires a linear number of messages in $n$ where $n$ is the number of agents.
\end{lemma}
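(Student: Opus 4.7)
The plan is to observe that Phase 1 is essentially a standard distributed depth-first search on the constraint graph, where each tree edge of the resulting pseudo-tree is responsible for exactly two messages, and no messages are sent along back edges. Hence the total message count is $2(n-1)$, which is linear in the number of agents $n$.

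First I would inspect the \texttt{generate\_tree}/\texttt{depth\_first\_search} clauses to classify every $\texttt{send\_message}$ call. There are exactly two kinds of $tree$ messages emitted by a non-root agent $a_i$: (a) a ``downward'' message sent from $a_i$ to an unvisited neighbor $Next\_Agent$ chosen by \texttt{find\_next}/2, and (b) an ``upward'' reply message sent from $a_i$ to $Parent$ when \texttt{find\_next}/2 returns $none$. The root, invoked as $\texttt{generate\_tree}(master,[\,])$, only emits messages of type (a) and never sends an upward reply, since its ``parent'' is the fictitious $master$.

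Second I would argue that each downward (type (a)) message creates exactly one new parent-child edge in the pseudo-tree, since \texttt{find\_next}/2 skips agents already in the visited list $Data$, so every recipient is freshly added to the DFS tree. This yields a bijection between downward messages and the $n-1$ tree edges of the pseudo-tree. Each such downward message is later matched by exactly one upward reply from the recipient (generated in the $Next\_Agent == none$ branch executed by that recipient before returning control to its parent). Therefore the total number of $tree$ messages in Phase 1 is $2(n-1)$.

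The main obstacle, which is really only a bookkeeping issue rather than a conceptual one, is to make sure no hidden messages are sent along back edges: pseudo-parents are identified purely by inspecting the $Data$ payload via \texttt{assign\_pseudo\_parent}/1, which performs no communication, and \texttt{find\_next}/2 never selects an already-visited neighbor, so no extra messages are generated along non-tree edges. Once this is verified, the bound $2(n-1) \in O(n)$ follows immediately.
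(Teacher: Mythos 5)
Your proposal is correct and follows essentially the same route as the paper's own proof: both count $n-1$ downward \emph{tree} messages (one per newly visited, i.e., non-root, agent) and $n-1$ upward replies (one per non-root agent), giving $2(n-1)$ messages in total. The only cosmetic difference is that the paper prefaces the count with an explicit termination argument, which your bijection between downward messages and tree edges supplies implicitly.
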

\begin{proof}
We first prove that Phase 1 terminates. In fact, each agent $a_i$ in executing $\texttt{depth\_{first}\_search/2}$ will perform the rule $\texttt{{find}\_next}/2$ to select
 a neighboring agent, i.e., $Next\_Agent$, that is {\em not} in the set of visited agents to send a {\em tree} message to. $Next\_Agent$ can be seen as an unvisited neighboring agent of the agent $a_i$. 
 
 The agent $a_i$ then waits to receive the {\em tree} message from $Next\_Agent$ enclosing an updated set of visited agents, and again send a {\em tree} message to another unvisited neighboring agent if it exists. 
  We notice that the updated set of visited agents will be expanded with at least one agent that is $Next\_Agent$ since $Next\_Agent$ will add itself to the set of visited agents beyond receiving the {\em tree} message from the agent $a_i$. 
  
  Therefore, every agent $a_i$ will send at most $|N(a_i)|$ {\em tree} messages to its child agents, where $N(a_i)$ is the set of the neighboring agents of the agent $a_i$. If there is no unvisited neighboring agent left, the agent $a_i$ will send a {\em tree} message to its parent agent together with the most updated set of visited agents, and terminates executing $\texttt{depth\_{first}\_search/2}$. Thus, it terminates performing  the clause $\texttt{generate\_tree(Parent, Data)}$. As a consequence, we can conclude that Phase 1 terminates. 
   
Furthermore, considering a pseudo-tree that is generated at the end of Phase 1. We can realize that the set of visited agents which are passing among agents is expanded with a non-root agent if and only if there is a {\em tree} message sent from a parent agent to its child agent downward the pseudo-tree. It is worth to remind that the agent who is selected to be the root of the pseudo-tree adds itself to the set of visited agents at the beginning. Thus, there are $n-1$ {\em tree} messages that are sent downward the pseudo-tree. Moreover, every agent except the root agent will send exactly one {\em tree} message to its parent agent upward the pseudo-tree. Therefore, there are $n-1$ {\em tree} messages that are sent upward the pseudo-tree. Accordingly, in total there are $2\times (n-1)$ {\em tree} messages produced in Phase 1. This proves Lemma~\ref{lemma/phase1linear}.
\end{proof}
%%%%%
\subsubsection{Computing the UTIL Message: Phase 2} \label{phase2ASPDPOP}
%\item 
% {\bf Phase 2:} 
 
In the following, for simplicity, given an agent $a_i$, we assume that $a_p = P_i$. We will use $a_p$ and $P_i$ interchangeably. In this phase, each ASP-DPOP agent generates an ASP program for computing the UTIL message that will be sent to its parent.  In more detail, each agent  $a_i$ executes the following clause:
 
%{\bf Phase 2:} For DPOP, in this phase, each agent computes the optimal sum of utilities in its subtree for each admissible value combination of variables in its separator. The agent does so by {\bf (i.)} summing the utilities of its constraints with variables in its separator and the utilities in the UTIL messages received from its child agents, and then {\bf (ii.)} projecting out its own variables by optimizing over them. Similarly, to compute a UTIL message, each ASP-DPOP agent generates an ASP that includes a set of rules emulating the two steps that DPOP agents do. An answer set of such an ASP will be the encoding of a UTIL message as discussed in Section~\ref{encode_UTIL_VALUE_mess}. Specifically, assuming that the agent $a_i$ is not a leaf of the pseudo-tree, the agent $a_i$ first waits for all of the UTIL messages from its child agents, and then executes the following clause: 
 
{\small 
\begin{verbatim}
% Phase 2: UTIL Propagation Phase
perform_Phase_2(ReceivedUTILMessages):-
         compute_separator(ReceivedUTILMessages, Separator),
         assert(separatorlist(Separator)),
         compute_related_constraints(ConstraintList), 
         assert(constraintlist(ConstraintList)),
         generate_UTIL_ASP(Separator, ConstraintList),
         solve_answer_set1(ReceivedUTILMessages, Answer),
         store(Answer),
         send_message(a_i, a_p, util, Answer).
\end{verbatim}
}
% compute_related_variables(SeparatorSet, VariableList), 
%  assert(variablelist(VariableList)),
%
\noindent In particular, each agent $a_i$ with a set of  child agents $C_i$: 
\begin{itemize}
\item Waits to receive all of the UTIL messages from its children and combines them into a set of ASP facts.
Let $M_{a_k}$ be the  encoding of the UTIL message $UTIL^{a_i}_{a_k}$. We define a list $Received\mathit{UTIL}Messages$ as
follows. 
 
{%\small
\begin{eqnarray}\label{getUMess}
%	Received\mathit{UTIL}Messages &=& \bigcup_{a \in Children_{a_i}} \{ table\_max\_a(.)   \in M_{a }\}.
	Received\mathit{UTIL}Messages = \bigcup_{a \in C_i}  M_{a}.
\end{eqnarray}
}
When $a_i$ is a leaf ($C_i = \emptyset$), we set $Received\mathit{UTIL}Messages = [\: ]$. 
 
\item Computes its separator $sep_i$ by executing $\texttt{compute\_separator}/2$. This is 
realized using  {\bf (i)} the information about its parent and pseudo-parent agents added in $\Pi_{a_i}$ during
 Phase 1, and {\bf (ii)} the information about ancestors of the agent $a_i$ that are directly connected with descendants of the agent $a_i$, via facts of the form $table\_\mathit{info}$, contained in the UTIL messages received from its child agents; 
\item Computes  the set $R_{a_i}$ ($ConstraintList$) of the related constraints (i.e., executing $\texttt{compute\_related\_constraints}/1$) that is defined as~\eqref{formula/R}. 
%\begin{eqnarray}\label{constraintlist}
%	ConstraintList &=& R_i
%\end{eqnarray}
%where $R_i$ is defined in Formula~\eqref{formula/R}
%
\item Generates the information for its UTIL message (i.e., executing   $\texttt{generate\_{UTIL}\_ASP}/2$). Specifically, 
$\texttt{generate\_{UTIL}\_ASP}/2$ first creates a logic program, denoted by $I_{a_i}$, from the separator list, the constraint list, and the information from $M_{a_k}$ where $a_k \in C_i$. It then computes the answer set of $ \Pi_{a_i} \cup I_{a_i} \cup  (\bigcup_{a \in C_i}   M_{a })$ which contains the encoded UTIL message of the agent $a_i$.  Assume that 
\begin{itemize}
\item $sep_i = \{x_{s_1}, \dots, x_{s_{k}}\}$ (i.e., $Separator = [x_{s_1}, \dots, x_{s_k}]$ is the separator list of $a_i$);
\item $R_{a_i} = \{f_{r_1}, \dots, f_{r_{k'}}\}$ and $scp(f_{r_j}) = \{x_{r_{j_1}},\dots, x_{r_{j_w}}\}$ for $1 \leq j \leq k'$ (i.e., $ConstraintList = [f_{r_1}, \dots, f_{r_{k'}}]$);
\item $C_{i} = \{a_{c_1}, \dots, a_{c_l}\}$ and each $UTIL_{a_{c_t}}^{a_i}$ has $x_{c_{t_1}}, \dots, x_{c_{t_{z}}}$ as its dimensions for $1 \leq t \leq l$; 
%and the util message $UTIL^{a_i}_{a_{c_j}}$ 
%from $a_{c_j}$ to $a_i$; 
and
\item $a_p$ is the parent agent of the agent $a_i$.
%
%\item $M_{a_{c_j}}$ contains facts of the form $table\_max\_a_{c_j}(u, v_{c_{j_1}},\dots, v_{c_{j_{p_j}}})$ 
%for $1 \le j \le l$.
% and  are in $M_{a_{c_j}}$ (which is the UTIL message sent from the child agent $a_{c_j}$ to the agent $a_i$)
\end{itemize}   
$\texttt{generate\_{UTIL}\_ASP}/2$ creates $I_{a_i}$ with the following rules: 

{\small \begin{equation}
\begin{array}{l}
\texttt{table\_row}\_{a_i}(U, X_{s_1}, \dots, X_{s_k})   \leftarrow  \label{summingrule}\\
	\hspace{2.5cm}  f_{r_1}(V_{r_1}, X_{r_{1_1}},\dots, X_{r_{1_w}}), \\
	\hspace{2.5cm} \cdots \\
		\hspace{2.5cm} f_{r_{k'}}(V_{r_{k'}}, X_{r_{k'_1}},\dots, X_{r_{k'_w}}),  \\
		\hspace{2.5cm} V_{r_{1}} \,!\!= \,\texttt{\#inf}, \cdots, V_{r_{k'}} \,!\!= \,\texttt{\#inf},  \\
		\hspace{2.5cm} \texttt{table\_max}\_a_{c_1}(U_{c_1}, X_{c_{1_1}},\dots, X_{c_{1_{z}}}),  \\
		\hspace{2.5cm}\cdots \\
		\hspace{2.5cm} \texttt{table\_max}\_a_{c_l}(U_{c_l}, X_{c_{l_1}},\dots, X_{c_{l_{z}}}),  \\
		\hspace{2.5cm} U = V_{r_1} + \dots + V_{r_{k'}} + U_{c_1} + \dots + U_{c_l}. 
\end{array}
\end{equation}}

{\small \begin{equation}
\begin{array}{ll}
\texttt{table\_max}\_a_i(U, X_{s_1}, \dots, X_{s_k})  \leftarrow   \label{maxrule}\\
	\hspace{2.5cm} \texttt{value}(x_{s_1}, X_{s_1}),\\
	\hspace{2.5cm} \cdots\\
	\hspace{2.5cm} \texttt{value}(x_{s_k}, X_{s_k}), \\	
	\hspace{2.5cm} \texttt{table\_row}\_a_i(\_, X_{s_1}, \dots, X_{s_k}),\\
	\hspace{2.5cm} U = \texttt{\#max}\{V : \texttt{table\_row}\_a_i(V, X_{s_1}, \dots, X_{s_k}) \}.  
\end{array}
\end{equation}}

{\small
\begin{equation}
\begin{array}{l}
\texttt{table}\_\texttt{info}(a_i, a_{s_1}, x_{s_1}, lb_{s_1}, ub_{s_1}). \hspace{4.5cm} \label{info1}\\
\cdots\\
\texttt{table}\_\texttt{info}(a_i, a_{s_k}, x_{s_k}, lb_{s_k}, ub_{s_k}).  
\end{array}
\end{equation}}

%\vskip-0.1cm
%
%In generating the above rules, 
$\texttt{generate\_{UTIL}\_ASP}/2$ uses the information in $UTIL^{a_i}_{a_{c_t}}$ 
and $\Pi_{a_i}$ to generate the facts \eqref{info1}. In addition, each variable in the rules \eqref{summingrule}-\eqref{info1} corresponds to a variable name (e.g., $X_{s_1}$ corresponds to $x_{s_1}$ in the separator list; $X_{c_{1_1}}$ corresponds to $x_{c_{1_1}}$ in dimensions of $UTIL_{a_{c_1}}^{a_i}$; etc.). Therefore, due to the definition of the separator of  $a_i$ and $sep_i = \{x_{s_1}, \dots, x_{s_{k}}\}$, $X_{s_1}, \dots, X_{s_k}$ are guaranteed to occur on the right hand side of \eqref{summingrule}. In other words, $I_{a_i}$ is a {\em safe} program.

Intuitively, the rule of the form~\eqref{summingrule} creates the joint table for $a_i$---that is similar to the result of flattening $\big (\oplus_{a_{c_t} \in C_i} UTIL^{a_i}_{a_{c_t}} \big ) \oplus  \big (\oplus_{f \in R_{a_i}} f \big )$ into a table---
 given $UTIL^{a_i}_{a_{c_t}}$ and $R_{a_i}$. 
 %(i.e., the joint table here is the result of $\big (\oplus_{a_{c_t} \in C_i} UTIL^{a_i}_{a_{c_t}} \big )    \oplus  \big (\oplus_{f \in R_{a_i}} f \big )$)
 In addition,~\eqref{maxrule} computes the optimal utilities for each value combination of variables in the separator list.

\item Computes an answer set $A$ of the program $\Pi_{a_i} \cup I_{a_i} \cup \bigcup_{a \in C_i} M_a $ by executing $\texttt{solve\_answer\_set1}/2$, and extracts from $A$ the information for the  UTIL message  (i.e., $Answer$) that will
be sent from the agent $a_i$ to the agent $a_p$. 

\item Asserts the information in $Answer$ for later use in Phase 3 (i.e., executes $\texttt{store}(Answer)$).   

\item Sends encoded $UTIL^{a_p}_{a_i}$ to the parent agent $a_p$ (i.e., executes $\texttt{send\_message}/4$). 
\end{itemize}
 
%Notice that when $a_i$ is a leaf of the pseudo-tree,  $a_i$ does not have any child. Thus, its execution will start immediately with  $\mathit{perform}\_Phase\_2([ \:])$.  

\begin{example}\label{ex/Ia2}
As an example, we refer to the DCOP in Example~\ref{exdcop}. 
Specifically, we illustrate $I_{a_2}$ generated by the agent $a_2$. 
$Received\mathit{UTIL}Messages$ for the agent $a_2$ is the set of ASP facts given in Example~\ref{utilmessexample}, $Separator = [x_1]$, 
%$VariableList = [x_1,x_2]$, 
and $ConstraintList = [x1\_cons\_x2]$. The program $I_{a_2}$ includes the following rules:
\begin{eqnarray*}
\texttt{table\_row}\_a_2(U, X_1) & \leftarrow  &x_1\_cons\_x_2(V_0, X_1, X_2),  \\
					&&			    V_0 \,\, !\!= \#\mathit{inf}, \nonumber\\
					&&	                     \texttt{table\_max}\_a_3(V_1, X_1, X_2), \nonumber\\
					&&	                     U = V_0 + V_1. \nonumber \\
\texttt{table\_max}\_a_2(U, X_1) &\leftarrow &   \texttt{value}(x_1, X_1), \\
					&& \texttt{table\_row}\_a_2(\_, X_1) \nonumber\\
					&&  U = \texttt{\#max}\{V : \texttt{table\_row}\_a_2(V, X_1) \}. \nonumber\\
\texttt{table}\_\texttt{info}(a_2, a_1, x_1,0,1) & \leftarrow&
\end{eqnarray*}
\end{example}
The relationship between the ASP-based computation and Algorithm~\ref{algo1} is established in the following lemma. 
\begin{lemma} \label{l11} 
Let us consider a DCOP $\cal M$,  an agent $a_i \in {\cal A}$, and a pseudo-tree $T$. 
Let $a_i$ be an agent with $C_i = \{a_{c_1}, \dots, a_{c_l}\}$ and  
$M_{a_{c_t}}$ be the encoded $UTIL^{a_i}_{a_{c_t}}$ for $1 \le t \le l$. Furthermore, 
let us
assume that $a_p$ is the parent of $a_i$, 
  $sep_i = \{x_{s_1}, \dots x_{s_k}\}$, and 
$R_{a_i} = \{f_{r_1}, \dots, f_{r_k'}\}$ and $scp(f_{r_j}) = \{x_{r_{j_1}},\dots, x_{r_{j_w}}\}$ for $1 \leq j \leq k'$.
Then, the program $\Pi_{a_i} \cup I_{a_i} \cup (\bigcup_{a \in C_i} M_a)$ 
has a unique answer set $A$ and 
\begin{itemize} 
\item  $\texttt{table\_row}\_a_i(u, v_{s_1}, \dots, v_{s_k}) \in A$ iff 
%given the $JOIN_{a_i}^{P_i}$ achieved in Line~\ref{line/1} in Algorithm~\ref{algo1}, 
there exists a value combination $X$ for variables in $scp(JOIN_{a_i}^{P_i})$ such that  $JOIN_{a_i}^{P_i}(X) = u$ where 
%$X$ is a  partial variable assignment 
$\{ x_{s_1} = v_{s_1}, \dots, x_{s_k} =  v_{s_k}\} \subseteq X$ and $u \ne -\infty$; and  
%\item the set of atoms of the form  $table\_row\_a_i(u, v_{s_1}, \dots, v_{s_k})$ in $A$ encodes 
%$JOIN_i^{a_p} \bot_{sep_i}$; and 
%
%$\oplus_{a \in Children_{a_i}} UTIL^{a_i}_{a} \oplus_{j=1}^{j=k'} f_j$; and  
%
\item  $\texttt{table\_max}\_a_i(u, v_{s_1}, \dots, v_{s_k}) \in A$ iff
%given the UTIL message $UTIL_{a_i}^{P_i}$ computed in Line~\ref{line/1} in Algorithm~\ref{algo1}, it is
%
%the UTIL message from the agent $a_i$ to its parent agent $a_p$ 
%has 
$UTIL_{a_i}^{P_i}(x_{s_1} = v_{s_1}, \dots, x_{s_k} =  v_{s_k}) = u$ and $u \not = -\infty$.
\end{itemize} 
\end{lemma}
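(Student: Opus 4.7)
The plan is to first show that $\Pi_{a_i} \cup I_{a_i} \cup (\bigcup_{a \in C_i} M_a)$ has a unique answer set, and then to verify the two correspondences separately by tracing the semantics of rules~\eqref{summingrule} and~\eqref{maxrule}. For uniqueness I would observe that the program contains no negation-as-failure: $\Pi_{a_i}$ together with each $M_a$ consists of facts plus safe positive rules, while $I_{a_i}$ adds only positive-body rules plus one $\#\mathit{max}$ aggregate whose argument set is definable from already-derived $\texttt{table\_row}\_a_i$ atoms. The program is therefore stratified, and forward chaining (first the positive stratum, then the aggregate stratum) yields a unique minimal Herbrand model, which coincides with the unique answer set $A$.

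For the first bullet I would unfold the body of rule~\eqref{summingrule} and appeal to safety: every successful ground instantiation must bind every variable occurring in the body, and the definition of $R_{a_i}$ combined with the pseudo-tree separator property guarantees that the positions appearing in the constraint atoms $f_{r_j}(\cdot)$ and in the child-message atoms $\texttt{table\_max}\_a_{c_t}(\cdot)$ jointly range over exactly $scp(JOIN_{a_i}^{P_i}) = \alpha_i \cup sep_i$. Hence ground derivations correspond bijectively to complete assignments $X$ of $scp(JOIN_{a_i}^{P_i})$. The side conditions $V_{r_j} \,!\!=\, \#\mathit{inf}$, combined with the convention that each $M_{a_{c_t}}$ omits $-\infty$ cells, rule out $u = -\infty$, and the arithmetic equation $U = V_{r_1} + \cdots + V_{r_{k'}} + U_{c_1} + \cdots + U_{c_l}$ mirrors precisely the $\oplus$-operator from Definition~\ref{joinoperation}, whose iterated application is $JOIN_{a_i}^{P_i}$. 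The first correspondence then follows by matching the head arguments to the separator projection of $X$.

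For the second bullet I would apply the $\#\mathit{max}$ aggregate semantics to rule~\eqref{maxrule}: for each separator tuple $(v_{s_1}, \dots, v_{s_k})$, the aggregate collects, by the first bullet, exactly the finite utilities $\{JOIN_{a_i}^{P_i}(v_{s_1},\dots,v_{s_k},x') : x' \in X_i,\, JOIN_{a_i}^{P_i}(v_{s_1},\dots,v_{s_k},x') \ne -\infty\}$. Its maximum coincides with $\max_{x' \in X_i} JOIN_{a_i}^{P_i}(v_{s_1},\dots,v_{s_k},x')$ whenever the set is non-empty, which by Definition~\ref{projectionoperation} equals $UTIL_{a_i}^{P_i}(x_{s_1}=v_{s_1},\dots,x_{s_k}=v_{s_k})$. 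The guard $\texttt{table\_row}\_a_i(\_,X_{s_1},\dots,X_{s_k})$ in the body restricts emission to those separator tuples admitting at least one finite-utility extension, in harmony with the $-\infty$-omission convention used to encode $UTIL_{a_i}^{P_i}$.

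The principal obstacle will be the notational bookkeeping behind the variable-coverage step in the first bullet: rigorously justifying that the positive body of rule~\eqref{summingrule} ranges jointly over $\alpha_i \cup sep_i$ and over nothing more. This rests on verifying that every $x \in \alpha_i$ appears in $scp(f_{r_j})$ for some $f_{r_j} \in R_{a_i}$ or in $sep_{c_t}$ for some child $a_{c_t}$, and symmetrically for $sep_i$; these facts follow from the definitions of $R_{a_i}$ and of separators together with the pseudo-tree property that two constrained variables lie on the same root-to-leaf branch, but the verification is notationally heavy and is best isolated as an auxiliary sub-lemma before being plugged back into the main correspondence argument.
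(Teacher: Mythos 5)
Your proposal follows essentially the same route as the paper's proof: uniqueness of the answer set from safety and positivity of $\Pi_{a_i} \cup I_{a_i} \cup (\bigcup_{a \in C_i} M_a)$, the first bullet by unfolding ground instances of rule~\eqref{summingrule} and matching the body atoms and the sum against Definitions~\ref{def3}--\ref{joinoperation}, and the second bullet by reading rule~\eqref{maxrule} through the $\#\mathit{max}$ aggregate and Definition~\ref{projectionoperation} together with the $\#\mathit{inf}$ guards. Your treatment is somewhat more explicit than the paper's (notably on the aggregate stratum in the uniqueness argument and on isolating the variable-coverage claim $scp(JOIN_{a_i}^{P_i}) = \alpha_i \cup sep_i$ as a sub-lemma, which the paper disposes of in the text preceding the lemma), but it is the same argument.
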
 
\begin{proof}
Since $I_{a_i}$ is safe and 
$\Pi_{a_i} \cup I_{a_i} \cup (\bigcup_{a \in C_i} M_a)$ is a positive program, 
it has a unique answer set.  

%Assume that $a_i$ is an interior node\footnote{The proof for the case when $a_i$ is a leaf node is similar and is omitted for brevity.}  of $T$. 

By the definition of answer set, $\texttt{table\_row}\_a_i(u, v_{s_1}, \dots, v_{s_k}) \in A$ iff that there exists a rule $r$ of the form ~\eqref{summingrule} such that $\texttt{table\_row}\_a_i(u, v_{s_1}, \dots, v_{s_k}) $ is the head of $r$. 
It means that there exists a value assignment $Y$ for the variables occurring in $r$ such that 
the following conditions hold:
\begin{itemize} 
\item $\{ x_{s_1} = v_{s_1}, \dots, x_{s_k} =  v_{s_k}\} \subseteq Y$;

\item for each $1 \le j \le k'$, there exists $v_{r_j}  \ne \#inf$ such that 
$f_{r_j}(v_{r_{j_1}},\dots, v_{r_{j_w}}) =  v_{r_j}$  
and $\{x_{r_{j_1}}= v_{r_{j_1}},\dots, x_{r_{j_w}} = v_{r_{j_w}}\} \subseteq Y$; 
and 
\item for each $1 \le t \le l$, there exists $u_{c_t}$ such that 
 $\mathtt{table\_max\_a_{c_t}} (u_{c_t}, v_{c_{t_1}}, \dots, v_{c_{t_z}}) \in A$ and 
 $\{x_{c_{t_1}}= v_{c_{t_1}},\dots, x_{c_{t_z}} = v_{c_{t_z}}\} \subseteq Y$. 
By the construction of the algorithm,  
 $\mathtt{table\_max\_a_{c_t}} (u_{c_t}, v_{c_{t_1}}, \dots, v_{c_{t_z}}) \in A$
 implies that 
 $UTIL_{a_{c_t}}^{a_i}(x_{c_{t_1}} = v_{c_{t_1}}, \dots, x_{c_{t_z}} =  v_{c_{t_z}}) = u_{c_t}$ and $u_{c_t} \ne -\infty$.
\end{itemize} 
The conclusion of the first item follows directly from the definitions of the UTIL  message and  the $\oplus$ operator  (Definitions~\ref{def3}-\ref{joinoperation}) and the above conditions. 

%Conversely, 

%The body of the rule~\eqref{summingrule} encodes the computations of $JOIN_{a_i}^{P_i}$ as described in Lines~\eqref{line/3}-\eqref{line/2} in Algorithm~\ref{algo1}. Specifically, it computes the JOIN of the constraints $f_{r_1},\ldots,f_{r_{k'}}$ and the UTIL messages from the child agents of the agent $a_i$ (i.e., $table\_max\_{a_{c_t}} $ for $1 \le t \le l$). 

%Moreover, we observe that the head of the rule~\eqref{summingrule} keeps track only partial value assignments for variables in $sep_i$ while the remaining partial value assignments are omitted (i.e., the partial value assignments for variables owned by the agent $a_i$ are omitted). Thus, this proves the first item of the lemma.   
% 
%The body of the rule~\eqref{summingrule} encodes the projection on variables in $sep_i$ 
%of the join between the constraints $f_1,\ldots,f_{k'}$ 
%from $a_i$ and the UTIL messages $table\_max_{a_{c_j}}$ ($1 \le j \le l$). 
%In other words, it encodes  $JOIN_i^{P_i} \bot_{sep_i}$. Observe that 
%if $a_i$ is a leaf of $T$, then ~\eqref{summingrule} is simply the projection on variables in $sep_i$ 
%of the join between the constraints $f_1,\ldots,f_{k'}$.
%This proves the first item of the lemma. 
% $\oplus_{a \in Children_{a_i}} UTIL^{a_i}_{a} \oplus_{j=1}^{j=k'} f_j$. 

The second item of the lemma follows from the first item, the condition ${V_{r_{1}} \,!\!= \,\#\mathit{inf}, \cdots, V_{r_{k'}} \,!\!= \,\#\mathit{inf}}$ in the rule~\eqref{summingrule}, and  Definition~\ref{projectionoperation}.  
%
%Due to the conditions ${V_{r_{1}} \,!\!= \,\#\mathit{inf}, \cdots, V_{r_{k'}} \,!\!= \,\#\mathit{inf}}$ in the rule~\eqref{summingrule}, atoms of the form $\texttt{table\_row}\_a_i(u, v_{s_1}, \dots, v_{s_k})$ where $u = -\infty$ do not exist in $A$. Moreover, the rule~\eqref{maxrule} identifies the set of optimal values from atoms of the form 
%$\texttt{table\_row}\_a_i$; each of which is for each value combination of variables 
%in $sep_i$ 
%where its corresponding optimal value is not $-\infty$. This proves the second item of the lemma. 
\end{proof}

Lemma~\ref{l11} implies that Phase 2 of ASP-DPOP computes the same UTIL messages as DPOP, except that UTIL messages in ASP-DPOP omit the value assignments whose associated utilities are $-\infty$.

\subsubsection{Computing the VALUE Message: Phase 3} 
\label{subsubVALUE}
%\item  
% {\bf Phase 3:} 
 
Each ASP-DPOP agent computes the optimal value for its variables and sends an encoded VALUE message to its children. 
The process is described by the following Prolog rule:
 
 %Specifically, assuming that the agent $a_i$ is not the root of the pseudo-tree, the agent $a_i$ first waits for the VALUE message from its parent agent, and then executes the following clause:
 
%{\bf Phase 3:} For DPOP, in this phase, each agent determines the optimal value for its variables based on the optimal value for the variables owned by its ancestor agents (if exist) and the computation in Phase 2. Similarly, to compute a VALUE message, each ASP-DPOP agent generates an ASP that includes a set of rules for determining the optimal value for its variables. An answer set of such an ASP will be the encoding of a VALUE message as discussed in Section~\ref{encode_UTIL_VALUE_mess}. Specifically, assuming that the agent $a_i$ is not the root of the pseudo-tree, the agent $a_i$ first waits for the VALUE message from its parent agent, and then executes the following clause:
 
{\small
\begin{verbatim}
% Phase 3: VALUE Propagation Phase
perform_Phase_3(ReceivedVALUEMessage):-
         separatorlist(Separator),
         constraintlist(ConstraintList),
         generate_VALUE_ASP(Separator,ConstraintList),
         solve_answer_set2(ReceivedVALUEMessage, Answer),
         send_message_to_children(a_i, value, Answer).
\end{verbatim}
}
%  variablelist(VariableList),
\noindent In particular, the agent $a_i$: 
\begin{itemize}
\item Waits to receive the encoded VALUE message, denoted by $M'_{P_i}$, from its parent agent $P_i$. If the agent $a_i$ does not have a parent, i.e., it is the root of the tree, we set $Received\mathit{VALUE}Message = [ \: ]$;
\item Retrieves $sep_i$ (i.e., $Separator$) computed in Phase 2;
\item Retrieves $R_{a_i}$ 
%the set $VariableList$ and 
(i.e., $ConstraintList$) computed in Phase 2;
\item Executes the rule $\texttt{generate\_{VALUE}\_ASP}/2$ to create an ASP program, denoted by $I'_{a_i}$, 
%whose answer set is the encoding of the VALUE message of the agent $a_i$ sending to its child agents, 
from the separator list, the constraint list, and the information from $M_{a_k}$ where $a_k \in C_i$ from Phase 2.
 Assume that  
\begin{itemize}
\item $sep_i = \{x_{s_1}, \dots, x_{s_{k}}\}$ (i.e., $Separator = [x_{s_1}, \dots, x_{s_k}]$ is the separator list of $a_i$);
\item $R_{a_i} = \{f_{r_1}, \dots, f_{r_{k'}}\}$ and $scp(f_{r_j}) = \{x_{r_{j_1}},\dots, x_{r_{j_w}}\}$ for $1 \leq j \leq k'$ (i.e., $ConstraintList = [f_{r_1}, \dots, f_{r_{k'}}]$);
\item $C_{i} = \{a_{c_1}, \dots, a_{c_l}\}$ and each $UTIL_{a_{c_t}}^{a_i}$ has $x_{c_{t_1}}, \dots, x_{c_{t_{z}}}$ as its dimensions for $1 \leq t \leq l$; and 
%\item $sep_i = [x_{s_1}, \dots x_{s_k}]$;
%\item $R_i = [f_{1}, \dots, f_{k'}]$ where $scp(f_j) = \{x_{j_1},\dots, x_{j_m}\}$ for $1 \leq j \leq k'$;   
%\item $Children_i = \{a_{c_1}, \dots, a_{c_l}\}$ and $M_{a_{c_j}}$ 
%encodes $UTIL^i_{c_j}$; and 
\item The set of variables owned by the agent $a_i$ is $\alpha_i = \{x_{i_1}, \dots, x_{i_q}\}$.
\end{itemize} 
$\texttt{generate\_{VALUE}\_ASP}/2$ creates the logic program  $I'_{a_i}$ with following rules:
%
%\small

{\small \begin{equation}
\begin{array}{ll}
\{\texttt{row}(U, X_{i_1}, \dots, X_{i_q})\} & \leftarrow \texttt{solution}(\alpha(x_{s_1}), x_{s_1}, X_{s_1}),  \label{valueoptimize} \\
&\cdots \\
&\texttt{solution}(\alpha(x_{s_k}), x_{s_k}, X_{s_k}), \\
&\texttt{table\_max}\_a_i(U, X_{s_1}, \dots X_{s_k}),  \\
&f_{r_1}(V_{r_1}, X_{r_{1_1}},\dots, X_{r_{1_w}}),   \\
&\cdots \\
&f_{r_{k'}}(V_{r_{k'}}, X_{r_{k'_1}},\dots, X_{r_{k'_w}}),  \\
%&V_{r_{1}} \,!\!= \,\#\mathit{inf}, \, \cdots \, , V_{r_{k'}} \,!\!= \,\#\mathit{inf},  \\
&\texttt{table\_max}\_a_{c_1}(U_{c_1}, X_{c_{1_1}},\dots, X_{c_{1_{z}}}),  \\
&\cdots \\
&\texttt{table\_max}\_a_{c_l}(U_{c_l}, X_{c_{l_1}},\dots, X_{c_{l_{z}}}),  \\
&U == V_{r_1} + \dots + V_{r_{k'}} + U_{c_1} + \dots + U_{c_l}. \\
\end{array}
\end{equation}}

{\small \begin{equation}
\begin{array}{ll}
\hspace{1cm} \leftarrow \naf 1\{\texttt{row}(U, X_{i_1}, \dots, X_{i_q})\} 1. \label{valueoptimizeconstr} \\
\end{array}
\end{equation}}

{\small
	\begin{equation}
	\begin{array}{ll}
	\texttt{solution}(a_i, x_{i_1}, X_{i_1}) &\leftarrow \texttt{row}(U, X_{i_1}, \dots, X_{i_q}). \label{definesolution}\\
	& \cdots \\
	\texttt{solution}(a_i, x_{i_q}, X_{i_q}) &\leftarrow \texttt{row}(U, X_{i_1}, \dots, X_{i_q}). \
	\end{array}
	\end{equation}
	}

%}
%
Intuitively, the rule of the form~\eqref{valueoptimize} and the constraint of the form~\eqref{valueoptimizeconstr} select an optimal row based on:
{\bf (i)} The computation as done in Phase 2 (i.e., using the facts of the form $\texttt{table\_max}\_a_i$ that are stored in Phase 2), and 
{\bf (ii)} (for non-root agent only) the VALUE message that is received from its parent (i.e., facts of the form 
$\texttt{solution}/3$). The selected optimal row will define the optimal value of the  variables using  rules of the form  \eqref{definesolution}. Similar argument for the safety of $I_{a_i}$   allows us to conclude that $I_{a_i}'$ is also a {\em safe} program.

%In the rule of the form~\eqref{valueoptimize}, we note that every variable $X_{i_t}$ that appears in the head of the rule (i.e., $\{row(U, X_{i_1}, \dots, X_{i_q})\}$) also appears in the body of that rule (i.e., in $f_j(V_j, X_{j_1},\dots, X_{j_m})$ or $table\_max\_a_{c_j}(U_{j}, X_{c_{j_1}},\dots, X_{c_{j_p}})$), for $x_{i_t} \in \alpha_i$. This is clearly seen in Example~\ref{ex/I'a2}.

\item Executes $\texttt{solve\_answer\_set2}/2$, that executes the ASP solver to compute an answer set of the program $\Pi_{a_i} \cup I'_{a_i} \cup M'_{P_i} \cup (\bigcup_{a \in C_{i}} M_a)$. From that answer set, it collects all facts of the form $solution(a,x,v)$ and returns them as $Answer$---i.e., the encoding of the VALUE message from the agent $a_i$ to its child agents;
\item Executes $\texttt{send\_message\_to\_children}/3$ where it sends $value$ message with $Answer$ as its data to each  child agent (i.e., executing the respected clauses $\texttt{send\_message}/4$ multiple times).
\end{itemize}

%When $a_i$ is the root of the pseudo-tree, it will execute the clause $\mathit{perform}\_Phase\_3( [ \, ]$ after the completion of the computation of the optimal value.  

\begin{example}\label{ex/I'a2}
As an example, we refer to the DCOP in Example~\ref{exdcop} to illustrate $I'_{a_2}$ generated by the agent $a_2$.  $Received\mathit{UTIL}Messages$ for the agent $a_2$ is the set of ASP facts given in Example~\ref{utilmessexample}, 
$Separator = [x_1]$, 
%$VariableList = [x_1,x_2]$, 
$ConstraintList = [x1\_cons\_x2]$, 
%
%$sep_2 = [x_1]$, 
%%$VariableList = [x_1,x_2]$, 
% $R_2 = [x1\_cons\_x2]$, 
 and $\alpha_2 = \{x_2\}$. The program $I'_{a_2}$ includes following rules:
\begin{eqnarray*}
\{\texttt{row}(U, X_2)\} &\leftarrow& \texttt{solution}(a_1, x_1, X_1), \nonumber \\
			&		&\texttt{table\_max}\_a_2(U, X_1), \nonumber \\
			&		& x_1\_cons\_x_2(U_0, X_1, X_2), \nonumber \\
			%&		& U_0 \,\, !\!= \#\mathit{inf}, \nonumber\\
			&                & \texttt{table\_max}\_a_3(U_1, X_1, X_2) \nonumber \\
			&		  & U == U_0 + U_1. \\
			 &\leftarrow& \naf 1\{\texttt{row}(U, X_2)\} 1. \\
\texttt{solution}(a_2, x_2, X_2) & \leftarrow & \texttt{row}(U, X_2).			 
\end{eqnarray*}
\end{example}

\begin{lemma} \label{l12} 
Let us consider a DCOP $\cal M$, and an agent $a_i \in {\cal A}$ in a pseudo-tree $T$. 
Let $a_i$ be an agent with $C_i = \{a_{c_1}, \dots, a_{c_l}\}$ and  
$M_{a_{c_t}}$ be the encoding of $UTIL^{a_i}_{a_{c_t}}$ for $1 \le t \le l$. Furthermore, 
assume that $P_i$ is the parent agent of the agent $a_i$, 
  $sep_i = \{x_{s_1}, \dots x_{s_k}\}$,   
$R_{a_i} = \{f_{r_1}, \dots, f_{r_k'}\}$ where $scp(f_{r_j}) = \{x_{r_{j_1}},\dots, x_{r_{j_w}}\}$ for $1 \leq j \leq k'$, $\alpha_i = \{x_{i_1}, \dots, x_{i_q}\}$,  
and $M_{P_i}'$ encodes $\mathit{VALUE}^{a_i}_{P_i}$. Let  $Q = \Pi_{a_i} \cup I'_{a_i} \cup M_{P_i}' \cup (\bigcup_{a \in C_i} M_a))$. Then,  
\begin{itemize} 
\item For each answer set $A$ of $Q$, 
the assignment $x_{i_j} = v_{i_j}$ where $\texttt{solution}(a_i, x_{i_j}, v_{i_j}) \in A$ for $1\le j \le q$ belongs to 
a solution of $\cal M$; and 
\item if $x_{i_1} = v_{i_1}, \ldots, x_{i_q} = v_{i_q}$ is a value assignment for variables in $\alpha_i$ that belongs to 
a solution of $\cal M$, which contains $VALUE^{a_i}_{P_i}$,  then  
$Q$ has an answer set $A$ containing $\{\texttt{solution}(a_i, x_{i_j}, v_{i_j}) \mid 1 \le j \le q\}
\cup M_i'$. 
\end{itemize} 
\end{lemma}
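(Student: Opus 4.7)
The plan is to prove both directions by leveraging Lemma~\ref{l11}, which characterizes the $\texttt{table\_max}\_a_i$ atoms as encoding $UTIL^{P_i}_{a_i}$ and the $\texttt{table\_max}\_a_{c_t}$ atoms (contributed by $M_{a_{c_t}}$) as encoding $UTIL^{a_i}_{a_{c_t}}$. The structural backbone of $I'_{a_i}$ is the choice rule~\eqref{valueoptimize}, pinned down by the cardinality constraint~\eqref{valueoptimizeconstr} that forces exactly one $\texttt{row}$ atom to be derived, together with the projection rules~\eqref{definesolution} that read the $\texttt{solution}$ atoms for the variables in $\alpha_i$ off the chosen row.

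For the soundness direction (first bullet), I would begin from an arbitrary answer set $A$ of $Q$. Because~\eqref{valueoptimizeconstr} forces exactly one $\texttt{row}(u, v_{i_1}, \dots, v_{i_q}) \in A$, the body of~\eqref{valueoptimize} must be satisfied at that substitution. Reading the body, the separator values $v_{s_1}, \dots, v_{s_k}$ are supplied by the $\texttt{solution}$ atoms in $M_{P_i}'$; the atom $\texttt{table\_max}\_a_i(u, v_{s_1}, \dots, v_{s_k})$ certifies via Lemma~\ref{l11} that $u = UTIL^{P_i}_{a_i}(v_{s_1}, \dots, v_{s_k})$; and the equation $U == V_{r_1} + \dots + V_{r_{k'}} + U_{c_1} + \dots + U_{c_l}$, together with the constraint atoms and the children's $\texttt{table\_max}\_a_{c_t}$ atoms, forces $(v_{i_1}, \dots, v_{i_q})$ to realise the $\argmax$ over $\alpha_i$ of $JOIN^{P_i}_{a_i}$ at the fixed assignment $sep_i^{*} = (v_{s_1}, \dots, v_{s_k})$. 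This is exactly the selection performed on line~\ref{computesolution} of Algorithm~\ref{algo2}, so by induction down the pseudo-tree, using the invariant that any extension of $VALUE^{a_i}_{P_i}$ by a local $\argmax$ at $a_i$ remains part of a globally optimal complete assignment of $\mathcal{M}$, the assignment $x_{i_j} = v_{i_j}$ belongs to a solution.

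For the completeness direction (second bullet), I would go the other way: given a solution $\mathbf{x}^{*}$ of $\mathcal{M}$ that contains $VALUE^{a_i}_{P_i}$ and $x_{i_j} = v_{i_j}$ for $1 \le j \le q$, I would exhibit an answer set. The substitution induced by $\mathbf{x}^{*}$ satisfies every literal in the body of~\eqref{valueoptimize}: the separator values supply $\texttt{solution}$ atoms via $M_{P_i}'$; the constraint atoms $f_{r_j}(v_{r_j}, \dots)$ are derivable with finite $v_{r_j}$ because $\mathbf{x}^{*}$ attains a finite utility and hence avoids $-\infty$; the children's $\texttt{table\_max}\_a_{c_t}$ atoms are present by Lemma~\ref{l11}; and the resulting $U$ coincides with $UTIL^{P_i}_{a_i}$ at $sep_i^{*}$ precisely because $\mathbf{x}^{*}$ is optimal on the subtree rooted at $a_i$ given the fixed separator values. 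Enabling exactly that one $\texttt{row}$ atom satisfies~\eqref{valueoptimizeconstr} and fires the rules~\eqref{definesolution}, yielding an answer set $A$ containing $\{\texttt{solution}(a_i, x_{i_j}, v_{i_j}) \mid 1 \le j \le q\} \cup M_{P_i}'$.

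The principal obstacle is formalising the optimality invariant used in the soundness induction: that a locally $\argmax$ choice at $a_i$ (as enforced by the equation $U ==$ sum in~\eqref{valueoptimize}) combined with an already-optimal ancestor assignment extends to a globally optimal complete assignment. Once Lemma~\ref{l11} pins down $\texttt{table\_max}$ as an exact representation of the corresponding $UTIL$ matrix (with $-\infty$ cells absent), this reduces to unwinding the definitions of $\oplus$, $\bot$, and the $\argmax$ operation of Algorithm~\ref{algo2}, with the additional remark that the $\#\mathit{inf}$ guards baked into rule~\eqref{summingrule} and the absence of $-\infty$ $\texttt{table\_max}$ atoms together discard every combination violating any hard constraint.
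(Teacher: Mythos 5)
Your proposal is correct and follows essentially the same route as the paper's proof: both directions hinge on Lemma~\ref{l11} to identify the $\texttt{table\_max}$ atoms with the corresponding $UTIL$ entries, on the cardinality constraint~\eqref{valueoptimizeconstr} to force exactly one $\texttt{row}$ atom whose body witnesses the local $\argmax$ of Algorithm~\ref{algo2}, and on the soundness/completeness of DPOP's VALUE phase (your ``optimality invariant'') to lift that local choice to a global solution. The paper states these steps more tersely, invoking DPOP's correctness as a black box where you propose to unwind it by induction down the pseudo-tree, but the decomposition and key ingredients are identical.
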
 
\begin{proof}
Based on the construction of $I'_{a_i}$, it is possible to see that there exists at least  one rule of the form~\eqref{valueoptimize} in $Q$.
%Since the set of constraints $R_{a_i}$ is not empty for $a_i$, there exists at least one rule of the form ~\eqref{valueoptimize}  
%in $Q$. 
Observe that if the agent $a_i$ is  the root of the pseudo-tree $T$. Then, $M_{P_i}' = \emptyset$ and 
the rule~\eqref{valueoptimize} does not contain the atom of the form $\texttt{solution}(a, x, v)$. 
Since the program is safe and positive, we have that $Q$ is consistent. 

Because of the rule~\eqref{valueoptimizeconstr}, each answer set  $A$ of $Q$ contains exactly one atom of the form 
$\texttt{row}(u, v_{i_1}, \ldots, v_{i_q})$.  Also, from the rule~\eqref{valueoptimize}, we have that if 
$\texttt{row}(u, v_{i_1}, \ldots, v_{i_q}) \in A$ then there exists some 
$\texttt{table\_max}\_a_i(u, v_{s_1},\ldots,v_{s_k}) \in A$ which indicates that 
$u$ is the optimal utility corresponding to the assignment $x_{s_i} = v_{s_i}$ for $1 \le i \le k$ (Lemma~\ref{l11}). 
From the correctness of DPOP, this means that $\texttt{row}(u, v_{i_1}, \ldots, v_{i_q})$ 
encodes an optimal value assignment for variables owned by the agent $a_i$. This proves the first item. 

Assume that  $x_{i_1} = v_{i_1}, \ldots, x_{i_q} = v_{i_q}$ is a value assignment for variables in $\alpha_i$ that belongs to 
a solution of $\cal M$, which contains $VALUE^{a_i}_{P_i}$. Then, by the completeness of DPOP and Lemma~\ref{l11}, this implies that there exists some  $\texttt{table\_max}\_a_i(u, v_{s_1},\ldots,v_{s_k})$ such that $VALUE^{a_i}_{P_i}$ contains 
$x_{s_i} = v_{s_i}$ for $1 \le i \le k$. As such, there must exist the values for $f_{r_j}(.)$ and 
$\texttt{table\_max}\_a_{c_t}(.)$ 
such that there is a rule of the form~\eqref{valueoptimize} whose head is $\texttt{row}(u, v_{i_1}, \ldots, v_{i_q})$.
This means that $Q$ has an answer set containing $\texttt{row}(u, v_{i_1}, \ldots, v_{i_q})$, which proves 
the second item of the lemma.  
\end{proof} 

%\subsubsection{Performing Consecutively Phase 1, Phase 2, and Phase 3 }
\subsubsection{ASP-DPOP}

In this section, we will show the clause for ASP-DPOP agents to perform Phase 1, Phase 2, and Phase 3 consecutively.  
For simplicity, we omit the fragment of code of ASP-DPOP agents that allow them to cooperate with each other to select one agent as the root of the pseudo-tree---since it depends on the scores that are assigned to agents, according to a heuristic function. 

Let us remind that, if an agent $a_k$ is the root of the pseudo-tree, a fact of the form $\texttt{parent}(master, a\_k)$ will be added to $\Pi_{a_k}$. After an agent is selected as the root of the pseudo-tree, each agent will execute the clause 
$\texttt{start}$. 
Considering an agent $a_i$, the clause $\texttt{start}$ of the agent $a_i$ is described as follows:

{\small 
\begin{verbatim}
% Perform Phase 1, Phase 2, and Phase 3
start:-
     (parent(master, a_i) ->
          generate_tree(master, [])
          ;
          wait_message(Parent, a_i, tree, Data),
          generate_tree(Parent, Data)
     ),
     (children(_) ->
          get_UTILMessages_from_all_children(ReceivedUTILMessages),
          perform_Phase_2(ReceivedUTILMessages)
          ;
          perform_Phase_2([])
     ),
     (parent(master, a_i) ->
          perform_Phase_3([])
          ;
          wait_message(Parent, a_i, value, ReceivedVALUEMessage),
          perform_Phase_3(ReceivedVALUEMessage)   
     ).
\end{verbatim}
}

\noindent
In particular, each agent $a_i$:
\begin{itemize}
\item Checks whether  $a_i$ is the root of the pseudo-tree; this is realized by checking whether the fact of the form 	
	$\texttt{parent}(master, a\_i)$ is in $\Pi_{a_i}$: 
	\begin{itemize}
	\item  If the agent $a_i$ is the root of the pseudo-tree, it will execute $\texttt{generate\_tree}(master, [\,])$ that is defined in Section~\ref{subsubphase1}; otherwise, 
	\item If the agent $a_i$ is not the root of the pseudo-tree, it will 
%wait for $tree$ message from an agent $Parent$ that is later assigned as its parent agent. This is done by 
	execute $\texttt{wait\_message}(Parent, a\_i, tree, Data)$. Upon receiving the $tree$ message from the agent $Parent$ who is later assigned as its parent agent, the agent $a_i$ will execute 
	$\texttt{generate\_tree}(Parent, Data)$ that is defined in Section~\ref{subsubphase1}. 
\end{itemize}   
\item Checks whether the agent $a_i$ is (resp.~is not) a leaf of the pseudo-tree (i.e., this is realized by checking whether the fact of the form $\texttt{children}/1$ is not (resp.~is) in $\Pi_{a_i}$, respectively): 
\begin{itemize}
\item If the agent $a_i$ is not a leaf of the pseudo-tree, it executes 	
	{\small $\texttt{get\_UTILMessages\_from\_all\_children(ReceivedUTILMessages)}$}. 

Intuitively, the clause $\texttt{get\_UTILMessages\_from\_all\_children/1}$ iteratively executes
 $\texttt{wait\_message}(From, a\_i, util, Data)$ until the agent $a_i$ receives all $util$ messages from all of its child agents. The contents (i.e., $Data$) of all $util$ messages are combined into $\texttt{ReceivedUTILMessages}$. 
Then the agent $a_i$ executes $\texttt{perform\_Phase\_2(ReceivedUTILMessages)}$ that is defined in Section~\ref{phase2ASPDPOP}; otherwise, 
\item If the agent $a_i$ is a leaf of the pseudo-tree, it executes $\texttt{perform\_Phase\_2([\,])}$ that is defined in Section~\ref{phase2ASPDPOP}. 
\end{itemize}
\item Checks whether the agent $a_i$ is the root of the pseudo-tree or not: 
\begin{itemize}
%%%%%
%%%%%
\item  If the agent $a_i$ is the root of the pseudo-tree, it executes $\texttt{perform\_Phase\_3([\,])}$ that is defined in Section~\ref{subsubVALUE}; otherwise, 
\item  If the agent $a_i$ is not the root of the pseudo-tree, it executes 
	$\texttt{wait\_message}(Parent, a\_i, value, \mathit{ReceivedVALUEMessage})$ to wait for $value$ message from its parent agent. Then the agent $a_i$ executes $\texttt{perform\_Phase\_3(ReceivedVALUEMessage)}$ that is defined in Section~\ref{subsubVALUE}. 
\end{itemize}
\end{itemize}

\section{Theoretical Analysis}\label{sec:analysis}
In this section, we present some theoretical properties of ASP-DPOP including its soundness, completeness, and complexity.

\subsection{Soundness and Completeness}

The soundness and completeness of ASP-DPOP follow from Lemmas~\eqref{l11}--\eqref{l12} 
and the soundness and completeness of DPOP.

\begin{proposition}
ASP-DPOP is sound and complete in solving DCOPs.
\end{proposition}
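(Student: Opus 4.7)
The plan is to reduce soundness and completeness of ASP-DPOP to the corresponding properties of DPOP by a structural induction on the pseudo-tree $T$, using Lemmas~\ref{l11} and \ref{l12} as the bridges between DPOP's matrix-based computations and the ASP-based computations of the Specification and Controller modules. Concretely, I would argue that after Phase~1 the agents agree on a legal pseudo-tree (this follows from Lemma~\ref{lemma/phase1linear} and the standard correctness of distributed DFS), so from that point on the two algorithms operate on identical trees and can be compared stage by stage.

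First I would carry out a bottom-up induction on $T$ to establish that, for every agent $a_i$, the encoded UTIL message $M_{a_i}$ that ASP-DPOP actually sends to $P_i$ represents exactly the same multi-dimensional matrix $UTIL_{a_i}^{P_i}$ that DPOP would compute, restricted to the cells whose utility is different from $-\infty$. The base case (leaves) follows immediately from Lemma~\ref{l11}, since the received-UTIL list is empty and $I_{a_i}$ reduces to the join of $R_{a_i}$ followed by the projection onto $sep_i$. For the inductive step, assuming the property holds for every child $a_c \in C_i$, the messages $M_{a_c}$ that $a_i$ receives faithfully encode $UTIL_{a_c}^{a_i}$; Lemma~\ref{l11} then guarantees that the unique answer set of $\Pi_{a_i} \cup I_{a_i} \cup \bigcup_{a_c \in C_i} M_{a_c}$ contains exactly those $\texttt{table\_max}\_a_i(u,\ldots)$ atoms corresponding to the finite-utility entries of $UTIL_{a_i}^{P_i}$, which is precisely what Algorithm~\ref{algo1} produces up to the $-\infty$ omission.

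Next I would run a top-down induction on $T$ to handle Phase~3. The root has no parent, so $M'_{P_i}$ is empty and Lemma~\ref{l12} directly yields an optimal assignment for $\alpha_{\mathit{root}}$ that matches the one DPOP would pick in Line~\ref{computesolution} of Algorithm~\ref{algo2}; this is the soundness anchor. Inductively, if $a_i$'s parent $P_i$ sent a VALUE message $M'_{P_i}$ that extends to an optimal solution of $\cal M$, then by the first item of Lemma~\ref{l12} any answer set of $Q = \Pi_{a_i} \cup I'_{a_i} \cup M'_{P_i} \cup (\bigcup_{a_c \in C_i} M_{a_c})$ assigns values to $\alpha_i$ that extend $M'_{P_i}$ into an optimal solution, while the second item of Lemma~\ref{l12} guarantees that such an answer set exists (completeness), so the VALUE message forwarded to each child $a_c$ is a valid input for the induction step at $a_c$. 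Putting the two inductions together, once the leaves finish Phase~3 the union of all $\texttt{solution}(a, x, v)$ atoms emitted by the agents is a complete value assignment ${\bf x}$ that maximises $\sum_j f_j(x_{f_j})$, which is exactly the definition of a solution in Equation~\eqref{DCOPmaximizeproblem}.

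The main obstacle I expect is the careful handling of the $-\infty$ entries: because ASP-DPOP deliberately drops hard-constraint-violating rows from every UTIL message, the inductive hypothesis ``$M_{a_i}$ equals $UTIL_{a_i}^{P_i}$'' must be stated as equality on the finite cells together with a guarantee that no dropped row could ever participate in an optimal solution. This needs the observation, encoded by the literals $V_{r_j} \,!\!= \,\texttt{\#inf}$ in rule~\eqref{summingrule}, that once any component of a summand is $-\infty$ the entire sum is $-\infty$ and cannot be the argmax; combined with the convention that the optimal utility of a DCOP is taken to be $-\infty$ only when no feasible assignment exists at all, this lets the $-\infty$ cells be safely pruned without loss of soundness or completeness. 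The remaining verifications (that $I_{a_i}$ and $I'_{a_i}$ are safe and that the programs are positive, hence admit unique or consistent answer sets) are already noted inside the proofs of Lemmas~\ref{l11} and \ref{l12}.
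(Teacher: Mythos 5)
Your proposal is correct and follows essentially the same route as the paper's own argument: both reduce the claim to the soundness and completeness of DPOP via Lemmas~\ref{l11} and~\ref{l12}, with the $-\infty$ omission justified by observing that such assignments cannot occur in any optimal solution. The only difference is that you make the bottom-up and top-down inductions over the pseudo-tree explicit, which the paper leaves implicit in its summary-style proof.
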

\begin{proof}
Let us summarize how ASP-DPOP solves a DCOP $\cal M$:
\begin{list}{$\bullet$}{\leftmargin=12pt}
\item In Phase 1, each ASP-DPOP agent runs distributed DFS to generate a pseudo-tree. At the end of this phase, the information about the parent, pseudo-parents, and child agents of each agent $a_i$ are added to $\Pi_{a_i}$ via facts of the forms 
$\texttt{parent}/1$, $\texttt{pseudo\_parent}/1$ and $\texttt{children}/1$, respectively;   

\item In Phase 2, each ASP-DPOP agent $a_i$: {\bf (i)} waits to receive the encoding of  UTIL messages from all of its child agents (for non-leaf agents only), and  {\bf (ii)} generates the ASP program $I_{a_i}$ to compute its encoded UTIL message as an answer set of $\Pi_{a_i} \cup I_{a_i} \cup Received\mathit{UTIL}Messages$;

\item In Phase 3, each ASP-DPOP agent $a_i$:
{\bf (i)} waits to receive the encoded  VALUE message from its parent agent (for non-root agent only), and
	{\bf (ii)} generates the ASP program $I'_{a_i}$ to compute its encoded VALUE message as an answer set of $\Pi_{a_i} \cup I'_{a_i} \cup M'_{P_i} \cup Received\mathit{UTIL}Messages$;
\end{list}
The soundness and completeness of ASP-DPOP follows from the soundness and completeness of DPOP and the following observations: 
\begin{list}{$\bullet$}{\leftmargin=12pt}
\item Phase 1 of ASP-DPOP generates a possible pseudo-tree of $\cal M$. 
\item Assuming that ASP-DPOP and DPOP use the same pseudo-tree $T$ then: 
\begin{list}{$-$}{\leftmargin=12pt}
\item Phase 2 of ASP-DPOP computes the same UTIL messages as DPOP except that they omit the value assignments whose associated utilities are $-\infty$ (Lemma~\ref{l11}). However, for DPOP, ignoring those value assignments in UTIL message will not affect DCOP solution since such value assignments are not included in any solution (i.e., otherwise the total utility is $-\infty$).      
\item Phase 3 of ASP-DPOP computes all possible solutions (VALUE messages) as DPOP (Lemma~\ref{l12}). 
\end{list}
\end{list}
%Therefore, as the answer sets computed in Phase 2 is correct and complete (see Lemma~\ref{l11}), and the answer sets computed in Phase 3 is correct and complete (see Lemma~\ref{l12}), the correctness and completeness of ASP-DPOP follow quite trivially from that of DPOP since ASP-DPOP emulates the computation and communication operations of DPOP.
\end{proof}
%%%%%
%%%%%

\subsection{Complexity}
Given $d = \max_{1\leq i \leq n}|D_i|$ and $w^* = \max_{1\leq i \leq n}|sep_i|$ where $n$ is the number of agents,\footnote{$w^*$ is also known as the \emph{induced width} of a pseudo-tree~\cite{DBLP:books/0016622}.} we have the following properties:
\begin{property}
The number of messages required by ASP-DPOP is bounded by $\mathcal{O}(n)$. 
\end{property}
\begin{proof}
In ASP-DPOP, one can observe that: {\bf (1)} 
Phase 1 requires a linear number of messages in $n$ (Lemma~\ref{lemma/phase1linear});
%as it 
%implements the pseudo-tree generation algorithm used in ~\cite{petcu:05};
%
\iffalse
\memo{
   S: I am not sure if my change makes sense; however,
   you cannot use ~\cite{petcu:05} to make the claim if you do not implement exactly what they do. 
   you are proving tat your implementation requires that much!  
   Perhaps the following would do: \\
Add a lemma to Page 17 indicating that {\bf (1)} is correct. There, it should be argued that 
    {\tt generate\_tree} will complete whenever {\tt depth\_first\_search} complete and this process 
    will take $\mathcal{O}(n)$ message. Intuitively, ASP-DPOP will require that all 
     {\tt generate\_tree} complete. Each completes when its 
   {\tt depth\_first\_search}  stops, i.e., when there is no neighbor to consider.  
   Since each call to the next agent wiall add one more agent to Data and it cannot be repeated, 
   each agent will make at most $|C_i|-1$ calls. When    
   {\tt depth\_first\_search} completes then Data contains all nodes of the subtree reachable  
   This should be done also for the next Property. The second one is easy since Data is the content of 
   message it only has the size of at most $n$. 
}
\fi
%
%operates distributed DFS; 
{\bf (2)} Phase 2 requires $(n-1)$ UTIL messages; and
{\bf (3)} Phase 3 requires $(n-1)$ VALUE messages.
Thus, the number of messages required by ASP-DPOP is bounded by $\mathcal{O}(n)$.
\end{proof}

\begin{property}\label{prop2}
The size of messages required by ASP-DPOP is bounded by $\mathcal{O}(d^{w^*})$.
\end{property}

\begin{proof}
In ASP-DPOP, 
\begin{list}{$\bullet$}{\topsep=1pt \parsep=0pt \itemsep=1pt \leftmargin=12pt}
\item Phase 1 produces messages whose size is linear in $n$. This is because 
%~\cite{petcu:05};
the {\em tree} message is of the form~$send\_message(a\_i, Next\_Agent, tree, Data)$ where the content $(Data)$ that dominates the size of the {\em tree} message is the set of visited agents whose size is linear in $n$;
\item Phase 2 produces encoded UTIL messages; each message consists of:
	{\bf (i)} a fact of the form $\texttt{table\_max}\_a_i$ for each cell in the corresponding UTIL message in DPOP
	where its associated optimal utility is not $-\infty$, and 
	{\bf (ii)} $|sep_i|$ facts of the form $\texttt{table}\_\texttt{info}$. Therefore, the size of encoded UTIL messages is bounded by $\mathcal{O}(d^{w^*})$ as the bounded size of UTIL messages in DPOP~\cite{petcu:05}; and 
\item Phase 3 produces encoded VALUE messages; each message consists of a fact of the form $\texttt{solution}/3$ for each value assignment of a variable in the corresponding VALUE message in DPOP. Therefore, Phase 3 produces encoded VALUE messages whose sizes are bounded by
$\mathcal{O}(|\mathcal{X}|) = \mathcal{O}(n)$ as we assume each agent owns exactly one variable.
\end{list}
Thus, the size of messages required by ASP-DPOP is bounded by $\mathcal{O}(d^{w^*})$
\end{proof}
\begin{property}
The memory requirements in ASP-DPOP are exponential and bounded by $\mathcal{O}(d^{w^*})$.
\end{property}
\begin{proof}
In ASP-DPOP:
\begin{list}{$\bullet$}{\topsep=1pt \parsep=0pt \itemsep=1pt \leftmargin=12pt}
\item In Phase 1, the memory requirements are bounded by $\mathcal{O}(n)$ 
%(as we assume each agent owns exactly one variable) 
because it needs to keep track of the set of visited agents and the set of its neighboring agents; 
\item In Phase 2, the memory requirements are bounded by $\mathcal{O}(d^{w^*})$ since, in computing the answer set of $P = \Pi_{a_i} \cup I_{a_i} \cup Received\mathit{UTIL}Messages$, the ASP solver needs to ground all the rules of the forms~\eqref{summingrule} and \eqref{maxrule}, and these dominate the number of other facts or ground instances of other rules in $P$;
\item In Phase 3, the memory requirements are bounded by $\mathcal{O}(d^{w^*})$ since, in computing the answer set of $P' = \Pi_{a_i} \cup I'_{a_i} \cup M'_{P_i} \cup Received\mathit{UTIL}Messages$, the ASP solver needs to ground all rules of the form~\eqref{valueoptimize}, where the number of facts of the form $\texttt{table\_max}\_a_i$ is bounded by $\mathcal{O}(d^{w^*})$ (see Property~\ref{prop2}). Moreover, the number of such ground instances dominates the number of other facts and ground instances of other rules in $P'$.
\end{list}
Thus, the memory requirement in ASP-DPOP is exponential and bounded by $\mathcal{O}(d^{w^*})$.
\end{proof}
%%%%%%%

\section{Experimental Results}
\label{sec:expr}
The goal of this section is to provide an experimental evaluation of ASP-DPOP. In particular, we
compare ASP-DPOP against the original DPOP as well as other 
three  implementations of complete DCOP solvers: \emph{Asynchronous Forward-Bounding (AFB)}, \emph{Hard Constraint-DPOP (H-DPOP)}, and \emph{Open-DPOP (ODPOP)}.
AFB~\cite{gershman:09}  is a complete search-based algorithm to solve DCOPs.
H-DPOP~\cite{kumar:08} is a complete DCOP solver that, in addition,  propagates hard constraints to prune the search space.
\emph{ODPOP}~\cite{ODPOP:PetcuF06} is an optimization algorithm for DCOPs, which combines some advantages of both search-based algorithms and dynamic-programming-based algorithms.
For completeness of the paper, in this section, we will first provide some background about these
three solvers, discuss about FRODO platform~\cite{leaute:09}---a publicly-available implementation of DPOP, AFB, and ODPOP---and then provide some experimental results.

\subsection{Background on AFB}\label{sec:AFB}
The \emph{asynchronous forward-bounding algorithm (AFB)}~\cite{gershman:09}, to the best of our knowledge, is the most recent complete search-based algorithm to solve DCOPs. 
AFB makes use of  a \emph{Branch and Bound} scheme to identify a complete value assignment that minimizes the aggregate utility of all constraints.
%\footnote{It is important to know that AFB implemented in FRODO platform~\cite{leaute:09} is able to solve DCOPs as either minimization problems or maximization problems.} 
To do so, agents expand a partial value assignment as long as the lower bound on its aggregate utility does not exceed the global bound, which is the aggregate utility of the best complete value assignment found so far. 

In AFB, the state of the search process is represented by a data structure called \emph{Current Partial Assignment (CPA).} The CPA starts empty at some initializing agent, which records the value assignment of its own variable and sends the CPA to the next agent. The aggregate utility of a CPA is the accumulated utility of constraints involving the value assignment it contains. Each agent, upon receiving a CPA, adds a value assignment of its own variable such that the CPA's aggregate utility will not exceed the global upper bound. If it cannot find such an assignment, it backtracks by sending the CPA to the last assigning agent, requesting that agent to revise its value assignment.  

Agents in AFB process and communicate CPAs asynchronously. An agent that succeeds to expand the value assignment of the received CPA sends forward copies of the updated CPA, requesting all unassigned agents to compute lower bound estimates of the aggregate utility of the current CPA. The assigning agent will receive these estimates asynchronously over time, and use them to update the lower bound of the CPA. 
Using these bounds, the assigning agent can detect if any expansion of this partial value assignment in the current CPA will cause it to exceed the global upper bound, and in such cases it will backtrack. Additionally, a time stamp mechanism for forward checking is used by agents to determine the most updated CPA and to discard obsolete CPAs.

\subsection{Background on H-DPOP}\label{sec:H-DPOP}
In H-DPOP~\cite{kumar:08}, the authors consider how to leverage the hard constraints that may exist in the problem in a dynamic programming framework, so that only feasible partial assignments are computed, transmitted, and stored~\cite{kumar:08}. To this end, they encode combinations of assignments using \emph{Constrained Decision Diagrams (CDDs).} Basically, CDDs eliminate all inconsistent assignments and only include utilities corresponding to value combinations that are consistent. The resulting algorithm, H-DPOP, a hybrid algorithm that is based on DPOP, uses CDDs to rule out infeasible assignments, and thus compactly represents UTIL messages. 

\smallskip
A CDD $\mathcal{G} = \langle \Gamma, G\rangle$ encodes the consistent assignments for a set of constraints $\Gamma$ in a rooted direct acyclic graph $G = (V, E)$ by means of constraint propagation. A node in $G$ is called a CDD node. The terminal nodes are either \emph{true} or \emph{false} implying consistent or inconsistent assignment, respectively. By default, a CDD represents consistent assignments omitting the \emph{false} terminal. 

The H-DPOP algorithm leverages the pruning power of hard constraints by using CDDs to effectively reduce the message size. As in DPOP, H-DPOP has three phases: the pseudo-tree construction, the bottom-up UTIL propagation, and top-down VALUE propagation. The pseudo-tree construction and VALUE propagation phases are identical to ones of DPOP. In the UTIL propagation phase, the UTIL message, instead of being a multidimensional matrix, is a \emph{CDDMessage}. 
\begin{definition}\label{CDDMessage}
A CDDMessage $M_i^j$ sent by an agent $a_i$ to agent $a_j$ is a tuple $\langle \vec{u}, \mathcal{G}\rangle$ where $\vec{u}$ is a vector of utilities, and $\mathcal{G}$ is a CDD defined over variables in $sep_i$. The set of constraints for $\mathcal{G}$ is $\Gamma = \{f_j \mid scp(f_j) \subseteq sep_i\}$.
\end{definition} 
In the UTIL propagation phase, H-DPOP defines different JOIN and PROJECTION operations. 
Observe that,  based on Definition~\ref{CDDMessage}, an H-DPOP agent $a_i$ can  access a constraint whose scope is
a  subset of its separator, but that is not owned by $a_i$ itself. For example, considering the DCOP in Example~\ref{exdcop}, in H-DPOP, the UTIL message sent from the agent $a_3$ to the agent $a_2$ will have information about the constraint $x1\_cons\_x_2$ that is not owned by the agent $a_3$ since $scp(x1\_cons\_x_2) = \{x_1, x_2\} \subseteq sep_3$. This might be undesirable
in situations where distribution of the computation is tied to privacy of information.

\subsection{Background on ODPOP}

\emph{ODPOP}~\cite{ODPOP:PetcuF06} is an optimization algorithm for DCOP, which combines some advantages of both search-based algorithms and dynamic-programming-based algorithms. ODPOP always uses linear size messages, which is similar to search, and typically generates as few messages as DPOP. It does not always incur the worst complexity which is the same with the complexity of DPOP, and on average it saves a significant amount of computation and information exchange. This is achieved because agents in ODPOP use a \emph{best-first order} for value exploration and an \emph{optimality criterion} that allows them to prove optimality without exploring all value assignments for variables in their separator.  
ODPOP also has 3 phases as DPOP:

\smallskip
\noindent {\bf Phase 1} ({\em DFS traversal}) is the same with Phase 1 in DPOP to construct a pseudo-tree.

\smallskip
\noindent  {\bf Phase 2} (\emph{ASK/GOOD}) is an iterative, bottom-up utility propagation process where each agent repeatedly sends ASK messages to its child agents, asking for valuations (\emph{GOODs}), until it is able to compute the \emph{suggested optimal value assignment} (GOOD) for variables in its separator. It then sends that GOOD, together with the respective utility that is obtained in the subtree rooted at this agent, as a GOOD message to its parent agent. This phase finishes when the root received enough GOODs to determine the optimal value assignment for its variables. 

In more detail, in Phase 2, any child agent delivers to its parent agent a sequence of GOOD messages, each of which explores a different value assignment for variables in its separator, together with the corresponding utility. In addition, ODPOP uses a method to propagate GOODs so that every agent always reports its GOODs in oder of \emph{non-increasing utility}, provided that all of their child agents also follow this order. We can see that, DPOP agents receives all GOODs that are grouped in single messages (i.e., UTIL messages). In contrast, ODPOP agents send GOODs on demand (i.e., when it receives an ASK message) individually and asynchronously as long as GOODs have non-increasing utilities. 

As a consequence, each ODPOP agent $a_i$ can determine when it has received \emph{enough} GOODs from its child agents in order to be able to determine a GOOD to send to its parent agent $P_i$. At that time, $a_i$ will not send ASK message any more since 
any additional received GOODs  will not affect the GOOD that was determined. If $a_i$ later receives more ASK message from $P_i$ for having next GOOD, $a_i$ will continue to request more GOODs from its child agents until it can determine the next GOOD to report to $P_i$.  

Since GOODs are always reported in order of non-increasing utility, the first GOOD that is generated at the root agent is the optimal value assignment for its variable. As a consequence, the root agent will be able to generate this solution without having to consider all value assignments for its variables.

\smallskip
\noindent  {\bf Phase 3} ({\em VALUE propagation}) is similar to Phase 3 in DPOP. 
The root agent initiates the top-down VALUE propagation by sending a VALUE message to its child agents, informing them about its optimal value assignment for its variables. Subsequently, each agent $a_{i'}$, upon receiving a VALUE message, will determine its optimal value assignment for its variables based on the computation (in Phase 2) of the first GOOD message generated whose associated value assignment is consistent with the one in the received VALUE message.

\subsection{Discussion on FRODO Platform}

In our experiment, we will compare the  performance of ASP-DPOP against DPOP,  AFB, and ODPOP; in particular, we use
the implementation of the latter three systems that is  publicly available in the  \emph{FRODO} platform~\cite{leaute:09}.
%. To do so,we will use \emph{Frodo} platform~\cite{leaute:09} that is a publicly-available implementation of DPOP, AFB, and ODPOP. 
It is important to observe that, at the implementation level, all DCOP solvers that are implemented within FRODO follow the \emph{simplifying assumption} that each agent owns exactly one variable. This assumption is common practice in the DCOP literature~\cite{modi:05,petcu:05,gershman:09,ottens:12}. 
%Thus, nodes of a constraint graph can be seen as representing either agents or variables without causing any confusion.   
However, agents in DCOP problems used in our experiments own multiple variables. 
We will discuss in this subsection the pre-processing technique (i.e., \emph{decomposition}, also known as \emph{virtual agents}) that FRODO uses to transform a general DCOP with multiple variables per agent into a new DCOP with one variable per agent.

FRODO creates a virtual agent for each variable in a DCOP. A distinct variable is assigned to each virtual agent, so that this formulation satisfies the simplifying assumption. 
We say that a virtual agent $a'_i$ belongs to a real agent $a_i$ in DCOP if the virtual agent $a'_i$ owns a variable that is owned by the real agent $a_i$. 
In FRODO, the solving algorithm is executed on each virtual agent, while intra-agent messages 
%(i.e., messages are passed between virtual agents that represent for variables owned by the same original agent in DCOP) 
(i.e., messages are passed between virtual agents that belong to the same real agent)
are only simulated and discounted in the calculation of computation cost (e.g., number of messages and messages' size).

%With the simplifying assumption,
% %that each agent owns exactly one variable, 
 %nodes of a constraint graph and nodes of a pseudo-tree can be seen as representing either agents or variables without causing any confusion.

 Let $M$ be a DCOP with multiple variables per agent, and $M'$ be a new DCOP with one variable per agent that is constructed from $M$.  Let us assume that we apply DPOP to solve both $M$ and $M'$, using  the same heuristics to construct the pseudo-trees. We can observe that each node in the pseudo-tree used to solve $M'$ represents  a virtual agent, while each node in a pseudo-tree of ASP-DPOP represents  a real agent. It is possible to see that, the number of inter-agent messages (i.e., messages that are passed between virtual agents that belong to different real agents) produced in solving $M'$ may be greater than the number of UTIL messages produced in solving $M$, 
 %even though the intra-agent messages are discounted, 
 depending on their respective pseudo-trees. Therefore, to minimize the total number of inter-agent messages, FRODO constructs pseudo-trees where virtual agents that belong to the same real agent stay as close as possible to each other.     

%\textcolor{red}{
It is important to summarize that, to handle a general DCOP with multiple variables per agent, FRODO first transforms it into a new DCOP with one variable per agent (introducing virtual agents), and then executes the resolution algorithms on each agent of the new DCOP.
To the best of our knowledge, there is not any formal discussion about the relationship between pseudo-trees whose nodes represent real agents and pseudo-trees whose nodes represent virtual agents.
However, it is believed that given a pseudo-tree $T'$ whose nodes represent virtual agents, there always exists a pseudo-tree $T$ whose nodes represent real agents such that $T$ is compatible with $T'$. 
Intuitively, by compatible we mean that it is possible to construct $T$ from $T'$ as follows:
\begin{itemize}
\item If the root of $T'$ is a node representing the virtual agent $a'_i$ that belongs to a real agent $a_i$, the root of $T$ is the node representing $a_i$; and
\item If there is at least one tree edge (resp.~back edge) connecting two nodes that represent virtual agents $a'_{i_1}$ and $a'_{i_2}$ in $T'$, there is a tree edge (resp.~back edge) connecting the two nodes that represent real agents $a_{i_1}$ and $a_{i_2}$ in $T$ such that $a'_{i_1}$ and $a'_{i_2}$ belong to $a_{i_1}$ and $a_{i_2}$ respectively.    
\end{itemize}
It is worth to notice that, in our experiments, we ensure that all algorithms use the same heuristics to construct their pseudo-trees. We also observe that all pseudo-trees that are constructed using ASP-DPOP are compatible with the corresponding pseudo-trees that are constructed using FRODO.
%}    
 
\subsection{Experimental Results}  
We implement two versions of the ASP-DPOP algorithm---one that uses ground programs, which we call 
\emph{``ASP-DPOP (facts),''} and one that uses non-ground programs, which we call 
\emph{``ASP-DPOP (rules).''} In addition, as the observation made about H-DPOP in Section~\ref{sec:H-DPOP}, we also implemented a variant of H-DPOP, called PH-DPOP, which stands for 
\emph{Privacy-based H-DPOP,} that restricts the amount of information that each agent can access to the amount common in most DCOP algorithms, including DPOP and ASP-DPOP. Specifically, agents in PH-DPOP can only access their own constraints and, unlike H-DPOP, cannot access their neighboring agents' constraints. 

In our experiments, we compare both versions of ASP-DPOP against DPOP~\cite{petcu:05}, H-DPOP~\cite{kumar:08}, PH-DPOP,  AFB~\cite{gershman:09}, and ODPOP~\cite{ODPOP:PetcuF06}.
We use a publicly-available implementation of DPOP, AFB, and ODPOP~\cite{leaute:09} and an implementation of H-DPOP provided by the authors. We ensure that all algorithms use the same heuristics to construct their pseudo-trees for fair comparison. We measure the runtime of the algorithms using the simulated runtime metric~\cite{sultanik:07}. All experiments are performed on a Quadcore 3.4GHz machine with 16GB of memory. If an algorithm fails to solve a problem, it is due to memory limitations; other
types of failures are specifically  stated.  We conduct our experiments on random graphs~\cite{erdos59a}, where we systematically modify the domain-independent parameters, and on comprehensive optimization problems in power networks~\cite{gupta:13}.

\begin{table*}[htbp]
\small
\begin{minipage}[h]{1\textwidth}
\resizebox{0.90\linewidth}{!}{
\begin{tabular}{|c|r|r|r|r|r|r|r|r|r|r|r|r|}
\cline{1-13}
\multirow{2}{*}{$|\mathcal{X}|$} & \multicolumn{2}{c|}{DPOP} & \multicolumn{2}{c|}{H-DPOP} & \multicolumn{2}{c|}{PH-DPOP} & \multicolumn{2}{c|}{AFB} & \multicolumn{2}{c|}{ODPOP} & \multicolumn{2}{c|}{ASP-DPOP}  \\
  & Solved & Time & Solved  & Time & Solved  & Time & Solved  & Time & Solved  & Time & Solved  & Time \\
\cline{1-13}
\cline{1-13}
5 & 100\% & 36 & 100\% & 28 & 100\% & 31 & 100\% & 20 & 100\% & 31& 100\% & 779 \\
10 & 100\% & 204 & 100\% & 73 & 100\% & 381 & 100\%& 35 & 100\%& 164& 100\% & 1,080 \\
15 & 86\% & 39,701 & 100\% & 148 & 98\% & 67,161 & 100\%& 53 & 100\%& 3,927&100\% & 1,450 \\
20 & 0\% & - & 100\% & 188 & 0\% & - & 100\%& 73 &74\%\footnote{ODPOP cannot solve 13 instances (out of 50 instances) in this experiment in which there are 12 instances unsolved due to timeout and 1 instance unsolved due to memory limitation.}& 242,807& 100\% & 1,777 \\
25 & 0\% & - & 100\% & 295 & 0\% & - & 100\% & 119 & 0\% & - & 100\% & 1,608 \\
\cline{1-13}
\multicolumn{5}{c}{} \\
\cline{1-13}
\multirow{2}{*}{$p_1$} & \multicolumn{2}{c|}{DPOP} & \multicolumn{2}{c|}{H-DPOP} & \multicolumn{2}{c|}{PH-DPOP}& \multicolumn{2}{c|}{AFB} & \multicolumn{2}{c|}{ODPOP} & \multicolumn{2}{c|}{ASP-DPOP} \\
 & Solved & Time & Solved  & Time & Solved  & Time & Solved  & Time & Solved  & Time & Solved  & Time \\
\cline{1-13}
\cline{1-13}
0.4 & 100\% & 1,856 & 100\% & 119 & 100\% & 2,117 & 100\% & 46 & 100\%& 1,819& 100\% & 1,984 \\
0.5 & 100\% & 13,519 & 100\% & 120 & 100\% & 19,321 & 100\% & 50 & 100\%& 2,680 & 100\% & 1,409 \\
0.6 & 94\% & 42,010 & 100\% & 144 & 100\% & 54,214 & 100\% & 51 & 100\% & 3,378 & 100\% & 1,308\\
0.7 & 56\% & 66,311 & 100\% & 165 & 88\% & 131,535 & 100\% & 54 & 100\% & 8,063 & 100\% & 1,096 \\
0.8 & 20\% & 137,025 & 100\% & 164 & 62\% & 247,335 & 100\% & 60 & 100\%& 36,748&  100\% & 1,073 \\
\cline{1-13}
\end{tabular}
}

\vspace{1em}
\resizebox{0.90\linewidth}{!}{
\begin{tabular}{|c|r|r|r|r|r|r|r|r|r|r|r|r|}
\cline{1-13}
\multirow{2}{*}{$|D_i|$} & \multicolumn{2}{c|}{DPOP} & \multicolumn{2}{c|}{H-DPOP} & \multicolumn{2}{c|}{PH-DPOP} & \multicolumn{2}{c|}{AFB} & \multicolumn{2}{c|}{ODPOP} & \multicolumn{2}{c|}{ASP-DPOP} \\
 & Solved & Time & Solved & Time & Solved & Time & Solved  & Time & Solved  & Time & Solved  & Time  \\
\cline{1-13}
\cline{1-13}
4 & 100\% & 782 & 100\% & 87 & 100\% & 1,512 & 100\% & 46 & 100\% & 285 & 100\% & 1,037 \\
6 & 90\% & 28,363 & 100\%& 142 & 98\% & 42,275 & 100\% & 50 &100\%& 4,173& 100\% & 1,283 \\
8 & 14\% & 37,357 & 100\%& 194 & 52\% & 262,590 & 100\% & 60 & 98\% & 71,512 & 100\% & 8,769 \\
10 & 0\% & - & 100\% & 320 & 8\% & 354,340 & 100\% & 70 & 78\%\footnote{ODPOP cannot solve 11 instances (out of 50 instances) in this experiment in which there are 10 instances unsolved due to timeout and 1 instance unsolved due to memory limitation.}& 227,641& 100\% & 29,598 \\
12 & 0\% & - & 100\% & 486 &  0\% & - & 100\% & 82 & 30\%\footnote{ODPOP cannot solve 35 instances (out of 50 instances) in this experiment in which there are 29 instances unsolved due to timeout and 6 instance unsolved due to memory limitation.}& 343,756& 100\% & 60,190 \\
\cline{1-13}
\multicolumn{5}{c}{} \\
\cline{1-13}
\multirow{2}{*}{$p_2$}  & \multicolumn{2}{c|}{DPOP} & \multicolumn{2}{c|}{H-DPOP} & \multicolumn{2}{c|}{PH-DPOP} & \multicolumn{2}{c|}{AFB} & \multicolumn{2}{c|}{ODPOP} & \multicolumn{2}{c|}{ASP-DPOP} \\
 & Solved & Time & Solved & Time & Solved  & Time & Solved  & Time & Solved  & Time & Solved  & Time\\
\cline{1-13}
\cline{1-13}
0.3 & 90\% & 38,114 & 100\% & 464 & 76\%  & 189,431 & 100\% & 103 & 84\%\footnote{ODPOP cannot solve 8 instances (out of 50 instances) in this experiment due to timeout.}& 221,515 & 18\% & 120,114 \\
0.4 & 86\% & 48,632 & 100\% & 265 & 84\% & 107,986 & 100\% & 71 & 94\%\footnote{ODPOP cannot solve 3 instances (out of 50 instances) in this experiment due to timeout.}& 109,961& 86\% & 50,268 \\
0.5 & 94\% & 38,043 & 100\% & 161  & 96\% & 71,181 & 100\% & 57 & 100\%& 14,790& 92\% & 4,722 \\
0.6 & 90\% & 31,513 & 100\% & 144  & 98\% & 68,307 & 100\% & 52 & 100\%& 13,519 & 100\% & 1,410 \\
0.7 & 90\% & 39,352 & 100\% & 128  & 100\% & 49,377 & 100\% & 48 & 100\%& 1,730& 100\% & 1,059 \\
0.8 & 92\% & 40,526 & 100\% & 112  & 100\% & 62,651 & 100\% & 57 & 100\% & 1,137 & 100\% & 1,026 \\
\cline{1-13}
\end{tabular}
}
\end{minipage}
\caption{Experimental Results on Random Graphs}
\label{randomGraphResults}
\end{table*}

\smallskip \noindent \textbf{Random Graphs:} We create an $n$-node network, whose constraint density $p_1$ produces $\lfloor n\cdot(n-1)\cdot p_1\rfloor$ edges in total~\cite{erdos59a}. 
In our experiments, we vary the number of variables $|\mathcal{X}|$, the domain size $|D_i|$, the constraint density $p_1$, and the constraint tightness $p_2$. For each  experiment, we vary only one parameter and fix the rest to their ``default'' values: $|\mathcal{A}| = 5, |\mathcal{X}| = 15, |D_i| = 6, p_1 = 0.6, p_2 = 0.6$. The timeout is set to $5\times10^6$ ms. Table~\ref{randomGraphResults}  shows the percentage of instances solved (out of 50 instances) and the average simulated runtime (in ms) for the solved instances. We do not show the results for ASP-DPOP (rules), as the utilities in the utility table are randomly generated, leading to no differences w.r.t. ASP-DPOP (facts). We make the following observations:

\begin{itemize}
\item ASP-DPOP is able to solve more problems than DPOP and is faster than DPOP when the problem becomes more complex (i.e.,~increasing $|\mathcal{X}|$, $|D_i|$, $p_1$, or $p_2$). The reason is that ASP-DPOP is able to prune a significant portion of the search space thanks to hard constraints. ASP-DPOP does not need to explicitly represent the rows in the UTIL table that are infeasible, unlike DPOP. The size of the search space pruned increases as the complexity of the instance grows, resulting in a larger difference between the runtime of DPOP and ASP-DPOP. 

\item H-DPOP is able to solve more problems and solve them faster than DPOP, PH-DPOP, and ASP-DPOP. The reason is that agents in H-DPOP utilize more information about the neighbors' constraints to prune values. In contrast, agents in ASP-DPOP and PH-DPOP only utilize information about their own constraints to prune values and agents in DPOP do not prune any values. 

\item ASP-DPOP is able to solve more problems and solve them faster than PH-DPOP. The reason is that agents in PH-DPOP, like agents in H-DPOP, use constraint decision diagram (CDD) to represent their utility tables, and it is expensive to maintain and perform join and project operations on this data structure. In contrast, agents in ASP-DPOP are able to capitalize on highly efficient ASP solvers to maintain and perform operations on efficient data structures thanks to their highly optimized grounding techniques and use of portfolios of heuristics.

\item AFB is able to solve more problems and solve them faster than every other algorithm. We attribute this observation mainly to the relatively small number of variables in this experiment---i.e., the maximum number of variables in this experiment is 25 (see the first table in Table~\ref{randomGraphResults}).   

\item ASP-DPOP is able to solve more problems and solve them faster than ODPOP when the problem becomes more complex (i.e., increasing $|\mathcal{X}|, |\mathcal{D}_i|, p_1,$ or $p_2$). The reason is that ODPOP does not prune the search space based on hard constraints, unlike  ASP-DPOP. On one hand, ODPOP intuitively sends each row of UTIL messages per time on demand and uses optimality criteria to prove optimality without exploring all value assignments for the respective variables. However, these techniques are not as efficient as pruning the search space in ASP-DPOP when the problem becomes more complex. Thus, ODPOP reaches a timeout in most of its unsolvable problems. It is also worth to observe that there are some problems that  ODPOP cannot solve due to memory limitations. We attribute this to the fact that ODPOP  maintains in its search space \emph{infeasible} value assignments that result in a utility equal to $- \infty$, and thus the search space of ODPOP is not as optimized as that of ASP-DPOP.     

\end{itemize}
%%%%%%
%%%%%%
%%%%%%
{\center
\begin{table*}[htbp]
\small
\begin{minipage}[h]{1\textwidth}
\resizebox{0.90\linewidth}{!}{
\begin{tabular}{|c|r|r|r|r|}
\cline{1-5}
\multirow{2}{*}{$|\mathcal{X}|$} & \multicolumn{2}{c|}{AFB} & \multicolumn{2}{c|}{ASP-DPOP}  \\
  & Solved  & Time & Solved  & Time \\
\cline{1-5}
\cline{1-5}
 %100&100\% & 6,178 & 100\% & 9,398 \\
150& 100\% & 31,156  & 100\% & 37,862 \\
200& 100\% & 117,913  & 100\% & 115,966 \\
 250& 0\%\footnote{AFB cannot solve any instance (out of 50 instances) in this experiment due to timeout.} & -  & 100\% & 298,361 \\
\cline{1-5}
\end{tabular}
}
\end{minipage}
\caption{Additional Experimental Results on Random Graphs}
\label{randomGraphResults_extend}
\end{table*}
}

%\textcolor{red}{
\smallskip \noindent \textbf{Additional Experiment Results on Random Graphs:} We claimed earlier that AFB is able to solve more problems and solve them faster than every other algorithm, mainly due to the relative small number of variables in the experiments reported in Table~\ref{randomGraphResults}. To directly confirm such  claim, we extended our experiments on random graphs, by increasing the number of variables (i.e., $|\mathcal{X}| \geq 150$) and keeping the other parameters to their ``default'' values (i.e., $|\mathcal{A}| = 5, |D_i| = 6, p_1 = 0.6, p_2 = 0.6$).\footnote{We thank one of the reviewers for his/her suggestion to have this additional experiment on random graphs.} The timeout is also set to $5\times10^6$ ms. Table~\ref{randomGraphResults_extend} shows the percentage of instances solved (out of 50 instances) and the average simulated runtime (in ms) for the solved instances. The runtime results for DPOP, H-DPOP, PH-DPOP, and ODPOP are not included in Table~\ref{randomGraphResults_extend} because they run out of memory in solving all of the problems in this domain.\footnote{It is worth to note that H-DPOP runs out of memory while constructing its CDDs in solving all of the problems in this domain.}
%The results in Table~\ref{randomGraphResults_extend} appear consistently with those shown in Table~\ref{randomGraphResults} except the followings:
%\begin{itemize}
%\item H-DPOP is able to solve more problems and solve them faster than AFB.
%\item ASP-DPOP is able to solve more problems than AFB (i.e., when $|\mathcal{X}| = 250$) and solve them faster than AFB when $|\mathcal{X}| \geq 200$.
%\end{itemize}
We observe that ASP-DPOP is able to solve more problems than AFB (i.e., when $|\mathcal{X}| = 250$) and solve them faster than AFB when $|\mathcal{X}| \geq 200$. We attribute this observation mainly to the large number of variables in this experiment. We also notice that AFB can scale up to solve problems of up to $|\mathcal{X}| = 200$ (such scalability will not be seen in the experiment on power network problems described below). The main reason is that the problems in our experiment on random graphs are ``purely hard'' with the default values $p_1 = 0.6$ and $p_2 = 0.6$. This means that the size of the set of complete feasible value assignments, which are complete value assignments that do not result in a utility of $+ \infty$, is small (about less than $5$ in all of the problems in this domain). AFB  backtracks much earlier before it can achieves a complete feasible value assignment. As a result, AFB can solve problems with number of variables up to 200 before it exceeds the time out threshold.       
%}
%%%%%%%
%%%%%%%
%%%%%%%%
%%%%%%%%

\begin{figure*}[htbp]
	\centering
	\parbox{2.25in}{
		\centering \footnotesize
		%$|A|$
		\includegraphics[width=2.25in]{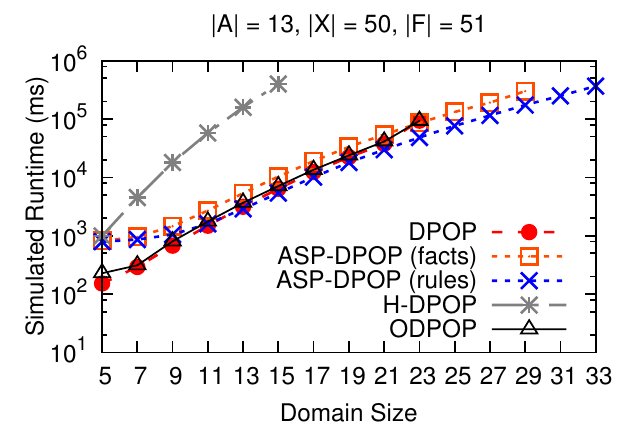}
		\\ \vspace{-0.5em} \hspace{1em} (a) 13 Bus Topology} \hspace{-0em}
	\parbox{2.25in}{
		\centering \footnotesize
		\includegraphics[width=2.25in]{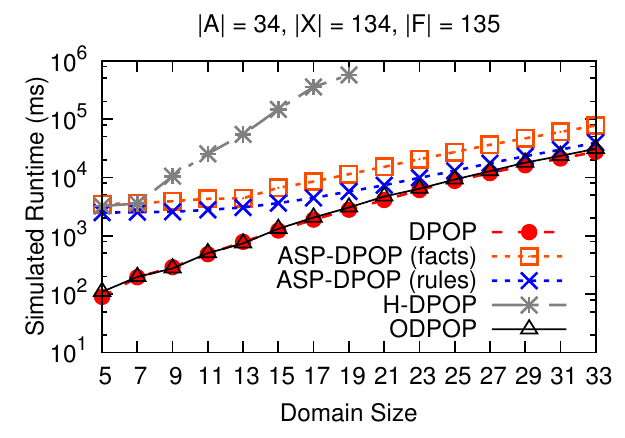}
		\\ \vspace{-0.5em} \hspace{1em} (b) 34 Bus Topology} \hspace{-0em}
	\parbox{2.25in}{
		\centering \footnotesize
		\includegraphics[width=2.25in]{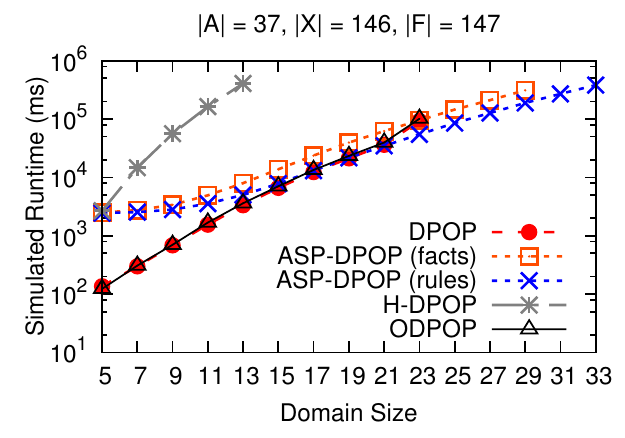}
		\\ \vspace{-0.5em} \hspace{1em} (c) 37 Bus Topology}
	\caption{Runtime Results on Power Network Problems}
	\label{fig:result-graphs}
\end{figure*}

\begin{table*}[htbp]
	\centering \small
	\begin{tabular}{|c|r|r|r|r|r|r|r|r|}
		\cline{1-9}
		\multirow{2}{*}{$|D_i|$} & \multicolumn{4}{c|}{13 Bus Topology}  & \multicolumn{4}{c|}{34 Bus Topology}  \\
		& 5 & 7 & 9 & 11 & 5 & 7 & 9 & 11  \\
		\cline{1-9}
		\cline{1-9}
		H-DPOP & 6,742 & 30,604 & 97,284 & 248,270 & 1,437 & 4,685 & 11,617 & 24,303 \\
		DPOP & 3,125 & 16,807 & 59,049 & 161,051 &  625 & 2,401 & 6,561 & 14,641  \\
		ODPOP & 6 & 6 & 6 & 6 &5 & 5 & 5 & 5 \\
		ASP-DPOP & 10 & 14 & 18 & 22 & 10 & 14 & 18 & 22   \\
		\cline{1-9}
		\cline{1-5}
		\multirow{2}{*}{$|D_i|$}  & \multicolumn{4}{c|}{37 Bus Topology} \\
		&  5 & 7 & 9 & 11  \\
		\cline{1-5}
		\cline{1-5}
		H-DPOP & 6,742 & 30,604 & 97,284 & 248,270 \\
		DPOP &  3,125 & 16,807 & 59,049 & 161,051 \\
		ODPOP & 6 & 6 & 6 & 6 \\
		ASP-DPOP & 10 & 14 & 18 & 22  \\
		\cline{1-5}
		
		\noalign{\smallskip}
		\multicolumn{9}{c}{(a) Largest UTIL Message Size} \\
		\noalign{\smallskip}
		\cline{1-9}
		\multirow{2}{*}{$|D_i|$} & \multicolumn{4}{c|}{13 Bus Topology}  & \multicolumn{4}{c|}{34 Bus Topology} \\
		& 5 & 7 & 9 & 11 &  5 & 7 & 9 & 11  \\
		\cline{1-9}
		\cline{1-9}
		H-DPOP & 19,936 & 79,322 & 236,186 & 579,790 & 20,810 & 57,554 & 130,050 & 256,330 \\
		DPOP & 9,325 & 43,687 & 143,433 & 375,859 &  9,185 & 29,575 & 73,341 & 153,923  \\
		ODPOP & 391 & 1,430 & 6,281 & 11,979 &2,197 & 4,122 & 12,124 & 12,870 \\
		ASP-DPOP & 120 & 168 & 216 & 264 &  330 & 462 & 594 & 726  \\
		\cline{1-9}
		\cline{1-5}
		\multirow{2}{*}{$|D_i|$}  & \multicolumn{4}{c|}{37 Bus Topology} \\
		&  5 & 7 & 9 & 11  \\
		\cline{1-5}
		\cline{1-5}
		H-DPOP & 38,689 & 133,847 & 363,413 & 836,167 \\
		DPOP &  17,665 & 71,953 & 215,793 & 531,025 \\
		ODPOP & 1,896 & 5,572 & 18,981 & 28,285 \\
		ASP-DPOP & 360 & 504 & 648 & 792 \\
		\cline{1-5}
		
		\noalign{\smallskip}
		\multicolumn{9}{c}{(b) Total UTIL Message Size} \\
	\end{tabular}
	\caption{Message Size Results on Power Network Problems} 
	\label{mesSizeSM}
\end{table*}

\smallskip \noindent \textbf{Power Network Problems:} A {\em customer-driven microgrid} (CDMG), one possible instantiation of the smart grid problem, has recently been shown to subsume several classical power system sub-problems (e.g.,~load shedding, demand response, restoration)~\cite{jain:12}. In this domain, each agent represents a node with consumption, generation, and transmission preferences, and a global cost function. Constraints include the power balance and no power loss  principles, the generation and consumption limits,  and the capacity of the power line between nodes. The objective is to minimize a global cost function. CDMG optimization problems are well-suited to be modeled with DCOPs due to their distributed nature. Moreover, as some of the constraints in CDMGs (e.g.,~the power balance principle) can be described in functional form, they can be exploited by ASP-DPOP (rules). For this reason, both ``ASP-DPOP (facts)'' and ``ASP-DPOP (rules)'' are used in this domain. 

We use three network topologies defined using the IEEE standards~\cite{IEEEStandards} and vary the domain size of the generation, load, and transmission variables of each agent from 5 to 31. The timeout is set to $10^6$ ms. Figure~\ref{fig:result-graphs} summarizes the runtime results. As the utilities are generated following predefined rules~\cite{gupta:13}, we also show the results for ASP-DPOP (rules). Furthermore, we omit results for PH-DPOP because they have identical runtime---the amount of information used to prune the search space is identical for both algorithms in this domain. We also measure the size of UTIL messages, where we use the number of values in the message as units, and the intra-agent UTIL messages (i.e., messages are passed between virtual agents that belong to the same real agent) are accounted  for fair comparison.
%
%\textcolor{red}{............Add here UTIL message size for ASP-DPOP one variable per agent........}
%
Table~\ref{mesSizeSM} tabulates the results. We did not measure the size of VALUE messages since they are significantly smaller than UTIL messages. It is also worth to report that the number of UTIL messages that FRODO produces (discounting all intra-agent UTIL messages) is equal to the number of UTIL messages that ASP-DPOP produced in all power network problems in our experiments.

The results in Figure~\ref{fig:result-graphs} are consistent with those shown earlier (except for AFB)---ASP-DPOP is slower than DPOP and ODPOP when the domain size is small, but it is able to solve more problems than DPOP and ODPOP. 
We observe that, in  Figure~\ref{fig:result-graphs}(b), DPOP is consistently faster than ASP-DPOP and is able to solve the same number of problems as ASP-DPOP. It is because the highest constraint arity in 34 bus topology is 5 while it is 6 in 13 and 37 bus topologies.
Unlike in random graphs, H-DPOP is slower than the other algorithms in these problems. The reason is that the constraint arity in these problems is larger and the expensive operations on CDDs grow exponentially with the arity. We also observe that ASP-DPOP (rules) is faster than ASP-DPOP (facts). The reason is that the former is able to exploit the interdependencies between constraints to prune the search space. Additionally, ASP-DPOP (rules) can solve more problems than ASP-DPOP (facts). The reason is that the former requires less memory since it prunes a larger search space and, thus, ground fewer facts. 

The runtime results for AFB are not included in Figure~\ref{fig:result-graphs}, since AFB exceeds the timeout in solving all of the problems in this domain; this contrasts to the results shown earlier for random graphs. The main reason is that the number of variables in the power network problems is large (i.e., $|\mathcal{X}|$ are $50, 134$, and $146$ in $13, 34,$ and $37$ bus topologies, respectively in Figure~\ref{fig:result-graphs}).

Finally, both versions of ASP-DPOP require smaller messages than both H-DPOP and DPOP. The reason for the former is that the CDD data structure of H-DPOP is significantly more complex than that of ASP-DPOP. The reason for the latter is that ASP-DPOP prunes portions of the search space while DPOP did not. In addition, since ASP-DPOP does not transform DCOP problems with multiple variables per agent to corresponding ones with one variable per agent, ASP-DPOP is able to exploit significantly more the interdependencies between constraints to prune the search space. 
Moreover, we can see that the largest UTIL message sizes in ODPOP are smaller than those of ASP-DPOP, but the total UTIL message sizes in ODPOP are larger than those of ASP-DPOP. The reason is that ODPOP sends only linear size message, but it needs to send many messages on demand.

\subsection{Discussions on ASP-DPOP} 

ASP-DPOP has been shown to be competitive with other algorithms in solving DCOPs in our experimental results. The benefits of using ASP-DPOP are accomplished by having ASP as its foundation. We will illustrate here the two main advantages of making use of ASP within ASP-DPOP: 
\begin{list}{}{\topsep=1pt \parsep=0pt \itemsep=1pt}
\item[1.] The use of the highly expressive ASP language to encode constraints in DCOPs; and 
\item[2.] The ability to harness the highly optimized ASP grounder and solver to prune the search space based on hard constraints. \end{list}
In the rest of this section, we further discuss these advantages and relate them to the observations 
drawn from the experiments. These considerations are followed by a discussion of how ASP-DPOP alleviates the simplifying assumption of having a single variable per agent. 
Finally, at the end of this section, we  analyze the privacy loss of ASP-DPOP.

The first advantage of using ASP within ASP-DPOP comes from the ability to use a very expressive logic
language to encode the constraints in a DCOP.
ASP-DPOP can represent constraint utilities as an implicit function instead of explicitly enumerating them.
%, thanks to the highly expressive ASP language. 
Thus, ASP-DPOP is particularly suitable to encode  DCOPs whose constraint utilities are large and evaluated via implicit functions of the variables in their scopes (e.g., power network problems, smart grid problems). This can be seen clearly via Example~\ref{ex/expressiveness}.

\begin{example}
\label{ex/expressiveness}
Let us consider a constraint $f$ representing the power loss principle in a power network problem, where $scp(f) = \{x_{1\rightarrow2}, x_{2\rightarrow1}\}$ in which the domains of the variables $x_{1\rightarrow2}$ and $x_{2\rightarrow1}$ are $D_{1 \rightarrow 2 } = [0,2]$ and $D_{ 2 \rightarrow 1} = [-2,0]$, respectively.
%have the same domain that is $[-1, 1]$. 
Intuitively, the variable $x_{i \rightarrow j}$, where $i,j \in \{1,2\}, i \not = j$, indicates the amount of power that node $i$ transfers to (receives from) node $j$ if $x_{i \rightarrow j} \geq 0$ (resp.~$x_{i \rightarrow j} < 0$). For example, $x_{1 \rightarrow 2} = 1$ means that the node $1$ transfers $1$ unit of power to the node $2$, and $x_{2 \rightarrow 1} = -1$ specifies that the node $2$ receives $1$ unit of power from the node 1. By the power loss principle, if there is no loss, the amount of power transferred from one node is equal to the amount of power received in the other node (i.e., $x_{i \rightarrow j} + x _{j \rightarrow i} = 0$). However, if there is loss (i.e., $x_{i \rightarrow j} + x _{j \rightarrow i} \not = 0$), we assume that the cost (utility) of the power transmission is evaluated to be two times greater than the power unit loss. Formally, the utility of the constraint $f$ is given implicitly as a function: 

\begin{equation}
f(x_{1\rightarrow2}, x_{2\rightarrow1}) = 2 \times |x_{1\rightarrow2} + x_{2\rightarrow1}|.  
\end{equation}
\end{example}

\begin{figure}[htbp]
  %\small
  %\hspace{-0.4in}
  \begin{minipage}[h]{0.3\textwidth}
    \centering      
    \begin{tabular}{|c|c|c|}
      \cline{1-3}
      $x_{1\rightarrow2} $ & $x_{2\rightarrow 1}$ & Utilities \\
      \cline{1-3}
      2 & -2 & 0\\
      2 & -1 & 2\\
      2 & 0 & 4\\
      1 & -2 & 2\\
      1 & -1 & 0\\
      1 &  0 & 2\\
      0 & -2 & 4\\
      0 & -1 & 2\\
      0 &  0 & 0\\
      \cline{1-3}
    \end{tabular}
    \\ \vskip0.2cm (a) Explicit Representation as a Utility Table
  \end{minipage}
  \hspace{0.3cm}
    \begin{minipage}[h]{0.6\textwidth}
 \begin{eqnarray}
 value(x_{1 \rightarrow 2}, 0..2) \leftarrow &  & \label{x12}\\
 value(x_{2 \rightarrow 1}, -2..0) \leftarrow &  & \label{x21}\\
f(2 * |V_1 + V_2|, V_1, V_2) &\leftarrow& value(x_{1 \rightarrow 2}, V_1), \nonumber \\
					   &                & value(x_{2 \rightarrow 1}, V_2). 		 
\end{eqnarray}
  \\ \vskip0.7cm \center(b) Implicit Representation as an Answer Set Program
  %\hspace{-0.2in}
  \end{minipage}
   \hspace{-0.25in}
 % \hskip4.0cm (a)  \hspace{0.275in} (b) \hspace{0.55in} (c)
  %\end{minipage}
  \caption{Different Encodings of Constraint $f$ in Example~\ref{ex/expressiveness}} \label{expressiveness}
\end{figure}

Figure~\ref{expressiveness}(a) enumerates all the utilities of the constraint $f$ explicitly in a utility table, and Figure~\ref{expressiveness}(b) presents an answer set program that models implicitly those utilities. We can see that while the utility table has 9 rows (i.e., the domain sizes of  $x_{1\rightarrow2}$ and $x_{2\rightarrow1}$ are 3), the answer set program consists of only 2 facts and 1 rule. If the domain sizes of $x_{1\rightarrow2}$ and $x_{2\rightarrow1}$ are $1000$ (e.g., $D_{1 \rightarrow 2} = [0,999]$ and $D_{1 \rightarrow 2} = [-999,0]$), the utility table would have ${\bf 1000^2}$ rows whereas the answer set program modeling implicitly such the same utilities still has {\bf 2 facts} and {\bf 1 rule} that are similar to ones in Figure~\ref{expressiveness}(b)---i.e., it only updates the 2 facts~\eqref{x12} and~\eqref{x21} as follows: 
 \begin{eqnarray}
 value(x_{1 \rightarrow 2}, 0\,..\,999) &\leftarrow  & \\
 value(x_{2 \rightarrow 1}, -999\,..\,0) & \leftarrow &  		 
\end{eqnarray}
%$value(x_{1 \rightarrow 2}, 0\,..\,999) \leftarrow$ and $value(x_{2 \rightarrow 1}, -999\,..\,0) \leftarrow$). 
As a consequence, using ASP within ASP-DPOP to encode DCOPs makes programs much more concise and compact. The encoding is declarative and can be easily extended and modified. Moreover, such encoding does not depend on the implementation of the  algorithms (e.g., DPOP or H-DPOP), making programs more flexible and understandable.  Specifically, if we change the algorithm to solve a DCOP, the Controller Module needs to be changed following the new algorithm, yet the Specification Module remains the same. 
In contrast, using imperative programming techniques, the ``ad-hoc" implementation that is employed within each local solver might require different encodings of DCOPs for different used algorithms and different propagators for different types of constraints. For example, H-DPOP implementation needs a different data structure from DPOP implementation to deal with hard constraints.  

The second advantage of using ASP as the foundation of ASP-DPOP is to harness the highly optimized ASP grounders
and solvers  to prune the search space, especially in the handling of  hard constraints. As an example, consider the power network problem whose objective is to minimize a global cost function.\footnote{The previous formalization of ASP-DPOP focuses on maximizing the cost function; the switch to minimization problems requires trivial changes to the design of ASP-DPOP.} 
A  DCOP that encodes such type of power network problems can be formulated in terms of cost-as-utility {\em minimization} rather than reward-as-utility {\em maximization}. Thus,  in this formulation   
the value assignments resulting in an infinite utility (i.e., $+\infty$) 
should not be included in any DCOP solution; such  value assignments are redundant and 
should be pruned. Example~\ref{ex/prune} shows how effectively an ASP grounder can prune the search space. 
%We notice that the solution of a DCOP that is defined in Formula~\eqref{DCOPmaximizeproblem} is formulated in terms of reward-as-utility maximization. In contrast, the objective of power network problems is to {\em minimize} a global cost function. Therefore, the goal of DCOPs that encode power network problems is to 
%the UTIL messages in this the ASP to compute the UTIL message 

\begin{example}
\label{ex/prune}
Consider a simple power network problem, where the aggregated cost needs to be minimized.
The problem has two nodes (nodes $1$ and $2$). Let us assume that agent $a_1$ and agent $a_2$, which are the node $1$ and the node $2$, own the variables $x_{1\rightarrow 2}$ and $x_{2 \rightarrow 1}$, respectively. 
These  are described in Example~\ref{ex/expressiveness}. The problem has one constraint $f$ representing the power loss principle,
analogously to what described  in Example~\ref{ex/expressiveness}. The only difference is that we do not
allow  losses in power transfers (i.e., if there is a loss, the corresponding cost is $+\infty$). Thus, the utility (cost) of the constraint $f$ now is evaluated as:
%
%
% and has only one constraint $f$ described in Example~\ref{ex/expressiveness}. Agent $1$ and agent $2$, which are the node $1$ and the node $2$, owns the variables $x_{1\rightarrow 2}$ and $x_{2 \rightarrow 1}$, respectively. Assuming that there is not allowed to have loss in transferring power (i.e., if there is loss, the corresponding cost is $+\infty$), the utility of the constraint $f$ now is evaluated as:
\begin{equation}
 f(x_{1\rightarrow2}, x_{2\rightarrow1})= 
  \begin{cases} 
   2 \times x_{1\rightarrow2} & \text{if }  x_{1\rightarrow2} + x_{2\rightarrow1} = 0 \\
   + \infty       & \text{otherwise } 
  \end{cases}
\end{equation} 
%%%%
Figure~\ref{fig/prune} presents the ASP program\footnote{{\tt \#sup} is a special constant representing the largest possible value in the ASP language.} to compute the UTIL message sent from the agent 2 to the agent 1, assuming that the agent 1 is the root of the respective pseudo-tree (i.e., the separator set of the agent 2 is $sep_2 = \{x_{1\rightarrow 2}\}$). 
It is important to observe that, since the objective is to minimize a global cost function, the ASP in Figure~\ref{fig/prune} is produced differently from the one that is generated by $\texttt{generate\_{UTIL}\_ASP}/2$ described in~\ref{phase2ASPDPOP}. Specifically, the differences are: 
\begin{itemize}
%\emph{(i)} 
\item The predicates of the form $\texttt{table\_min}\_a_i$ are used instead of ones of the form $\texttt{table\_max}\_a_i$; 
%\emph{(ii)}
%~\eqref{maxrulemin} computes the {\tt minimal} utilities for each value combination of variables in the separator list (i.e., 
\item U = \texttt{\#min\{\dots\}} rather than U = \texttt{\#max\{\dots\}} in~\eqref{maxrule} (i.e., computing the {\tt minimal} utilities for each value combination of variables in the separator list); and 
%\emph{(iii)} 
\item The conditions ${V_{r_{1}} \,!\!= \,\texttt{\#sup}, \, \cdots \, , V_{r_{k'}} \,!\!= \,\texttt{\#sup}}$ are used instead of ${V_{r_{1}} \,!\!= \,\texttt{\#inf}, \cdots, V_{r_{k'}} \,!\!= \,\texttt{\#inf}}$ in~\eqref{summingrule} (i.e., atoms of the form $\texttt{table\_row}\_a_i(u, v_{s_1}, \dots, v_{s_k})$ where ${u = \texttt{\#sup}}$ (i.e., the respective utilities are $+\infty$) are not produced). 
\end{itemize}
As a consequence, the encoded UTIL messages consist of facts of the forms $\texttt{table\_min}\_a_i$ (instead of $\texttt{table\_max}\_a_i$) and $\texttt{table}\_\texttt{info}$.  

\begin{figure}[htbp]
  %\small
  %\hspace{-0.4in}
  %\begin{minipage}[h]{0.29\textwidth}
    %\centering      
    \begin{eqnarray}
 f(4, 2, -2) & \leftarrow  & \label{fig/prune1}\\
  f(2, 1, -1) & \leftarrow & \label{fig/prune2}\\
   f(0, 0, 0) & \leftarrow &\label{fig/prune3} \\
    f(\texttt{\#sup}, 0, -2) & \leftarrow &\label{fig/prune4} \\ 
      f(\texttt{\#sup}, 1, -2) & \leftarrow & \label{fig/prune5}\\
        f(\texttt{\#sup}, 0, -1) & \leftarrow & \label{fig/prune6}\\
          f(\texttt{\#sup}, 2, -1) & \leftarrow & \label{fig/prune7}\\
            f(\texttt{\#sup}, 1, 0) & \leftarrow & \label{fig/prune8}\\
              f(\texttt{\#sup}, 2, 0) & \leftarrow &\label{fig/prune9} \\
              value(x_{1 \rightarrow 2}, 0\,..\,2) &\leftarrow& \label{fig/prune10}\\
 value(x_{2 \rightarrow 1}, -2\,..\,0) &\leftarrow& \label{fig/prune11}\\
\texttt{table\_row}\_a_2(U, X_{1\rightarrow 2})  & \leftarrow & f(V_0,X_{1\rightarrow 2}, X_{2 \rightarrow 1}), \nonumber\\
 &	& V_0 \,\,!\!= \, \texttt{\#sup},\nonumber\\
 &	& U = V_0.\label{fig/prune12}\\
 \texttt{table\_min}\_a_2(U,X_{1\rightarrow 2}) & \leftarrow & value(x_{1 \rightarrow 2},X_{1\rightarrow 2}),\label{fig/prune13} \\
 && \texttt{table\_row}\_a_2(\_, X_{1\rightarrow2}),\nonumber \\
 &&  U = \texttt{\#min}\{V : \texttt{table\_row}\_a_2(V, X_{1\rightarrow 2}) \}. \nonumber
\end{eqnarray}
  \caption{ASP to Compute UTIL Message in Example~\ref{ex/prune}} \label{fig/prune}
\end{figure}
The 9 facts~\eqref{fig/prune1}-\eqref{fig/prune9} enumerate all utilities of the constraint $f$ in which the 6 facts~\eqref{fig/prune4}-\eqref{fig/prune9} are redundant since their corresponding utilities are $+\infty$. With DPOP, the total size of the search space for computing its UTIL message is 9, which corresponds to the 9 facts~\eqref{fig/prune1}-\eqref{fig/prune9}, since DPOP does not do pruning. However, with ASP-DPOP, the corresponding total size of the search space is 3 since \textsc{gringo}, due to the condition $V_0 \,!\!= \#sup$, grounds the rule~\eqref{fig/prune12} into only 3 facts:
     \begin{eqnarray}
 \texttt{table\_row}\_a_2(4, 2) & \leftarrow & \\
  \texttt{table\_row}\_a_2(2, 1) & \leftarrow & \\
   \texttt{table\_row}\_a_2(0, 0) & \leftarrow &
   \end{eqnarray}   
and an ASP solver will use these facts to generate the predicates $\texttt{table\_min}\_a_2$ based on the rule~\eqref{fig/prune13}. The different between the sizes of the search spaces of ASP-DPOP and DPOP are greater as the domain sizes of variables increase. For example, if the domain sizes of $x_{1\rightarrow2}$ and $x_{2\rightarrow1}$ are $1000$, the total search space of DPOP is ${\bf 1000^2}$ while the total search space of ASP-DPOP is just ${\bf 1000}$.    
\end{example}
As a consequence, and as clear from  our experiments, ASP-DPOP is able to prune a significant portion of the search space, thanks to hard constraints, whereas DPOP does not. Moreover, as seen in Example~\ref{ex/prune}, the size of the search space pruned increases as the complexity of the instance grows (i.e., increasing $|\mathcal{X}|$, $|D_i|$, $p_1$, or $p_2$). Thus, ASP-DPOP is able to solve more problems than DPOP and is faster than DPOP when the problem becomes more complex. 

The pruning power of the ASP grounders and solvers enables also the generation of smaller
UTIL messages in ASP-DPOP than those generated by DPOP.
Let us consider a UTIL message $M$ sent from an agent $a_i$ to an agent $a_j$. A value assignment of variables in $sep_i$ is \emph{admissible} if its corresponding optimal sum of utilities in the subtree rooted at $a_i$ is different than $-\infty$.\footnote{Or $+\infty$ for minimization problems.} 
In DPOP, $M$ consists of a utility, which is optimal, for each value assignment of variables in $sep_i$ (including both admissible and inadmissible value assignments). However, $M$ in ASP-DPOP consists of a utility, which is optimal and different from $- \infty$, for only each admissible value assignment of variables in $sep_i$. This is because such inadmissible value assignments will not be included in any DCOP solution (i.e., otherwise the global cost is $-\infty$).

We will not discuss in-depth technically what algorithms and computations are implemented within 
modern ASP grounders to optimize  the grounding process, since they are beyond  the scope of this paper. Readers who are interested in such algorithms and computations can find further information in~\cite{claspbook,KaufmannLPS16}. It is important to notice that such computations (e.g., for removing unnecessary rules and for omitting rules whose bodies cannot be satisfied) consume memory, take time, and are not trivial. Therefore, for DCOP problems with low constraint tightness, the runtime and memory that are used for those computations will dominate the runtime and memory that are saved from pruning the search space (e.g., see the row $p_2 = 0.3$ in Table~\ref{randomGraphResults}). This also explains why ASP-DPOP is slower than DPOP when the problem becomes less complex (i.e.,~decreasing $|\mathcal{X}|$, $|D_i|$, $p_1$, or $p_2$). Specifically, from the trend while decreasing $p_2$ in Table~\ref{randomGraphResults}, ASP-DPOP will not be able to compete with DPOP for cases where $p_2 \leq 0.3$.

\smallskip
The fact that ASP-DPOP solves DCOP problems with multiple variables per agent directly, without transforming them to 
problems with one variable per agent, deserves some discussions. It is easy to see that ASP-DPOP agents need to consider more variables and thus more constraints. As a result, there are more interdependencies between constraints for ASP-DPOP to exploit. If the constraint tightness is high, the size of the search space pruned increases significantly. This can be seen in our power network experiment. On the other hand, dealing with more variables and more constraints also increases the search space. Therefore, if the constraint tightness does not provide sufficient pruning, the portion of the search space pruned does not properly balance the increase in the size of  the search space; this may lead  ASP-DPOP to require more  memory than DPOP in solving such problems. This situation can be seen in experimental results on random graphs (i.e., decreasing $p_2$). Solving DCOPs with multiple variables per agent without transforming them to problems with a single variable per agent was also investigated in~\cite{DBLP:conf/aaai/Fioretto0P16}.

\smallskip

Maintaining privacy is a fundamental motivation for the use of DCOP. A detailed analysis of privacy loss in DCOP for some existing DCOP algorithms, including DPOP, can be found in~\cite{PrivacyLoss:GreenstadtPT06}. For ASP-DPOP, it is not difficult to realize that DPOP and ASP-DPOP have the same privacy loss. The reason is that the content of UTIL messages (resp.~VALUE messages) in DPOP---that are given under the tabular form (which are similar to those given under multi-dimensional matrix form)---is identical to the content of the UTIL messages (resp.~VALUE messages) in ASP-DPOP---that are given in facts form. In fact, anything that is inferred from the fact form (in UTIL and VALUE messages of ASP-DPOP) can be inferred from the tabular form (in the respective messages of DPOP), and vice versa anything is inferred from tabular form can be inferred from fact form as well.

%{\small \begin{equation}
%\begin{array}{l}
%value(x_{1 \rightarrow 2}, 0\,..\,2) \leftarrow \\
% value(x_{2 \rightarrow 1}, -2\,..\,0) \leftarrow \\
%table\_row\_a_i(U, X_{s_1}, \dots, X_{s_k})   \leftarrow  f_{r_1}(V_{r_1}, X_{r_{1_1}},\dots, X_{r_{1_w}}),  \label{summingrule}\\
%	\hspace{2.5cm} \cdots \\
%		\hspace{2.5cm} f_{r_{k'}}(V_{r_{k'}}, X_{r_{k'_1}},\dots, X_{r_{k'_w}}),  \\
%		\hspace{2.5cm} V_{r_{1}} \,!\!= \,\#\mathit{inf}, \cdots, V_{r_{k'}} \,!\!= \,\#\mathit{inf},  \\
%		\hspace{2.5cm} table\_max\_a_{c_1}(U_{c_1}, X_{c_{1_1}},\dots, X_{c_{1_{z}}}),  \\
%		\hspace{2.5cm}\cdots \\
%		\hspace{2.5cm} table\_max\_a_{c_l}(U_{c_l}, X_{c_{l_1}},\dots, X_{c_{l_{z}}}),  \\
%		\hspace{2.5cm} U = V_{r_1} + \dots + V_{r_{k'}} + U_{c_1} + \dots + U_{c_l}. 
%\end{array}
%\end{equation}}

\section{Related Work}\label{sec:relatedwork}
The use of declarative programs, specifically logic programs, for reasoning in multi-agent domains is not new. Starting with some seminal papers~\cite{kow}, various authors have explored the use of several different flavors of logic programming, such as normal logic programs and abductive logic programs, to address cooperation between agents~\cite{kakas,toni,gelfond,laima}. Some proposals have also explored the combination between constraint programming, logic programming, and formalization of multi-agent domains~\cite{newtplp,vlahavas02,our10,DovierFP10b}. Logic programming has been used in modeling multi-agent scenarios involving agents knowledge about other's knowledge~\cite{BaralGPS10a}, computing models in the logics of knowledge~\cite{BaralGPS10b}, multi-agent planning~\cite{SonPS09} and formalizing negotiation~\cite{SakamaSP11}. ASP-DPOP is similar to the last two applications in that (\emph{i})~it can be viewed as a collection of agent programs; (\emph{ii})~it computes solutions using an ASP solver; and (\emph{iii})~it uses message passing for agent communication. A key difference is that ASP-DPOP solves multi-agent problems formulated as constraint-based models, while the other applications solve problems formulated as decision-theoretic and game-theoretic models. 

Researchers have also developed a framework that integrates declarative techniques with off-the-shelf constraint solvers to partition large constraint optimization problems into smaller subproblems and solve them in parallel~\cite{liu:12}. In contrast, DCOPs are problems that are naturally distributed and cannot be arbitrarily partitioned. 

ASP-DPOP is able to exploit problem structure by propagating hard constraints and using them to prune the search space efficiently. This reduces the memory requirement of the algorithm and improves the scalability of the system. Existing DCOP algorithms that also propagate hard and soft constraints to prune the search space include H-DPOP that propagates exclusively hard constraints~\cite{kumar:08}, BrC-DPOP that propagates branch consistency~\cite{fioretto:14b}, and variants of BnB-ADOPT~\cite{yeoh:10,gutierrez:12,gutierrez:11} that maintains soft-arc consistency~\cite{bessiere:12,gutierrez:12b,gutierrez:13}. A key difference is that these algorithms require algorithm developers to explicitly implement the ability to reason about the hard and soft constraints and propagate them efficiently. In contrast, ASP-DPOP capitalizes on general purpose ASP solvers to do so.

\section{Conclusions and Future Work}
\label{sec:conclusion}
In this paper, we explored the benefits of using  logic programming techniques as a platform to provide complete solutions of  DCOPs. Our proposed logic programming-based algorithm, ASP-DPOP, is able to solve more problems and solve them faster than DPOP, its imperative programming counterpart. Aside from the ease of modeling, each agent in ASP-DPOP also capitalizes on highly efficient ASP solvers to automatically exploit problem structure (e.g.,~prune the search space using hard constraints). Experimental results show that ASP-DPOP is faster and can scale to larger problems than a version of H-DPOP (i.e., PH-DPOP) that 
%limits its knowledge to the same amount as ASP-DPOP. 
maintains the level of privacy similar to that of ASP-DPOP. 
These results highlight the strengths of a declarative programming paradigm, where explicit model-specific pruning rules are not necessary. In conclusion, we believe that this work contributes to the DCOP community, where we show that the declarative programming paradigm is a promising new direction of research for DCOP researchers, as well as the ASP community, where we demonstrate the applicability of ASP to solve a wide array of multi-agent problems that can be modeled as DCOPs. 

\smallskip
In future work, we will explore two directions to deepen the use of logic programming in solving DCOPs:  

\begin{itemize}
\item \emph{Logic programming under different semantics}: We will consider the advantages of other logic programming paradigms  in solving DCOPs. One possibility  is to use Constraint Logic Programming (CLP)~\cite{JAFFAR1994503} instead of ASP. Since CLP is a merger of two declarative paradigms---constraint solving and logic programming---it seems well-suited to solve DCOPs. A preliminary investigation~\cite{LePSY14} has shown that this technique can dramatically decrease run time.

\item \emph{Different representation of messages}: We observe that the messages used in DPOP, and even ASP-DPOP, are represented explicitly---i.e., they are multi-dimensional matrices in DPOP and facts in ASP-DPOP. One of the reasons for this is that each agent performs the inference process for its subtree, enumerates explicitly all the results, and sends them to other agents. We are interested in investigating  algorithms where agents coordinate with others via messages that are logic programs (e.g., ASP or CLP clauses). Specifically, in such an algorithm, each agent does the inference partially, for some specific interesting value assignment, and without enumerating  all results. The rest of the computation will be encoded as logic programs and  passed to other agents. Some agent who performs the complete inference process will propagate the search space based on the rules in the received messages as logic programs. We believe this will reduce the search space and  the run time.  
\end{itemize}
  
\section*{Acknowledgment}
This research is partially supported by NSF  grants  HRD-1345232 and DGE-0947465. The views and conclusions contained in this document are those of the authors and should not be interpreted as representing the official policies, either expressed or implied, of the sponsoring organizations, agencies, or the U.S. government. We would  like to thank Akshat Kumar for sharing with us his implementation of H-DPOP.

\bibliography{../bib/dcopAcronym,../bib/bibfile,../bib/bib2010,../bib/smartgridAcronym}

\end{document}